\newcommand{\nat}{\mathbb{N}}
\newcommand{\integ}{\mathbb{Z}}
\newcommand{\rat}{\mathbb{Q}}
\newcommand{\cur}{\mathit{cur}}
\newcommand{\attr}{\mathsf{Attr}}
\newcommand{\AttrOne}[1]{\attr_1(#1)}
\newcommand{\AttrTwo}[1]{\attr_2(#1)}
\newcommand{\Attri}[1]{\attr_i(#1)}
\newcommand{\set}[1]{\{#1\}}
\newcommand{\win}[1]{\langle \! \langle #1 \rangle\! \rangle}
\newcommand{\integers}{\mathbb{Z}}
\newcommand{\MeanPayoff}{{\sf MeanPayoff}}
\newcommand{\EL}{{\sf EL}}
\newcommand{\Plays}{{\sf Plays}}
\newcommand{\Nat}{\mathbb{N}}
\newcommand{\MeanPayoffInfSup}{{\sf MeanPayoffInfSup}}
\newcommand{\MeanPayoffSup}{{\sf MeanPayoffSup}}
\newcommand{\MPsup}{\overline{{\sf MP}}}
\newcommand{\MeanPayoffInf}{{\sf MeanPayoffInf}}
\newcommand{\MPinf}{\underline{{\sf MP}}}
\newcommand{\Last}{{\sf Last}}
\newcommand{\outcome}{\mathsf{outcome}}
\def\abs#1{\ensuremath{\lvert #1\rvert}}
\newcommand{\tuple}[1]{\langle #1 \rangle}
\newcommand{\PosEnergy}{{\sf PosEnergy}}
\renewcommand{\qed}{\hfill \ensuremath{\Box}}
\def\red#1{\textcolor{red}{#1}}
\def\mynote#1{{\sf $\clubsuit$ #1 $\clubsuit$}}
\def\mynote#1{{\sf \red{$\clubsuit$} #1 \red{$\clubsuit$}}}
\newcommand{\SetCycle}{\mathit{SetCycle}}
\newcommand{\wh}{\widehat}
\long\def\symbolfootnote[#1]#2{\begingroup%
\def\thefootnote{\fnsymbol{footnote}}\footnote[#1]{#2}\endgroup}
\title{The Complexity of Multi-Mean-Payoff and Multi-Energy Games\thanks{Preliminary 
versions appeared in the \emph{Proceedings of the IARCS Annual Conference on Foundations of Software Technology and Theoretical Computer Science} (FSTTCS), Schloss Dagstuhl - Leibniz-Zentrum fuer Informatik, LIPIcs, 2010, pp. 505-516, and in the \emph{Proceedings of the 14th International Conference on Foundations of Software Science and Computational Structures} (FoSSaCS), Lecture Notes in Computer Science 6604, Springer, 2011, pp. 275-289.}${}^{,}$\thanks{Corresponding author: Laurent Doyen;
address: LSV - ENS Cachan, 61 av. du President Wilson, 94235 Cachan Cedex, France; email: doyen@lsv.ens-cachan.fr.}
}
\author{
Yaron Velner\inst{1} \and
Krishnendu Chatterjee\inst{2} \and Laurent Doyen\inst{3} \and Thomas A. Henzinger\inst{2} 
\and Alexander Rabinovich\inst{1}
\and Jean-Fran\c{c}ois Raskin\inst{4}
}
\institute{
The Blavatnik School of Computer Science, Tel Aviv University, Israel \\
\and
IST Austria (Institute of Science and Technology Austria) \\
\and LSV, ENS Cachan \& CNRS, France \\
\and D\'epartement d'Informatique, Universit\'e Libre de Bruxelles (U.L.B.)
}
\begin{document}

\maketitle

\begin{abstract}
In mean-payoff games, the objective of the protagonist is to ensure that the 
limit average of an infinite sequence of numeric weights is nonnegative. 
In energy games, the objective is to ensure that the running sum of weights 
is always nonnegative. Multi-mean-payoff and multi-energy games replace 
individual weights by tuples, and the limit average (resp.\ running sum) 
of each coordinate must be (resp.\ remain) nonnegative. These games have 
applications in the synthesis of resource-bounded processes with multiple 
resources.

We prove the finite-memory determinacy of multi-energy games and show 
the inter-reducibility of multi-mean-payoff and multi-energy games for 
finite-memory strategies. We also improve the computational complexity 
for solving both classes of games with finite-memory strategies: while 
the previously best known upper bound was EXPSPACE, and no lower 
bound was known, we give an optimal coNP-complete bound. 
For memoryless strategies, we show that the problem of deciding 
the existence of a winning strategy for the protagonist is NP-complete.
Finally we present the first solution of multi-mean-payoff games with 
infinite-memory strategies.
We show that multi-mean-payoff games with mean-payoff-sup objectives can 
be decided in NP $\cap$ coNP, whereas multi-mean-payoff games with
mean-payoff-inf objectives are coNP-complete. 
\end{abstract}

\noindent{\bf Keywords:} {\em Games on graphs; mean-payoff objectives;
energy objectives; multi-dimensional objectives.}


\section{Introduction}

\noindent{\em Graph games and multi-objectives.} 
Two-player games on graphs are central in many applications
of computer science.  For example, in the synthesis problem, implementations of
reactive systems are obtained from winning strategies in games with a qualitative 
objective formalized by an $\omega$-regular specification~\cite{RW87,PnueliR89,AbadiLW89}.
In these applications, the games have a qualitative (boolean) objective 
that determines which player wins. 
On the other hand, games with quantitative objective which are natural models 
in economics (where players have to optimize a real-valued payoff) 
have also been studied in the context of automated design~\cite{Sha53,Condon92,ZP96}.
In the recent past, there has been considerable interest in the design of 
reactive systems that work in resource-constrained environments 
(such as embedded systems). 
The specifications for such reactive systems are quantitative, and 
give rise to quantitative games. 
In most system design problems, there is no unique objective to be optimized, 
but multiple, potentially conflicting objectives.
For example, in designing a computer system, one is interested not
only in minimizing the average response time but also the average power 
consumption.
In this work we study such multi-objective generalizations of the 
two most widely used quantitative objectives in games, namely,
\emph{mean-payoff} and 
\emph{energy} objectives~\cite{EM79,ZP96,CAHS03,BFLMS08}.

\smallskip\noindent{\em Multi-mean-payoff games.} 
A {\em multi-mean-payoff game} is played on a finite weighted game graph by two players. 
The vertices of the game graph are partitioned into positions that belong to 
player~$1$ and positions that belong to player~$2$. 
Edges of the graphs are labeled with $k$-dimensional vectors $w$ of integer values, i.e., 
$w \in \mathbb{Z}^k$. The game is played as follows. A pebble is placed on a 
designated initial vertex of the game graph. The game is played in rounds in which
the player owning the position where the pebble lies
moves the pebble to an adjacent position of the graph using an outgoing edge. 
The game is played for an infinite number of rounds, resulting in an infinite 
path through the graph, called a play. The value associated to a play is the 
mean value in each dimension of the vectors of weights labeling the edges of the play.
Accordingly, the winning condition for player~1
is defined by a vector of rational values $v \in \mathbb{Q}^k$ that specifies a 
threshold for each dimension. A play is winning for player~$1$ if its vector 
of mean values is at least~$v$. All other plays are winning for player~$2$,
thus the game is zero-sum. We are interested in the problem of deciding
the existence of a winning strategy for player~$1$ in multi-mean-payoff games. 
In general infinite memory may be required to win
multi-mean-payoff games, but in many practical applications such as the 
synthesis of reactive systems with multiple resource constraints, the 
multi-mean-payoff games with finite memory is the relevant problem.
Also they provide the framework for the synthesis of specifications 
defined by mean-payoff conditions~\cite{AlurDMW09,Concur10}, and the synthesis question for such 
specifications under \emph{regular (ultimately periodic)} 
words correspond to multi-mean-payoff games with finite-memory strategies.
Hence we study multi-mean-payoff games both for general strategies as well as 
finite-memory strategies.

\smallskip\noindent{\em Multi-energy games.} 
In multi-energy games, the winning condition for player~1 requires that,
given an initial credit $v_0 \in \nat^k$, the sum of $v_0$ and all the 
vectors labeling edges up to position $i$ in the play is nonnegative, for all 
$i \in \nat$. The decision problem for multi-energy games
asks whether there exists an initial credit $v_0$ and a strategy for player~1 
to maintain the energy nonnegative in all dimensions against all strategies 
of player~2.

\smallskip\noindent{\em Contributions.}
In this paper, we study the strategy complexity and computational complexity 
of solving multi-mean-payoff and multi-energy games. 
The contributions are as follows.

First, we show that multi-energy and multi-mean-payoff games 
are determined when played with finite-memory strategies.
When considering finite-memory strategies, those games correspond 
to the synthesis question with ultimately periodic words, and they enjoy pleasant 
mathematical properties like existence of the limit of the mean value of the weights. 
We also establish that multi-energy and multi-mean-payoff games are not 
determined for memoryless strategies. 
Additionally, we show for multi-energy games determinacy under finite-memory coincides 
with determinacy under arbitrary strategies, and each player has a
winning strategy if and only if he has a finite-memory winning strategy.
In contrast, we show for multi-mean-payoff games that determinacy under finite-memory 
and determinacy under arbitrary strategies do not coincide. 
Moreover, for multi-mean-payoff games when the strategies for player~1 is 
restricted to finite-memory strategies, the winning set for player~1 
remains unchanged irrespective of whether we consider finite-memory 
or infinite-memory counter strategies for player~2.


Second, we show that under the hypothesis that both players play either 
finite-memory or both play memoryless strategies, the decision problems 
for multi-mean-payoff games and multi-energy games are equivalent.

Third, we study the computational complexity of the decision problems 
for multi-mean-payoff games and multi-energy games, both for finite-memory strategies 
and the special case of memoryless strategies. 
Our complexity results can be summarized as follows.
(A)~For finite-memory strategies, we provide a nondeterministic polynomial-time 
algorithm 
for deciding negative instances of the problems\footnote{Negative  
instances are those where player~1 is losing, and by determinacy under finite-memory where player~2 is winning.}.
Thus we show that the decision problems are in coNP. 
This significantly improves the complexity as compared to the EXPSPACE 
algorithm that can be obtained by reduction to {\sc Vass} (vector addition systems with states)~\cite{BJK10}.
Furthermore, we establish a coNP lower bound for these problems by reduction 
from the complement of the  3SAT problem, hence showing that the problem
is coNP-complete.
(B)~For the case of memoryless strategies, as the games are not determined, we 
consider the problem of determining if player 1 has a memoryless 
winning strategy. First, we show that the problem of determining if player 1 
has a memoryless winning strategy is in NP, 
and then show that the problem is NP-hard even when the weights are restricted 
to $\{-1,0,1\}$ and in dimension~$2$.

Finally, we study the computational complexity of multi-mean-payoff games for 
infinite-memory strategies.
Our complexity results are summarized as follows.
(A)~We show that multi-mean-payoff games with mean-payoff-sup objectives
can be decided in NP $\cap$ coNP (in the same complexity as for games with
single mean-payoff objectives). Moreover, we also show that if mean-payoff 
games with single mean-payoff objective can be solved in polynomial time,
then multi-mean-payoff games with mean-payoff-sup objectives can also be
solved in polynomial time.
(B)~Multi-mean-payoff games with mean-payoff-inf objectives are 
coNP-complete.
(C)~Finally, we show that multi-mean-payoff games with combination of 
mean-payoff-sup and mean-payoff-inf objectives are also coNP-complete.

In summary, our results establish optimal computational complexity results
for multi-mean-payoff and multi-energy games under finite-memory, memoryless
and infinite-memory strategies.

\smallskip\noindent{\em Related works.}
Mean-payoff games, which are the one-dimension version of our multi-mean-payoff games, 
have been extensively studied starting with the works of 
Ehrenfeucht and Mycielski in~\cite{EM79} where they prove memoryless determinacy 
for these games. Because of memoryless determinacy, it is easy to show that 
the decision problem for mean-payoff games lies in NP~$\cap$~coNP, 
but despite large research efforts, no polynomial time algorithm is known for 
that problem. A pseudo-polynomial time algorithm has been proposed by Zwick 
and Paterson in~\cite{ZP96}, and improved in~\cite{BCDGR09}. The one-dimension
special case of multi-energy games have been introduced in~\cite{CAHS03}
and further studied in~\cite{BFLMS08} where log-space equivalence with 
classical mean-payoff games is established.




Multi-energy games can be viewed as games played on {\sc Vass} (vector addition systems with states)
where the objective
is to avoid unbounded decreasing of the counters. A solution to such games 
on {\sc Vass} is provided in~\cite{BJK10} (see in particular Lemma 3.4 in~\cite{BJK10}) with a PSPACE 
algorithm when the weights are $\{-1,0,1\}$, leading to an EXPSPACE algorithm when the
weights are arbitrary integers.
We drastically improve the EXPSPACE upper-bound by providing a coNP
algorithm for the problem, and we also provide a coNP lower bound even when 
the weights are restricted to $\{-1,0,1\}$.
Finally the work in~\cite{FJLS11} considers multi-dimension energy games with fixed
initial credit, as well as variants of energy games with upper and lower energy bounds.

\section{Definitions}\label{sec:def}

\noindent{\bf Well quasi-orders.} 
A relation $\preceq$ over a set~$D$ is a {\em well quasi-order} if the following 
conditions hold: $(a)$~$\preceq$ is transitive and reflexive,
and $(b)$~for all $f : \nat \rightarrow D$, there exist
$i_1,i_2 \in \nat$ such that $i_1 < i_2$ and $f(i_1) \preceq f(i_2)$.
It is known that $(\nat^k,\leq)$ is a well quasi-order and that the
Cartesian product of two well quasi-ordered sets is a well quasi-ordered set~\cite{Dickson}.


\smallskip\noindent{\bf Multi-weighted two-player game structures.}
A {\em multi-weighted two-player game structure} (or simply a \emph{game}) is a tuple 
$G=(S_1,S_2,E,w)$ where $S_1 \cap S_2 = \emptyset$, 
and $S_i$ ($i = 1,2$) is the finite set of player-$i$ states (we denote by $S = S_1 \cup S_2$ the state space), 
$E \subseteq S \times S$ is 
the set of edges such that for all $s \in S$, 
there exists $s' \in S$ such that $(s,s') \in E$, 
and $w : E \to \integ^k$ is the multi-weight labeling function. 
The parameter $k \in \nat$ is the {\em dimension} of the multi-weights.
The game $G$ is a {\em one-player} game if $S_2 = \emptyset$.
%
The subgraph of $G$ induced by a set $T \subseteq S$ is 
$G \upharpoonright T = (S_1 \cap T,S_2 \cap T, E \cap (T \times T),w)$.
Note that $G \upharpoonright T$ is a game structure if for all $s \in T$, 
there exists $s' \in T$ such that $(s,s') \in E$.

A {\em play} in $G$ from an initial state $s_{{\sf init}} \in S$ is an infinite sequence
$\pi=s_0 s_1 \dots s_n \dots$ of states 
such that $(i)$ $s_0=s_{{\sf init}}$, and $(ii)$ $(s_i,s_{i+1}) \in E$ for all $i \geq 0$. 
The {\em prefix} of length $n$ of $\pi$ is the finite sequence 
$\pi(n)=s_0 s_1 \dots s_n$, its last element $s_n$ is denoted~$\Last(\pi(n))$
and its length~$\abs{\pi(n)}$. 
The set of all plays in $G$ is denoted $\Plays(G)$.


The {\em energy level vector} of a play prefix $\rho=s_0 s_1 \dots s_n$ is 
$\EL(\rho)=\sum_{i=0}^{i=n-1} w(s_i,s_{i+1})$, and the {\em mean-payoff vectors} 
of a play $\pi=s_0 s_1 \dots s_n \dots$ are defined as follows (in dimension $1 \leq j \leq k$):
$\MPsup(\pi)_j = \limsup_{n\to \infty} \frac{1}{n} \cdot \EL(\pi(n))_j$, and 
$\MPinf(\pi)_j = \liminf_{n\to \infty} \frac{1}{n} \cdot \EL(\pi(n))_j$.

\smallskip\noindent{\bf Strategies.}
A \emph{strategy} of player~$i$ ($i \in \{1,2\}$) in~$G$ is a function 
$\lambda_i : S^* \cdot S_i \to S$ such that $(s,\lambda_i(\rho \cdot s)) \in E$
for all $\rho \in S^*$ and all $s \in S_i$. 
A play $\pi=s_0 s_1 \dots \in \Plays(G)$ is \emph{consistent} with a strategy $\lambda_i$ of player~$i$ if 
$s_{j+1}=\lambda_i(s_0 s_1 \dots s_j)$ for all $j \geq 0$ such that $s_j \in S_i$.
The {\em outcome} from a state $s_{{\sf init}}$ of a pair of strategies, 
$\lambda_1$ for player~1 and $\lambda_2$ for player~2, 
is the (unique) play from $s_{{\sf init}}$ that is consistent with both 
$\lambda_1$ and $\lambda_2$. We denote $\outcome_G(s_{{\sf init}},\lambda_1,\lambda_2)$ this play. 
We denote by $T_{\lambda_i(s_{{\sf init}})}$ the \emph{strategy tree} obtained as the unfolding  
of the game $G$ from $s_{{\sf init}}$ when strategy $\lambda_i$ is used. 
The nodes of this tree are all prefixes of the plays from $s_{{\sf init}}$ 
that are consistent with the strategy $\lambda_i$ of player~$i$.



A strategy $\lambda_i$ for player~$i$ uses {\em finite-memory} if it can be encoded 
by a deterministic Moore machine $(M,m_0,\alpha_u,\alpha_n)$ where $M$ is a finite 
set of states (the memory of the strategy), $m_0 \in M$ is the initial memory state, 
$\alpha_u: M \times S \to M$ is an update function, 
and $\alpha_n : M \times S_{i} \to S$ is the next-action function. 
If the game is in a player-$i$ state $s \in S_i$ and $m \in M$ is the current memory value,
then the strategy chooses $s' = \alpha_n(m,s)$ as the next state and the memory 
is updated to $\alpha_u(m,s)$. Formally, $\tuple{M, m_0, \alpha_u, \alpha_n}$
defines the strategy $\lambda$ such that $\lambda(\rho \cdot s) = \alpha_n(\hat{\alpha}_u(m_0, \rho), s)$
for all $\rho \in S^*$ and $s \in S_i$, where $\hat{\alpha}_u$ extends $\alpha_u$ to sequences
of states as usual. The strategy is \emph{memoryless} if $\abs{M} = 1$.
Given an initial state $s_{{\sf init}}$ and a finite-memory strategy $\lambda_i$ of player~$i$, 
let $G_{\lambda_i(s_{{\sf init}})}$ be the graph obtained as the product
of $G$ with the Moore machine defining $\lambda_i$, with initial vertex $\tuple{m_0,s_{{\sf init}}}$ 
and where $(\tuple{m,s},\tuple{m',s'})$ is a transition in the graph if 
$m' = \alpha_u(m,s)$, and either $s \in S_i$ and $s'=\alpha_n(m,s)$, or $s \in S_{3-i}$ and $(s,s') \in E$.


\smallskip\noindent{\bf Objectives.}
An {\em objective} for player~$1$ in $G$ is a set of plays $\varphi \subseteq \Plays(G)$. 
Given a game $G$, an initial state $s_0$, and an objective $\varphi$, 
we say that a strategy $\lambda_1$ is {\em winning} for player~1 from $s_0$
if for all plays $\pi \in \Plays(G)$ from $s_0$ that are consistent with $\lambda_1$, 
we have that $\pi \in \varphi$; and we say that a strategy $\lambda_2$ is {\em winning} 
for player~2 from $s_0$ if for all plays in $\pi \in \Plays(G)$ from $s_0$ 
that are consistent with $\lambda_2$, we have that $\pi \not\in \varphi$. 
We denote by $\win{1} \varphi$
the set of states $s_0$ such that there exists a winning strategy for player~$1$ 
from $s_0$, and by $\win{2} \lnot \varphi$ the set of 
states $s_0$ such that there exists a winning strategy for player~$2$ from $s_0$.
Note that $\win{1} \varphi \cap \win{2} \lnot \varphi = \emptyset$ by definition.
We consider the following objectives:

\begin{itemize}
   \item {\em Energy objectives}. Given an initial energy vector 
   $v_0 \in \nat^k$, the {\em multi-energy objective} 
   $\PosEnergy_G(v_0)=\{ \pi \in \Plays(G) \mid \forall n \geq 0 : v_0 + \EL(\pi(n)) \geq \{0\}^k \}$ 
   requires that the energy level in all dimensions remain always nonnegative.
   

   \item {\em Mean-payoff objectives}. Given two sets $I,J \subseteq \{1,\dots,k\}$,
   the {\em multi-mean-payoff objective} $\MeanPayoffInfSup_G(I,J) = 
   \{ \pi \in \Plays(G) \mid \forall i\in I: \MPinf(\pi)_i \geq 0 \,\land\, \forall j\in J: \MPsup(\pi)_j \geq 0 \}$ 
   requires for all dimensions in $I$ the mean-payoff-inf value be nonnegative, 
   and for all dimensions in $J$ the mean-payoff-sup value be nonnegative.

 \end{itemize}

When the game $G$ is clear from the context we omit the subscript in objective names.
Note that arbitrary thresholds $\frac{a}{b} \in \rat$ can be considered in the multi-mean-payoff objectives
because the mean-payoff value computed according to the weight function $w$ is greater than $\frac{a}{b}$ 
if and only if the mean-payoff value according to the weight function $b\cdot w - a$ is greater than $0$ 
where $(b \cdot w - a)(e) = b \cdot w(e) - a$ for all $e \in E$.  
For the special case of $I = \emptyset$ and $J = \{1,\dots,k\}$, 
we denote by $\MeanPayoffSup = \MeanPayoffInfSup(\emptyset,J)$ the conjunction
of all mean-payoff-sup objectives, and for $I = \{1,\dots,k\}$ and $J = \emptyset$ we 
denote by $\MeanPayoffInf = \MeanPayoffInfSup(I,\emptyset)$ 
the conjunction of all mean-payoff-inf objectives. We denote by 
$\MeanPayoffSup_i = \MeanPayoffInfSup(\emptyset,\{i\})$ the single 
mean-payoff-sup objective in dimension $1 \leq i \leq k$.

\smallskip\noindent{\bf Decision problems.}
We consider the following decision problems:
\begin{itemize}

  \item The {\em unknown initial credit problem} asks, given a
  multi-weighted two-player game structure $G$, and an initial state $s_0$, 
  to decide whether there exist an initial credit vector $v_0 \in \nat^k$ 
  and a winning strategy $\lambda_1$ for player~1 from $s_0$ for the objective $\PosEnergy_G(v_0)$.

  \item The {\em mean-payoff threshold problem} asks, given a 
  multi-weighted two-player game structure $G$, an initial state $s_0$, and 
  two sets $I,J \subseteq \{1,\dots,k\}$ of indices,
  to decide whether there exists a winning strategy $\lambda_1$ for player~1 
  from $s_0$ for the objective $\MeanPayoffInfSup_G(I,J)$.

\end{itemize}


\smallskip\noindent{\bf Determinacy, determinacy under finite-memory, and determinacy by finite-memory.}
We now define the notion of determinacy, determinacy under finite-memory and determinacy by finite-memory.
\begin{itemize}
\item \emph{(Determinacy).} 
A game $G$ with state space~$S$ and objective~$\varphi$ is \emph{determined} if from all states $s_0 \in S$,
either player~1 or player~2 has a winning strategy, i.e. $S = \win{1} \varphi \cup \win{2} \lnot \varphi$.
Observe that since $\win{1}\varphi \cap \win{2} \lnot \varphi=\emptyset$, determinacy 
means that $\win{1} \varphi$ and $\win{2} \lnot \varphi$ partition the state space.

\item \emph{(Determinacy under finite-memory).}
We also consider \emph{determinacy under finite-memory} strategies. 
Let $\win{1}^{finite} \varphi$ be the set of states $s_0$ from which player~1 has a \emph{finite-memory} strategy $\lambda_1$ such that for all 
\emph{finite-memory} strategies $\lambda_2$ of player~2, we have $\outcome_G(s_0, \lambda_1,\lambda_2) \in \varphi$.
And let $\win{2}^{finite} \lnot \varphi$ be the set of states $s_0$ from which player~1 has a \emph{finite-memory} 
strategy $\lambda_2$ such that for all \emph{finite-memory} strategies $\lambda_1$ of player~1, 
we have $\outcome_G(s_0, \lambda_1,\lambda_2) \not\in \varphi$.
A game $G$ with state space~$S$ and objective~$\varphi$ is \emph{determined under finite-memory} if 
$S = \win{1}^{finite} \varphi \cup \win{2}^{finite} \lnot \varphi$.
Again observe that $\win{1}^{finite}\varphi \cap \win{2}^{finite} \lnot \varphi=\emptyset$, 
and determinacy under finite-memory means that 
$\win{1}^{finite} \varphi$ and $\win{2}^{finite} \lnot \varphi$ partition the state space.
We say that determinacy and determinacy under finite-memory coincide for an objective~$\varphi$, 
if for all game structures, we have $\win{1} \varphi = \win{1}^{finite} \varphi$
and $\win{2} \lnot \varphi = \win{2}^{finite} \lnot \varphi$.

\item \emph{(Determinacy by finite-memory).}
We also consider \emph{determinacy by finite-memory} strategies.
Let $\win{1}^{fin-inf}\varphi$ be the set of states $s_0$ from which player~1 has 
a \emph{finite-memory} strategy $\lambda_1$ such that for all strategies $\lambda_2$ of player~2, 
we have $\outcome_G(s_0, \lambda_1,\lambda_2) \in \varphi$ (i.e., player~1 is restricted to 
finite-memory strategies whereas strategies for player~2 are general infinite-memory strategies).
The set of states $s_0$ from which player~2 has 
a \emph{finite-memory} strategy $\lambda_2$ such that for all strategies $\lambda_1$ of player~1, 
we have $\outcome_G(s_0, \lambda_1,\lambda_2) \not\in \varphi$ is denoted $\win{2}^{fin-inf}\lnot \varphi$.
If for all game structures we have $\win{1}\varphi=\win{1}^{fin-inf}\varphi$ and 
$\win{2}\lnot \varphi =\win{2}^{fin-inf}\lnot \varphi$, and all game
structures with objective~$\varphi$ are determined, then 
we say that determinacy by finite-memory strategies holds for $\varphi$.
\end{itemize}

We first observe that determinacy by finite-memory strategies implies that 
finite-memory strategies suffice for both players, and determinacy by 
finite-memory implies determinacy under finite-memory (since given a 
finite-memory strategy of a player, if there is a counter strategy 
for the opponent, then there is a finite-memory one by determinacy by 
finite-memory).
Thus determinacy by finite-memory strategies implies that 
(i)~$\win{1}\varphi =\win{1}^{finite}\varphi =\win{1}^{fin-inf}\varphi$; and
(ii)~$\win{2}\lnot\varphi =\win{2}^{finite}\lnot\varphi =\win{2}^{fin-inf}\lnot\varphi$.
As we will show that determinacy and determinacy under finite-memory do not coincide 
for multi-mean-payoff games (Theorem~\ref{thrm_gen_mean}), we consider for 
multi-mean-payoff objectives $\varphi$ both 
(1)~winning under finite-memory strategies, i.e. to decide whether 
$s_0 \in \win{1}^{finite}\varphi$ for a given initial state $s_0$; and
(2)~winning under general strategies, i.e. to decide whether 
$s_0 \in \win{1}\varphi$ for a given initial state $s_0$.
For multi-energy games we will show determinacy by finite-memory strategies.


Determinacy for multi-mean-payoff and multi-energy objectives follows from 
a general determinacy result for Borel objectives~\cite{Mar75}:
(a)~multi-mean-payoff objectives can be expressed as a finite intersection 
of one-dimensional mean-payoff objectives which are complete for the third level of the 
Borel hierarchy~\cite{ChaTCS}; and 
(b)~multi-energy objectives can be expressed as a finite intersection 
of one-dimensional energy objectives which are closed sets.

\begin{theorem}[Determinacy~\cite{Mar75}]
Multi-mean-payoff and multi-energy games are determined. 
\end{theorem}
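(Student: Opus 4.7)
The plan is to reduce the statement to Martin's Borel determinacy theorem, which states that every two-player zero-sum game on a countable graph (or more generally on a tree with countable branching) with a Borel winning condition is determined. Since the game graphs here are finite and the state space of plays $S^\omega$ carries the standard Cantor-style product topology, it suffices to verify that both objectives are Borel subsets of $\Plays(G)$.

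First I would handle the multi-energy case. For a fixed initial credit $v_0 \in \nat^k$ and dimension $j$, the set
\[
C_{v_0,j} \;=\; \{\pi \in \Plays(G) \mid \forall n \geq 0 : (v_0 + \EL(\pi(n)))_j \geq 0\}
\]
is a countable intersection of cylinder sets (each constraint ``$(v_0 + \EL(\pi(n)))_j \geq 0$'' depends only on the finite prefix $\pi(n)$ and is therefore clopen). Hence $C_{v_0,j}$ is closed, and $\PosEnergy_G(v_0) = \bigcap_{j=1}^{k} C_{v_0,j}$ is a finite intersection of closed sets, hence closed, and in particular Borel.

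Next I would handle the multi-mean-payoff case. For each dimension $j$, the map $\pi \mapsto \tfrac{1}{n}\cdot \EL(\pi(n))_j$ is continuous (depends only on a finite prefix), so $\MPsup(\pi)_j = \limsup_n \tfrac{1}{n}\EL(\pi(n))_j$ and $\MPinf(\pi)_j = \liminf_n \tfrac{1}{n}\EL(\pi(n))_j$ are Borel measurable real-valued functions on $\Plays(G)$. Consequently the sets $\{\pi : \MPinf(\pi)_i \geq 0\}$ and $\{\pi : \MPsup(\pi)_j \geq 0\}$ are Borel (at a finite level of the Borel hierarchy; one can unfold them as countable boolean combinations of cylinders corresponding to rational approximations, placing them in $\Pi_3$ / $\Sigma_3$). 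Therefore $\MeanPayoffInfSup_G(I,J)$, being a finite intersection of such sets, is Borel.

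Applying Martin's theorem to each objective then yields determinacy, so $\win{1}\varphi$ and $\win{2}\lnot\varphi$ partition the state space in each case. The only mildly delicate point is verifying the Borel measurability of $\MPsup$ and $\MPinf$, but this is standard since a pointwise $\limsup$ or $\liminf$ of continuous functions is Borel; no further machinery is needed.
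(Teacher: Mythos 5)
Your argument is correct and matches the paper's own justification: the paper likewise derives determinacy from Martin's Borel determinacy theorem, observing that the multi-energy objective (for a fixed credit $v_0$) is a finite intersection of closed one-dimensional energy objectives and that the multi-mean-payoff objective is a finite intersection of one-dimensional mean-payoff objectives sitting at the third level of the Borel hierarchy. Your verification that $\PosEnergy_G(v_0)$ is closed and that $\MPsup$, $\MPinf$ are Borel as $\limsup$/$\liminf$ of continuous functions simply fills in the standard details the paper leaves implicit.
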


\smallskip\noindent{\bf Attractors.} 
The player-$1$ \emph{attractor} of a given set $T \subseteq S$ of target states 
is the set of states from which player~$1$ can force to eventually reach a state in $T$. 
The attractor is defined inductively as follows:
let $A_0 = T$, and for all $j \geq 0$ let 
\[
A_{j+1} = A_j \cup \{s \in S_1 \mid \exists(s,t) \in E:  t \in A_j\} 
\cup \{s \in S_2 \mid \forall (s,t)\in E:  t \in A_j\}
\]
denote the set of states from where player~$1$ can ensure to reach $A_j$ 
within one step irrespective of the choice of player~$2$. 
Then the player-$1$ attractor is $\AttrOne{T}=\bigcup_{j\geq 0} A_j$.
The player-$2$ \emph{attractor} $\AttrTwo{T}$ is defined symmetrically. 
Note that for $i=1,2$, the subgraph $G \upharpoonright (S \setminus \Attri{T})$ 
is again a game structure (i.e., every state has an outgoing edge). 
%
For all multi-mean-payoff objectives $\varphi$ (and in general for all tail objectives~\cite{ChaTCS}),
we have $\win{1} \varphi = \AttrOne{\win{1} \varphi}$ and 
$\win{2} \lnot \varphi = \AttrTwo{\win{2} \lnot \varphi}$.

\section{Multi-Energy Games}\label{sec:multi-energy}

In this section, we study the determinacy and complexity of multi-energy games.
First, we show that \emph{finite-memory} strategies are sufficient for player~$1$,
and \emph{memoryless} strategies are sufficient for player~$2$. It follows that
multi-energy games are determined under finite-memory. 
We establish coNP complexity for the unknown initial credit problem,
as well as a matching coNP-hardness result, 
and we show that under memoryless strategies for player~$1$ the problem is NP-complete.
Finally, we show that the unknown initial credit problem is log-space equivalent
to the mean-payoff threshold problem when the players have to use finite-memory
strategies (and in general infinite-memory strategies are more powerful than 
finite-memory strategies in multi-mean-payoff games). The case of 
infinite-memory strategies in multi-mean-payoff games is addressed in Section~\ref{sec:multi-mean-payoff}.

\paragraph{\bf Determinacy under finite-memory.}
The next lemmas show that finite-memory strategies are sufficient for player~1
in multi-energy games, and that memoryless strategies are sufficient for player~2.

\begin{lemma}\label{lem:energy-player1-finite-memory}
For all multi-weighted two-player game structures $G$ and initial states $s_0$, 
the answer to the unknown initial credit problem is {\sc Yes} if and only if 
there exist an initial credit $v_0 \in \nat^k$ and a \emph{finite-memory} strategy $\lambda^{\sf FM}_1$ 
for player~$1$ such that for all strategies $\lambda_2$ of player~$2$, 
$\outcome_G(s_0, \lambda^{\sf FM}_1,\lambda_2) \in \PosEnergy_G(v_0)$.
\end{lemma}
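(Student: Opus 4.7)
The ``if'' direction is immediate, since any finite-memory strategy is a strategy; the task is to prove the ``only if'' direction. The plan is as follows. Assume player~1 has a (possibly infinite-memory) winning strategy $\lambda_1$ with initial credit $v_0 \in \nat^k$ from $s_0$, and consider the strategy tree $T_{\lambda_1(s_0)}$. Since the game graph is finitely branching and $\lambda_1$ is deterministic, this tree is also finitely branching, and since $\lambda_1$ is winning, every node $\rho$ of $T_{\lambda_1(s_0)}$ satisfies $v_0 + \EL(\rho) \geq \{0\}^k$. The goal is to extract a finite ``self-covering'' subtree of $T_{\lambda_1(s_0)}$ and to read off from it a finite-memory strategy that realizes the same initial credit $v_0$.

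To extract the subtree, label each node $\rho$ by the pair $(\Last(\rho),v_0+\EL(\rho)) \in S \times \nat^k$, and order such labels by equality on the first component and componentwise $\leq$ on the second. This is a well quasi-order: $S$ is finite, $(\nat^k,\leq)$ is a well quasi-order by Dickson's lemma, and the Cartesian product of two well quasi-orders is a well quasi-order. Call a node $\rho'$ a \emph{covering point} if some strict ancestor $\rho$ on the branch from the root satisfies $\Last(\rho)=\Last(\rho')$ and $\EL(\rho)\leq\EL(\rho')$, and fix one such $\rho$ as its \emph{anchor}. Let $T'$ be obtained from $T_{\lambda_1(s_0)}$ by cutting each branch at its first covering point. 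Every infinite branch of $T_{\lambda_1(s_0)}$ contains a covering point by the well-quasi-order property, so $T'$ has no infinite branch; combined with finite branching, K\"onig's lemma yields that $T'$ is finite.

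From $T'$ I build a finite-memory strategy $\lambda_1^{\sf FM}$ whose memory set is the set of internal (non-leaf) nodes of $T'$, with initial memory the root. At memory $\rho$ the strategy plays $\lambda_1(\rho)$; upon observing the resulting move, the memory is updated to the corresponding child in $T'$ if that child is internal, or to the anchor of that child if the child is a covering leaf (note that the anchor and the child share the same last state, so the rewiring is well-defined at the game level). The key invariant, proved by induction on the length of a play $\pi$ consistent with $\lambda_1^{\sf FM}$, is that if the memory after $\pi$ is $\rho$, then $v_0+\EL(\pi)\geq v_0+\EL(\rho)\geq \{0\}^k$: non-rewiring steps add the same weight to both sides, and each rewiring from a covering leaf $\rho'$ to its anchor $\rho''$ only decreases the right-hand side, since $\EL(\rho'')\leq\EL(\rho')$. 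The main obstacle is exactly this preservation of the energy invariant through rewirings; once this is in hand, the rest of the argument is a standard application of Dickson's lemma and K\"onig's lemma, analogous in spirit to Karp--Miller-style constructions for {\sc Vass}.
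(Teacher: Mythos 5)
Your proposal is correct and follows essentially the same route as the paper's proof: the strategy tree labelled with energy vectors, the well quasi-order on $S\times\nat^k$ (Dickson), cutting branches at the first subsuming/covering node, K\"onig's lemma for finiteness, and turning the finite tree into a Moore machine by rewiring covering leaves to their anchors. Your explicit statement and inductive verification of the invariant $v_0+\EL(\pi)\geq v_0+\EL(\rho)\geq \{0\}^k$ just spells out the step the paper summarizes as ``player~1 can mimic the strategy played at the subsumed ancestor.''
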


\begin{proof}  
One direction is trivial. For the other direction, assume that $\lambda_1$ is a 
(not necessary finite-memory) winning strategy for player~1 in $G$ from $s_0$ with initial 
credit $v_0 \in \nat^k$. We show how to construct from $\lambda_1$ 
a finite-memory strategy $\lambda_1^{\sf FM}$ that is winning from $s_0$ against all 
strategies of player~2 for initial credit $v_0$. 

Consider the strategy tree $T_{\lambda_1(s_{0})}$ and associate to each node 
$\rho = s_0 s_1 \dots s_n$ in this tree the energy vector $v_0 + \EL(\rho)$. 
Since $\lambda_1$ is winning, we have $v_0 + \EL(\rho) \in \nat^k$ for all $\rho \in T_{\lambda_1(s_{0})}$. 
Now, consider the relation $\sqsubseteq$ on the set $S \times \nat^k$
defined as follows: $(s_1,v_1) \sqsubseteq (s_2,v_2)$ if $s_1 = s_2$ and 
$v_1 \leq v_2$ (i.e., $v_1(i) \leq v_2(i)$ for all $i$, $1 \leq i \leq k$). 
The relation $\sqsubseteq$ is a well quasi-order. 
As a consequence, on every infinite branch $\pi = s_0 s_1 \dots s_n \dots$ 
of $T_{\lambda_1(s_{0})}$ there exist two indices $i < j$ such that 
$\Last(\pi(i)) = \Last(\pi(j))$ and $\EL(\pi(i)) \leq \EL(\pi(j))$.
We say that node $\pi(j)$ subsumes node $\pi(i)$. 
Now, let $T^{\sf FM}$ be the tree $T_{\lambda_1(s_{0})}$ where we stop each branch 
when we reach a node $n_2$ that subsumes one of its ancestor node $n_1$. 
By K\"{o}nig's lemma~\cite{Konig36} and Dickson's lemma~\cite{Dickson}, the tree
$T^{\sf FM}$ is finite. From the node $n_2$, player~1 can mimic the strategy played
in $n_1$ because the energy level in $n_2$ is greater than in $n_1$.  
From $T^{\sf FM}$, we can construct the Moore machine of a finite-memory 
strategy $\lambda_1^{\sf FM}$ that is winning in the multi-energy game~$G$ 
from~$s_0$ with initial energy level~$v_0$.
\hfill\qed
\end{proof}

\begin{lemma}[\cite{BJK10}]\label{lem:player-two-memoryless}
For all multi-weighted two-player game structures $G$ and initial states $s_0$, 
the answer to the unknown initial credit problem is {\sc No} if and only if 
there exists a \emph{memoryless} strategy $\lambda_2$ for player~$2$, such that
for all initial credit vectors $v_0 \in \nat^k$ and all strategies $\lambda_1$ 
for player~$1$ we have $\outcome_G(s_0,\lambda_1,\lambda_2) \not\in \PosEnergy_G(v_0)$.
\end{lemma}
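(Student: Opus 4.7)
The forward direction is immediate. For the converse, assume the unknown-initial-credit answer is {\sc No}, and set
$U := \{s \in S : \forall v_0 \in \nat^k,\ s \notin \win{1} \PosEnergy_G(v_0)\}$,
so that by hypothesis $s_0 \in U$. The plan is to build a single memoryless strategy $\lambda_2$ of player~2 that refutes every initial credit from every state of $U$ simultaneously.

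First I would show that $U$ is a trap for player~1: every edge out of $U \cap S_1$ stays in $U$, and every state of $U \cap S_2$ has at least one successor in $U$. If $(s,t) \in E$ with $s \in U \cap S_1$ and $t \in \win{1} \PosEnergy_G(v_t)$, then player~1 wins from $s$ with initial credit $\max(\mathbf{0}, v_t - w(s,t))$ (componentwise) by moving to $t$, contradicting $s \in U$. If $s \in U \cap S_2$ with every successor $t$ lying in some $\win{1} \PosEnergy_G(v_t)$, then player~1 wins from $s$ with credit $v := \max_{(s,t)\in E} \max(\mathbf{0}, v_t - w(s,t))$ by waiting and then following a winning strategy from whichever successor player~2 picks, again contradicting $s \in U$. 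Hence $G \upharpoonright U$ is a well-defined subgame in which the desired $\lambda_2$ has to be constructed.

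The construction of $\lambda_2$ on $G \upharpoonright U$ is the core step, and I expect it to be the main obstacle. I would follow the {\sc Vass}-style reasoning of Br\'azdil, Jan\v{c}ar, and Ku\v{c}era~\cite{BJK10} (Lemma~3.4), proceeding by induction on the dimension $k$: at each stage one peels off, via a player-2 attractor, the set of states from which player~1 could force unbounded gain in one fixed coordinate while remaining safe in the others, and then recurses on the residual subgame in dimension $k-1$. Attractor strategies are memoryless, and they compose cleanly with the memoryless witness produced by the inductive hypothesis into a global memoryless strategy on $U$. Termination of the peeling is guaranteed by the assumption $s_0 \in U$: if the residual subgame still allowed player~1 to survive, a Dickson-style self-covering argument paralleling the proof of Lemma~\ref{lem:energy-player1-finite-memory} would promote that survival into a finite-memory winning strategy for player~1 with some initial credit, contradicting $s_0 \in U$. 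Against the resulting memoryless $\lambda_2$, every play starting in $U$ drives some coordinate of the running energy to $-\infty$, and hence no initial credit $v_0$ can save player~1.
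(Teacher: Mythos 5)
Your trivial direction and the observation that $U$ is a trap for player~1 are fine, but the heart of the lemma --- actually producing a \emph{single memoryless} strategy of player~2 that defeats every initial credit --- is precisely the step you leave open, and the sketch you give for it does not work as stated. Peeling off ``via a player-2 attractor, the set of states from which player~1 could force unbounded gain in one fixed coordinate while remaining safe in the others'' is incoherent as a construction for player~2: such states are favourable to player~1, so taking their player-2 attractor neither yields a region where player~2 spoils nor prescribes what player~2 should play there; moreover, termination of any finite peeling follows from finiteness of the arena, not from $s_0\in U$, and the Dickson-style contradiction at the end would have to be with the \emph{residual} states lying in $U$ (by definition of $U$), not with $s_0\in U$ specifically. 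In effect the proposal reduces to ``follow \cite{BJK10}'' accompanied by a description of that argument which does not match what is needed, so there is a genuine gap at the crucial step.

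For comparison, the paper also defers to \cite{BJK10} (Lemma~19 there), but the mechanism it records is different and is the one that closes the gap: argue the contrapositive. Suppose every memoryless strategy of player~2 can be beaten by player~1 with \emph{some} credit; then player~1 beats \emph{all} strategies with some credit, by induction on the number of player-2 choices. At a player-2 state with successors $s'$ and $s''$, if credit $v'_0$ suffices against the restriction in which player~2 always picks $s'$, and $v''_0$ suffices against always picking $s''$, then $v'_0+v''_0$ suffices against arbitrary alternation between them, because energy objectives are monotone in the initial credit: a strategy that keeps the energy nonnegative from credit $v_0$ keeps it above $\Delta$ from credit $v_0+\Delta$. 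No induction on the dimension, no attractor decomposition, and no trap property of $U$ is needed; if you want a self-contained write-up, that credit-summing induction over player-2 edges is the argument to carry out.
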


\begin{proof}  
The proof was given in~\cite[Lemma 19]{BJK10}. Intuitively, consider a player-$2$ 
state $s \in S_2$ with two successors $s'$ and $s''$. 
If an initial credit vector $v'_0$ is sufficient for player~$1$ to win from $s_{\sf init}$
against player~$2$ always choosing $s'$, and $v''_0$ is sufficient from $s$ against  
player~$2$ always choosing $s''$, then $v'_0+v''_0$ is sufficient from $s_{\sf init}$ against 
player~$2$ arbitrarily alternating between $s'$ and $s''$. This is because
if player~$1$ maintains the energy nonnegative in all dimensions when the initial credit
is $v_0$, then he can maintain the energy always above $\Delta$ when initial credit
is $v_0 + \Delta$ ($\Delta \in \nat^k$).
\hfill\qed
\end{proof}

\noindent The previous two lemmas establishes both determinacy by finite-memory 
strategies, as well as that determinacy and determinacy under finite-memory 
coincide.
As a consequence of the previous two lemmas, we get the following theorem.

\begin{theorem}\label{thrm_gen_energy_fin}
Multi-energy games are determined by finite-memory, hence determined under finite-memory. 
Determinacy coincides with determinacy under finite-memory for multi-energy games.
\end{theorem}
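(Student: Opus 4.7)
The plan is to combine the two preceding lemmas with the general Borel determinacy result (the Theorem attributed to~\cite{Mar75} above) to upgrade arbitrary winning strategies to finite-memory ones on both sides of the partition. Write $\varphi$ for the multi-energy objective associated with the unknown initial credit problem (the union over $v_0 \in \nat^k$ of $\PosEnergy_G(v_0)$, or equivalently the existence of some initial credit that keeps the energy nonnegative). By Borel determinacy, for every initial state $s_0$ we have $s_0 \in \win{1}\varphi \cup \win{2}\lnot\varphi$, and these two sets are disjoint.

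First I handle player~1's side. Suppose $s_0 \in \win{1}\varphi$. Then by definition there is some initial credit $v_0 \in \nat^k$ and some (a priori infinite-memory) strategy $\lambda_1$ winning for $\PosEnergy_G(v_0)$ from $s_0$. Lemma~\ref{lem:energy-player1-finite-memory} then produces an initial credit and a \emph{finite-memory} strategy $\lambda_1^{\sf FM}$ that wins against all strategies of player~2 (crucially, not just against finite-memory ones, since the proof of that lemma works against arbitrary counter strategies). Hence $s_0 \in \win{1}^{fin-inf}\varphi$, and a fortiori $s_0 \in \win{1}^{finite}\varphi$.

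Symmetrically, if $s_0 \in \win{2}\lnot\varphi$, then the unknown initial credit problem answers {\sc No} at $s_0$, so Lemma~\ref{lem:player-two-memoryless} provides a \emph{memoryless} (thus finite-memory) strategy $\lambda_2$ of player~2 that defeats every choice of initial credit and every player-1 strategy. Hence $s_0 \in \win{2}^{fin-inf}\lnot\varphi$, and a fortiori $s_0 \in \win{2}^{finite}\lnot\varphi$.

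Putting the two cases together yields $S = \win{1}^{fin-inf}\varphi \cup \win{2}^{fin-inf}\lnot\varphi$, which is determinacy by finite-memory. From the chain of trivial inclusions $\win{i}^{fin-inf} \subseteq \win{i}^{finite} \subseteq \win{i}$ (for $i=1,2$, and with $\lnot\varphi$ on the player-2 side), combined with Borel determinacy, we obtain the equalities $\win{1}\varphi = \win{1}^{finite}\varphi = \win{1}^{fin-inf}\varphi$ and $\win{2}\lnot\varphi = \win{2}^{finite}\lnot\varphi = \win{2}^{fin-inf}\lnot\varphi$, which gives determinacy under finite-memory and the stated coincidence. I do not foresee any real obstacle here: all the technical content is already packed into Lemmas~\ref{lem:energy-player1-finite-memory} and~\ref{lem:player-two-memoryless}, so the proof is essentially a short bookkeeping argument gluing them to Borel determinacy.
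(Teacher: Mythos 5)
Your proposal is correct and follows essentially the paper's own route: the paper derives the theorem directly as "a consequence of the previous two lemmas" (Lemma~\ref{lem:energy-player1-finite-memory} for player~1 and Lemma~\ref{lem:player-two-memoryless} for player~2), with the bookkeeping identities $\win{i}=\win{i}^{finite}=\win{i}^{fin-inf}$ already observed in Section~\ref{sec:def} as consequences of determinacy by finite-memory; your extra appeal to Borel determinacy is harmless but not needed, since the Yes/No dichotomy of the unknown initial credit problem plus the ``only if'' direction of Lemma~\ref{lem:player-two-memoryless} already yields the case split. One small wording caveat: the inclusion $\win{i}^{finite}\subseteq\win{i}$ is not trivial on its own (beating all finite-memory opponents does not a priori imply beating all opponents); it should instead be obtained from the disjointness $\win{1}^{finite}\varphi\cap\win{2}^{finite}\lnot\varphi=\emptyset$ together with the inclusions you did establish, exactly as in the paper's remark following the definition of determinacy by finite-memory.
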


\begin{remark}
Note that even if player 2 can be restricted to play memoryless strategies in 
multi-energy games, it may be that player~$1$ is winning with some initial 
credit vector $v_0$ when player~$2$ is memoryless, and is not winning with 
the same initial credit vector $v_0$ when player~$2$ can use arbitrary strategies.
This situation is illustrated in \figurename~\ref{fig:memory-needed} where player~$1$
(owning round states) can maintain the energy nonnegative in all dimensions 
with initial credit $(2,0)$ when player~$2$ (owning square states) is memoryless.
Indeed, either player~$2$ chooses the left edge from $s_0$ to $s_1$ and player~$1$ wins, 
or player~$2$ chooses the right edge from $s_0$ to $s_2$, and player~$1$ wins as well by 
alternating the edges back to $s_0$. Now, if player~$2$ has memory, then player~2 wins
by choosing first the right edge to $s_2$, which forces player~$1$ to come back
to $s_0$ with multi-weight $(-1,1)$. The energy level is now $(1,1)$ in $s_0$ and player~$2$
chooses the left edge to $s_1$ which is losing for player~$1$. Note that player~$1$
wins with initial credit $(2,1)$ and $(3,0)$ (or any larger credit) against all 
arbitrary strategies of player~$2$.
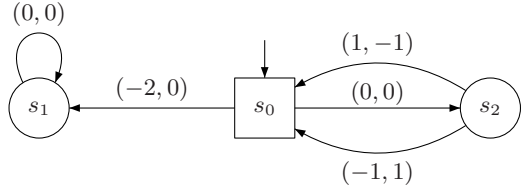
\begin{figure}[!tb]
 \begin{center}

\begin{picture}(75,28)(0,0)


\node[Nmarks=i, iangle=90, Nmr=0](n0)(40,12){$s_0$}
\node[Nmarks=n](n1)(10,12){$s_1$}
\node[Nmarks=n](n2)(70,12){$s_2$}

\drawloop[ELside=l,loopCW=y, loopdiam=6](n1){$(0,0)$}


\drawedge[ELpos=50, ELside=l, ELdist=0.5, curvedepth=0](n0,n2){$(0,0)$}
\drawedge[ELpos=50, ELside=l, curvedepth=6](n2,n0){$(-1,1)$}
\drawedge[ELpos=50, ELside=r, curvedepth=-6](n2,n0){$(1,-1)$}

\drawedge[ELpos=50, ELside=r, curvedepth=0](n0,n1){$(-2,0)$}


\end{picture}
   \caption{player~$1$ (round states) wins with initial credit $(2,0)$ when player~$2$ (square states) can use memoryless strategies, 
but not when player~$2$ can use arbitrary strategies.   \label{fig:memory-needed}}
 \end{center}
\end{figure}
\end{remark}

\paragraph{\bf Complexity.}
We show that the unknown initial credit problem is coNP-complete.
First, we show that the one-player version of this game can be solved 
by checking the existence of a circuit (i.e., a not necessarily simple cycle) 
with nonnegative effect in all dimensions, and we use the memoryless result 
for player~2 (Lemma~\ref{lem:player-two-memoryless}) to define a coNP algorithm. 
Second, we present a coNP-hardness proof.



\begin{theorem}\label{thrm_complete}
The unknown initial credit problem is coNP-complete. 
\end{theorem}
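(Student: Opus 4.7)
The plan is to separate membership and hardness, and within membership to use the memoryless determinacy result for player~$2$ (Lemma~\ref{lem:player-two-memoryless}) to reduce the upper-bound question to a polynomial-time one-player check.

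\textbf{Membership in coNP.} A \textsc{No}-instance is certified by a memoryless strategy $\lambda_2$ of player~$2$ that beats every initial credit. A nondeterministic procedure for the complement guesses such a $\lambda_2$ of linear size and fixes it in $G$; the result is a finite one-player graph $G_{\lambda_2}$ in which we must check, in polynomial time, that no initial credit lets player~$1$ keep all energies nonnegative. The key lemma is a cycle characterization of one-player multi-energy games: player~$1$ wins from $s_0$ with some initial credit if and only if there is a strongly connected component $C$ reachable from $s_0$ whose simple-cycle weight vectors admit a nontrivial nonnegative rational combination $\sum_c \alpha_c\, w(c)$ that is componentwise nonnegative. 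The ``if'' direction stitches the cycles of such a combination into a bounded-deficit infinite walk in $C$ using strong connectedness; the ``only if'' direction takes the limit of edge-frequencies of any winning infinite walk and observes that it forms precisely such a circulation. Existence of the combination in a given SCC is a linear-programming feasibility problem over simple-cycle flows, so the overall verifier runs in polynomial time, placing the problem in coNP.

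\textbf{coNP-hardness.} We reduce from the complement of 3\textsc{Sat}. Given a 3CNF formula $\varphi$ over variables $x_1,\dots,x_n$ with clauses $C_1,\dots,C_m$, we construct a polynomial-size game $G_\varphi$ with dimensions indexed by the variables and the clauses. Player~$2$ selects a truth assignment by routing through variable gadgets that credit/debit the variable dimensions, while player~$1$ cycles through clause gadgets, gaining energy on a clause dimension only by entering through a literal that the current assignment makes true. The weights are balanced so that player~$1$ can keep all energies bounded for some initial credit if and only if every assignment produced by player~$2$ leaves at least one clause unsatisfied, i.e.\ if and only if $\varphi$ is unsatisfiable. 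Combined with the upper bound, this gives coNP-completeness.

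\textbf{Main obstacle.} The subtlest step is the cycle characterization underlying the one-player check. The converse direction requires extracting a convergent subsequence of edge-frequencies from an arbitrary bounded-below infinite walk and arguing that its limit lies in the nonnegative cone generated by the simple-cycle weight vectors of a single reachable SCC; finiteness of the graph and integrality of the weights must be used carefully here to preclude spurious unbounded negative drifts on any dimension. Once this characterization is in hand, the coNP algorithm reduces to routine linear programming, and the hardness reduction proceeds by a standard SAT-gadget argument.
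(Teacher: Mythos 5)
Your overall architecture is the paper's: by Lemma~\ref{lem:player-two-memoryless} a \textsc{No}-instance is witnessed by a memoryless strategy of player~2, so coNP membership reduces to a polynomial-time test on the one-player game, and hardness comes from the complement of 3SAT. However, your key lemma for the one-player check is false for the \emph{energy} objective. The criterion ``some reachable SCC whose simple-cycle weight vectors admit a nontrivial componentwise-nonnegative combination'' characterizes winning for the \emph{mean-payoff-inf} objective (this is exactly Lemmas~\ref{lemm_inf_1} and~\ref{lemm_inf_2} of the paper), not for energy. Counterexample: two states $u,v$; self-loop at $u$ with weight $(1,-1)$, self-loop at $v$ with weight $(-1,1)$, edges $u\to v$ and $v\to u$ each with weight $(-1,-1)$. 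The single SCC admits the nonnegative combination $1\cdot(1,-1)+1\cdot(-1,1)=(0,0)$, yet player~1 loses the energy game for every initial credit: a play that eventually stays on one self-loop diverges to $-\infty$ in one dimension, and a play that switches infinitely often loses $2$ in the coordinate sum at every switch while the self-loops compensate only to exactly zero, so some coordinate is unbounded below. Your ``stitching'' step breaks precisely here: when the combination is merely zero in a dimension, the connecting paths' costs accumulate and are never repaid; growing the repetition factors rescues the limit average (mean-payoff-inf) but not the running sum (energy). Hence the LP over simple-cycle flows strictly overapproximates the winning set, and your verifier would accept losing instances.

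The correct criterion, which the paper uses in Lemma~\ref{lem:coNp-membership}, is the existence of a reachable \emph{connected} circuit (a closed walk, not necessarily simple) whose total weight is nonnegative in every dimension. The ``only if'' direction is not a frequency-limit argument but a well-quasi-order argument: along any play with bounded energy there are positions $i<j$ with the same state and componentwise non-decreasing accumulated energy (Dickson's lemma), and the segment between them is such a circuit. The polynomial-time test is then a reduction to the Kosaraju--Sullivan zero-circuit problem (Lemma~\ref{lem:zero-cycle}): add at every state $k$ self-loops, each of weight $-1$ in one dimension and $0$ elsewhere, and ask for a zero circuit; connectivity of the circuit is exactly what the flow LP fails to enforce. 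With this replacement the membership argument goes through. On hardness, your plan is only a sketch: the clause/variable gadgets and the ``balanced'' weights are not specified, so the central equivalence cannot be checked. The paper's concrete reduction (Lemma~\ref{thrm_hard}) uses one dimension per literal, lets player~1 choose a clause and player~2 a literal of that clause, with weight $+1$ on the chosen literal's dimension and $-1$ on its complement's dimension; one direction uses the memoryless sufficiency for player~2, the other uses determinacy. As it stands, your membership lemma is wrong and your hardness construction is unverified, so the proof has a genuine gap.
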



First, we need the following result about zero-circuits in 
multi-weighted directed graphs (a graph is a one-player game).
A \emph{zero-circuit} is a finite sequence $s_0 s_1 \dots s_n$ with $n \geq 1$ 
such that $s_0 = s_n$, $(s_i,s_{i+1}) \in E$ for all $0 \leq i < n$, and 
$\sum_{i=0}^{n-1} w(s_i,s_{i+1}) = (0,0,\dots,0)$. 
The circuit need not be simple.

\begin{lemma}[\cite{KS88}]\label{lem:zero-cycle}
Deciding if a multi-weighted directed graph contains a zero circuit
can be done in polynomial time.
\end{lemma}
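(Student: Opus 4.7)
My plan is to reduce zero-circuit detection to a polynomial number of linear programming feasibility questions. Since every closed walk is confined to a single strongly connected component (SCC), I would first compute the SCCs of $G$ in linear time and then check each SCC independently: $G$ contains a zero-circuit if and only if at least one of its SCCs does.

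For a fixed SCC $S$ I would characterize zero-circuit existence in $S$ as the feasibility of the following linear system over nonnegative rational variables $\{x_e\}_{e \in E(S)}$: flow conservation at every vertex ($x$-mass entering equals $x$-mass leaving), the vector equation $\sum_{e} x_e \cdot w(e) = 0$, and a nontriviality constraint $\sum_e x_e \geq 1$. This system has polynomially many variables and constraints, and rational LP feasibility is solvable in polynomial time (e.g.\ by the ellipsoid method), so modulo correctness of the characterization this yields the desired algorithm.

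The easy direction of the characterization is immediate: a zero-circuit produces integer edge-traversal counts satisfying all three constraints. The harder direction---extracting an actual zero-circuit from a rational feasible $x$---is the main obstacle. I would clear denominators to obtain an integer flow and apply Euler's theorem to the resulting Eulerian multigraph to decompose its support into an edge-disjoint union of closed walks. These walks need not form a single connected circuit, so I would exploit the strong connectivity of $S$ to splice them together via auxiliary paths, scaling the construction up and adding compensating copies of simple cycles of $S$ (whose weights already witness zero as a nonnegative integer combination) to cancel the extra weight contributed by the gluing paths. This connectivity-plus-cancellation step is the technically delicate heart of the argument, and is precisely what the Kosaraju--Sullivan proof in~\cite{KS88} handles carefully.
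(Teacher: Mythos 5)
There is a genuine gap: the characterization at the heart of your algorithm is false. For a strongly connected graph, feasibility of the system ``flow conservation, $\sum_e x_e\, w(e)=0$, $\sum_e x_e\ge 1$, $x\ge 0$'' is \emph{not} equivalent to the existence of a zero circuit, because a circulation's support need not be (weakly) connected, and connectivity cannot be enforced by these linear constraints. Concretely, take two vertices $u,v$ with a self-loop at $u$ of weight $(1,-1)$, a self-loop at $v$ of weight $(-1,1)$, an edge $u\to v$ of weight $(0,0)$ and an edge $v\to u$ of weight $(1,1)$. The graph is strongly connected and your LP is feasible (put $x=1$ on the two self-loops, $0$ elsewhere), yet no zero circuit exists: any closed walk uses $u\to v$ and $v\to u$ the same number $t$ of times, and the sum of the two coordinates of its weight equals $2t$, so $t=0$ and the walk is stuck on a single self-loop, whose weight is never $(0,0)$. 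This also shows that the ``splice the Eulerian pieces together and cancel the gluing paths with extra cycles'' step is not merely delicate but impossible in general: in the example every way of connecting the two loops injects weight that no nonnegative combination of available cycles can cancel. What your LP actually characterizes is the existence of a nonnegative (here, zero-weight) \emph{multi-cycle}, i.e.\ a multiset of simple cycles not required to be connected --- exactly the weaker notion the paper later isolates in Lemma~\ref{lem:multiCycleProps}, where the single-LP argument is used legitimately.

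For the zero-\emph{circuit} problem the paper does not give a proof; it cites~\cite{KS88}, whose algorithm is genuinely more involved than one LP per scc. The correct statement is that a zero circuit exists iff there is a zero-weight nonnegative circulation whose support is weakly connected (then an Eulerian-circuit argument on the integer-scaled support yields the circuit). Algorithmically one computes, by solving one LP per edge, the set of edges that can carry positive flow in some zero-weight circulation, splits this maximal support into its weakly connected (hence strongly connected) components, and recurses on each component; the answer is ``yes'' exactly when some branch of the recursion reaches a connected edge set that fully supports a zero-weight circulation. This recursion, not a single feasibility test plus splicing, is what makes the polynomial bound and the correctness go through.
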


\noindent The result of Theorem~\ref{thrm_complete} follows from the next two lemmas.


\begin{lemma}\label{lem:coNp-membership}
The unknown initial credit problem is in coNP. 
\end{lemma}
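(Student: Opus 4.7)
The plan is to exploit Lemma~\ref{lem:player-two-memoryless}: since any losing instance for player~1 is witnessed by a \emph{memoryless} strategy of player~2, the nondeterministic verifier will guess such a strategy $\lambda_2$ (of size polynomial in $|G|$), fix it in $G$ to obtain a one-player graph $G_{\lambda_2}$ controlled solely by player~1, and then check in deterministic polynomial time that no initial credit lets player~1 keep all energies nonnegative from $s_0$ in $G_{\lambda_2}$. This yields a nondeterministic polynomial-time algorithm for the complement of the unknown initial credit problem, hence coNP membership. The whole task therefore reduces to showing that the \emph{one-player} unknown initial credit problem lies in~P.

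For the one-player subproblem I would prove that player~1 wins from $s_0$ for some initial credit iff some state $s$ reachable from $s_0$ lies on a \emph{nonnegative circuit}, that is, a circuit whose total weight is componentwise $\geq 0$. The ``if'' direction is constructive: player~1 goes to $s$ and then loops on the circuit forever, and any initial credit dominating the worst partial sum along the finite prefix and along one pass of the loop suffices. The ``only if'' direction is the main conceptual step and uses the same Dickson-style subsumption argument as in the proof of Lemma~\ref{lem:energy-player1-finite-memory}: along any infinite winning play one extracts two positions $i<j$ visiting a common state with componentwise nondecreasing energy levels, which yields a reachable nonnegative circuit.

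It then remains to decide in polynomial time whether the subgraph of $G_{\lambda_2}$ reachable from $s_0$ contains a nonnegative circuit. I would reduce this to the zero-circuit problem of Lemma~\ref{lem:zero-cycle} by augmenting every vertex of the graph with $k$ self-loops of weights $-\vec{e}_1,\dots,-\vec{e}_k$, where $\vec{e}_i$ denotes the $i$-th unit vector in $\integ^k$: these added loops can only subtract from the weight vector, so every zero circuit in the augmented graph projects to a nonnegative circuit of the original graph, and conversely any nonnegative circuit with coordinate excesses $d_1,\dots,d_k\geq 0$ is turned into a zero circuit by taking the $i$-th subtracting loop $d_i$ times along the way. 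The main obstacle I foresee is the ``only if'' direction of the one-player characterization, where the subsumption argument must be transferred from strategy trees (as in Lemma~\ref{lem:energy-player1-finite-memory}) to infinite paths in a graph; but this is exactly the same Dickson--K\"onig reasoning and should go through mutatis mutandis.
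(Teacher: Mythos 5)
Your proposal is correct and follows essentially the same route as the paper: guess the memoryless player-2 strategy from Lemma~\ref{lem:player-two-memoryless}, characterize winning in the resulting one-player graph by the existence of a reachable nonnegative circuit via the Dickson/K\"onig subsumption argument on energy-annotated paths, and decide that in polynomial time by adding the $k$ unit-decrementing self-loops and invoking the zero-circuit test of Lemma~\ref{lem:zero-cycle}. The only cosmetic difference is that you spell out the easy ``if'' direction of the one-player characterization, which the paper leaves implicit.
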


\begin{proof}
Let $G$ be a multi-weighted two-player game structure, and $s_0$ be an initial state.
By Lemma~\ref{lem:player-two-memoryless}, we know that player~2 can be restricted 
to play memoryless strategies. A coNP algorithm guesses a memoryless strategy~$\lambda_2$ 
and checks in polynomial time that it is winning for player~$2$ using the following argument.

Consider the graph $G_{\lambda_2(s_0)}$ as a one-player game (in which all states
belong to player~$1$). We show that if there exists an initial energy level $v_0$ and an 
infinite play $\pi=s_0 s_1 \dots s_n \dots$ in $G_{\lambda_2(s_0)}$ such that $\pi \in \PosEnergy(v_0)$,
then there exists a reachable circuit in $G_{\lambda_2(s_0)}$ with nonnegative effect in all 
dimensions. To show this, we extend $\pi$ with the energy level as follows: let
$\pi'=(s_0,w_0) (s_1,w_1) \dots (s_n,w_n) \dots$ where $w_0=v_0$ and for all 
$i \geq 1$, $w_i=v_0+\EL(\pi(i))$. Since $\pi \in \PosEnergy(v_0)$, we know that 
$w_i \in \nat^k$ for all $i \geq 0$. Hence the following order defined 
on the pairs $(s,w) \in S \times \nat^k$ is a well quasi-order: 
$(s,w) \sqsubseteq (s',w')$ if $s=s'$ and $w(j) \leq w'(j)$ for all $1 \leq j \leq k$.
It follows that there exist two indices $i_1 < i_2$ in $\pi'$ such that 
$(s_{i_1},w_{i_1}) \sqsubseteq (s_{i_2},w_{i_2})$, and the underlying circuit 
through $s_{i_1} = s_{i_2}$ has nonnegative effect in all dimensions.

Based on this, we can decide if there exists an initial energy vector $v_0$ and an infinite path 
in $G_{\lambda_2(s_0)}$ that satisfies $\PosEnergy_G(v_0)$ using the result of Lemma~\ref{lem:zero-cycle} 
on modified version of $G_{\lambda_2(s_0)}$ obtained as follows. 
In every state of $G_{\lambda_2(s_0)}$, we add $k$ self-loops with respective multi-weight $(-1,0,\dots,0)$,
$(0,-1,0,\dots,0)$, $\dots$, $(0,\dots,0,-1)$, i.e. each self-loop removes one unit
of energy in one dimension. It is easy to see that $G_{\lambda_2(s_0)}$ has a circuit with nonnegative 
effect in all dimensions if and only if the modified $G_{\lambda_2(s_0)}$ has a zero circuit, which 
can be determined in polynomial time. The result follows.
\hfill\qed
\end{proof}

\begin{lemma}\label{thrm_hard}
The unknown initial credit problem is coNP-hard. 
\end{lemma}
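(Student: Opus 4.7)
The plan is to reduce from 3SAT to the complement of the unknown initial credit problem. Given a 3CNF formula $\varphi = C_1 \wedge \dots \wedge C_m$ over $n$ variables, I construct a multi-energy game $G$ with $2n$ dimensions, one dimension $d_\ell$ for each literal $\ell \in \{x_1, \neg x_1, \dots, x_n, \neg x_n\}$. The state space consists of a single player-1 state $s_0$ and a player-2 state $c_j$ for each clause $C_j$. From $s_0$, player~1 has an edge of weight $\vec{0}$ to every $c_j$. From $c_j$, player~2 has one edge back to $s_0$ for each literal $\ell \in C_j$, with weight $+1$ on $d_\ell$, $-1$ on $d_{\bar\ell}$, and $0$ elsewhere. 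All weights lie in $\{-1,0,1\}$, so the same reduction establishes the stronger hardness under this restricted weight set.

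I claim $\varphi$ is satisfiable iff the unknown initial credit problem answers No for $(G, s_0)$. For the SAT direction, a satisfying assignment $\tau$ induces a memoryless strategy $\lambda_2$ that, at each $c_j$, picks a literal $\ell_j \in C_j$ true under $\tau$. The set $L = \{\ell_j : j = 1,\dots,m\}$ is consistent, i.e., contains no pair $x_i, \neg x_i$. In any infinite play, pigeonhole gives some $c_j$ visited infinitely often; along that play the dimension $d_{\bar\ell_j}$ is decremented at every such visit and is never incremented by any edge (since no $\ell_{j'}$ equals $\bar\ell_j$), so it diverges to $-\infty$ regardless of the initial credit. Thus $\lambda_2$ beats every initial credit and every player-1 strategy, and the answer is No.

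For the UNSAT direction, by Lemma~\ref{lem:player-two-memoryless} it suffices to show that no memoryless $\lambda_2$ is winning for player~2. Fix any memoryless $\lambda_2$, inducing $L = \{\ell_j\}$. If $L$ were consistent it would extend to a total assignment satisfying every clause $C_j$ (since $\ell_j \in C_j$ would be true), contradicting UNSAT. Hence some variable $x_i$ has both $x_i$ and $\neg x_i$ in $L$, say as $\ell_{j_1}$ and $\ell_{j_2}$. Player~1 alternates $c_{j_1}$ and $c_{j_2}$; the two corresponding edges contribute exactly opposite weights on $d_{x_i}$ and $d_{\neg x_i}$ and zero elsewhere, so the energy vector returns to its initial value after every pair of player-1 moves, and initial credit $v_0 = \vec{1}$ suffices to keep every coordinate nonnegative throughout.

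The main obstacle is choosing the weight pattern so that both directions go through simultaneously. Using $+1$ on the chosen literal's dimension and $-1$ on the complementary literal's dimension is what forces an unbounded decrease on a coordinate ($d_{\bar\ell_j}$) that cannot be recovered from in the consistent case, while at the same time providing a perfect cancellation between the two conflicting edges in the inconsistent case without disturbing any other dimension. Once this design is in place, both directions follow from Lemma~\ref{lem:player-two-memoryless} and routine bookkeeping of the energy evolution along the relevant plays.
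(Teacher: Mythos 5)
Your proof is correct and follows essentially the same reduction as the paper: one dimension per literal, the $+1/-1$ weight pattern on the chosen literal and its complement, player~2's memoryless strategy from a satisfying assignment in one direction, and Lemma~\ref{lem:player-two-memoryless} plus alternation between two conflicting clauses in the other. The only differences are cosmetic (you fold the paper's intermediate literal states into the clause-to-$s_0$ edges and use initial credit $\vec{1}$ instead of credit $1$ on just the two conflicting dimensions), which do not change the argument.
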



\begin{proof}
We present a reduction from the complement of the 3SAT problem 
which is NP-complete~\cite{PapaBook}.

\smallskip\noindent\emph{Reduction.}
We show that the unknown initial credit problem for multi-weighted two-player 
game structures is at least as hard as deciding whether a 3SAT formula is unsatisfiable.
Consider a 3SAT formula $\psi$ in CNF with clauses $C_1,C_2,\ldots,C_k$ 
over variables $\{x_1, x_2, \ldots, x_n\}$, where each clause consists 
of disjunctions of exactly three literals (a literal is a variable or its
complement). 
Given the formula $\psi$, we construct a game graph as shown in 
Figure~\ref{fig:3sat}.  
The game graph is as follows: from the initial state, player~1 chooses 
a clause, then from a clause player~2 chooses a literal  that appears 
in the clause (i.e., makes the clause true).  From every literal 
the next state is the initial state.
We now describe the multi-weight labeling function $w$.
In the multi-weight function there is a component for every literal.
For edges from the initial state to the clause states, and from 
the clause states to the literals, the weight for every component 
is~0. 
We now define the weight function for the edges from literals back to the 
initial state: for a literal $y$, and the edge from $y$ to the 
initial state, the weight for the component of $y$ is~$1$, the weight for
the component of the complement of $y$ is~$-1$, and for all the other 
components the weight is~$0$.
We now define a few notations related to assignments of truth values 
to literals.
We consider \emph{assignments} that assign truth values to all the literals.
An assignment is \emph{valid} if for every literal the truth value assigned
to the literal and its complement are complementary (i.e., for all $1 \leq i 
\leq n$, if 
$x_i$ is assigned true (resp. false), then the complement $\overline{x}_i$ 
of $x_i$ is assigned false (resp. true)).
An assignment that is not valid is \emph{conflicting} (i.e., for some 
$1 \leq i \leq n$, both $x_i$ and $\overline{x}_i$ are assigned the same 
truth value). 
If the formula $\psi$ is satisfiable, then there is a valid assignment 
that satisfies all the clauses.
If the formula $\psi$ is not satisfiable, then every assignment that satisfies
all the clauses must be conflicting. 
We now present two directions of the hardness proof.

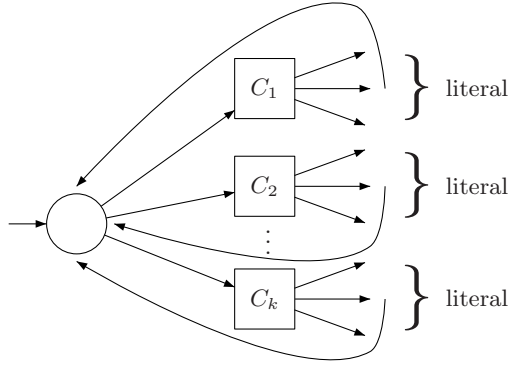
\begin{figure}[tb]
 \centering
 \begin{picture}(65,40)(0,0)

   \drawcurve(51,33)(47,44)(10,20)
   \drawcurve(51,20)(49,12)(15,15)
   \drawcurve(51,5)(49,-2)(10,10)

   \node[Nmarks=i](q0)(10,15){}
   \node[Nmr=0](q1)(35,33){$C_1$}
   \fmark[fangle=20,flength=10](q1)
   \fmark[fangle=0,flength=10](q1)
   \fmark[fangle=-20,flength=10](q1) 
   \node[Nmr=0](q2)(35,20){$C_2$}
   \fmark[fangle=20,flength=10](q2)
   \fmark[fangle=0,flength=10](q2)
   \fmark[fangle=-20,flength=10](q2) 
   \node[Nmr=0](q3)(35,5){$C_k$}
   \fmark[fangle=20,flength=10](q3)
   \fmark[fangle=0,flength=10](q3)
   \fmark[fangle=-20,flength=10](q3) 

   \drawedge(q0,q1){ }
   \drawedge(q0,q2){ }
   \drawedge(q0,q3){ }

   \put(35,11){$\vdots$}
   \put(53,31){{\Huge $\}$}}
   \put(53,18){{\Huge $\}$}}
   \put(53,3){{\Huge $\}$}}
   \put(58,32){ literal}
   \put(58,19){ literal}
   \put(58,4){ literal}

 \end{picture}
 \caption{Game graph construction for a 3SAT formula (Lemma~\ref{thrm_hard}).}
 \label{fig:3sat}
\end{figure}

\smallskip\noindent\emph{$\psi$ satisfiable implies player~2 winning.}
We show that if $\psi$ is satisfiable, then player~2 has a memoryless
winning strategy.  
Since $\psi$ is satisfiable, there is a valid assignment $A$ that 
satisfies every clause.
The memoryless strategy is constructed from the assignment $A$ as follows:
for a clause $C_i$, the strategy chooses a literal as successor that appears 
in $C_i$ and is set to true by the assignment. 
Consider an arbitrary strategy for player~1, and the infinite play: 
the literals visited in the play are all assigned truth values true by $A$,
and the infinite play must visit some literal infinitely often. 
Consider the literal $x$ that appears infinitely often in the play, then the 
complement literal $\overline{x}$ is never visited, and every 
time literal $x$ is visited, the component corresponding to $\overline{x}$ 
decreases by~$1$, and since $x$ appears infinitely often it follows that
the play is winning for player~2 for every finite initial credit.  
It follows that the strategy for player~2 is winning, and the answer to the 
unknown initial credit problem is ``No".

\smallskip\noindent\emph{$\psi$ not satisfiable implies player~1 is winning.} 
We now show that
if $\psi$ is not satisfiable, then player~1 is winning. 
By determinacy, it suffices to show that player~2 is not winning, 
and by existence of memoryless winning strategy for player~2 
(Lemma~\ref{lem:player-two-memoryless}), it suffices to show that there is no 
memoryless winning strategy for player~2. 
Fix an arbitrary memoryless strategy for player~2, (i.e., in every clause
player~2 chooses a literal that appears in the clause). 
If we consider the assignment $A$ obtained from the memoryless strategy, then
since $\psi$ is not satisfiable it follows that the assignment $A$ is 
conflicting.
Hence there must exist clause $C_i$ and $C_j$ and variable $x_k$ such that the 
strategy chooses the literal $x_k$ in $C_i$ and the complement variable 
$\overline{x}_k$ in $C_j$. 
The strategy for player~1 that at the starting state alternates between 
clause $C_i$ and $C_j$, along with that the initial credit of $1$ for the 
component of $x_k$ and $\overline{x}_k$, and~$0$ for all other components, 
ensures that the strategy for player~2 is not winning. 
Hence the answer to the unknown initial credit problem is {\sc Yes}, and we have 
the desired result.
\hfill\qed
\end{proof}

\noindent Observe that our hardness proof works with weights restricted to the 
set $\{-1,0,1\}$. The results of~\cite{Cha10} show that in two dimensions ($k=2$)
the unknown initial credit problem with weights in $\{-1,0,1\}$ can be solved in 
polynomial time. The complexity for fixed dimensions $k \geq 3$ is not known.
With arbitrary integer weights, the unknown initial credit problem for $k=1$ 
is in UP~$\cap$~coUP~\cite{BFLMS08}.

\paragraph{\bf Complexity for memoryless strategies.}
We consider multi-energy games when player~1 is restricted to use 
memoryless strategies. The unknown initial credit problem for memoryless strategies
is to decide, given a multi-weighted two-player game structure $G$, and an initial state $s_0$, 
whether there exist an initial credit vector $v_0 \in \nat^k$ 
and a \emph{memoryless} winning strategy $\lambda_1$ for player~1 
from $s_0$ for the objective $\PosEnergy_G(v_0)$.

\begin{theorem}\label{lemm_memless_1}
The unknown initial credit problem for memoryless strategies is NP-complete.
\end{theorem}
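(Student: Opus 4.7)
The plan is to prove NP-completeness by establishing membership and hardness separately, both of which fit the shape of standard memoryless-strategy analyses.

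For NP membership I would use guess-and-check: guess a memoryless strategy $\lambda_1 : S_1 \to S$ for player~$1$ (a polynomial-size witness), and then verify in polynomial time that some initial credit vector $v_0 \in \nat^k$ makes $\lambda_1$ winning. After fixing $\lambda_1$, the product graph $G_{\lambda_1(s_0)}$ becomes a one-player graph in which only player~$2$ branches. The key claim to establish is that $\lambda_1$ is winning for some finite initial credit iff every cycle reachable from $s_0$ in $G_{\lambda_1(s_0)}$ has nonnegative total weight in every dimension. For the forward direction (by contrapositive), if some reachable cycle has strictly negative sum in coordinate~$i$, player~$2$ reaches it and loops it forever, driving $\EL(\cdot)_i$ to $-\infty$ and beating any finite initial credit. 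For the converse, if every reachable cycle is nonnegative in every dimension, then decomposing any play prefix into simple cycles plus a simple path shows that $\EL(\rho)$ is bounded below coordinate-wise by a constant depending only on $|S|$ and the maximum absolute weight; taking $v_0$ larger than this bound coordinate-wise suffices. The verification step is polynomial: for each coordinate $i \in \{1,\dots,k\}$, run Bellman--Ford (or any reachable-negative-cycle detection) on $G_{\lambda_1(s_0)}$ with weights projected to coordinate~$i$.

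For NP-hardness I would reduce from 3-SAT. A memoryless strategy for player~$1$ forces an irrevocable commitment at each player-$1$ state, which is the natural mechanism for encoding a Boolean assignment: variable gadgets controlled by player~$1$ choose a truth value, and player~$2$ then challenges by selecting a clause and a literal in it, much as in the coNP-hardness construction of Lemma~\ref{thrm_hard} but now with the roles re-tuned so that player~$1$'s choice (rather than player~$2$'s) encodes the assignment. The multi-weight labeling is designed so that a violating assignment creates a reachable cycle negative in some coordinate (player~$2$ wins), while a satisfying assignment makes every reachable cycle nonnegative in every coordinate (player~$1$ wins with suitable initial credit). The main obstacle, I expect, is meeting the paper's stronger claim that hardness persists with weights restricted to $\{-1,0,1\}$ and with dimension as small as~$2$: a straightforward reduction naturally uses one dimension per variable or per clause, so compressing all the 3-SAT bookkeeping into only two $\{-1,0,1\}$-weighted counters while preserving the player-$1$/player-$2$ interaction will require a substantially more delicate gadget, which is the technical heart of the lower bound.
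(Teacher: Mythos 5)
Your NP-membership argument is essentially the paper's: guess the memoryless strategy, observe that the resulting graph only has player-2 choices, and check per dimension that no reachable negative cycle exists (the paper phrases the check as solving one-dimensional energy games on player-2-only graphs, which is the same thing). That half is fine.

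The hardness half is a genuine gap: you do not actually give a reduction, and the route you sketch is problematic. You propose ``re-tuning'' the construction of Lemma~\ref{thrm_hard} so that player~1's memoryless choices encode an assignment, but with that hub-and-spoke architecture (return to the initial state after each clause) every reachable cycle decomposes into per-clause loops, each involving only \emph{one} of player~1's choices; player~1's fixed strategy is then winning iff every individual clause-loop is componentwise nonnegative, so a conflict between choices made in two \emph{different} clauses can never be detected by any cycle, and no choice of edge weights fixes this without changing the graph structure. Moreover, the weights of Lemma~\ref{thrm_hard} are designed so that repeatedly visiting a single literal is \emph{losing} for the energy player, which is exactly backwards for your intended direction. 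You acknowledge that the real gadget work (and a fortiori the dimension-2, weights-in-$\{-1,0,1\}$ strengthening) is left open, so the lower bound is not established. The paper takes a different and much shorter route: it reduces from the NP-complete two vertex-disjoint paths problem of~\cite{FHW80}, building a \emph{one-player} game in dimension~2 where two added edges $(x,y)$ and $(z,w)$ carry weights $(n,-1)$ and $(-1,n)$ and all other edges carry $(-1,-1)$; the point is that a memoryless strategy from $w$ can only trace a simple cycle, so nonnegativity in both coordinates forces exactly a simple cycle through $w,x,y,z$, i.e., two disjoint simple paths. This exploits memorylessness itself (simple cycles) rather than an assignment encoding, and the $\{-1,0,1\}$ restriction follows by subdividing the weight-$n$ edges. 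To complete your proof you would either need to work out a correct 3-SAT gadget with a richer cycle structure, or switch to a disjoint-paths-style reduction as the paper does.
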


\begin{proof}  
The inclusion in NP is obtained as follows: the polynomial witness is the
memoryless strategy for player~1, and once the strategy is fixed we obtain 
a game graph with choices for player~2 only. 
The verification is to checks that for every 
dimension there is no negative cycle, and it can be achieved in polynomial time by
solving one-dimensional energy games on graphs with choices for player~2 only~\cite{CAHS03,BFLMS08}.

The NP hardness follows from a result of~\cite{FHW80} where, given a directed graph 
and four vertices $w,x,y,z$, the problem of deciding the existence of two disjoint 
simple paths (one from $w$ to $x$ and the other from $y$ to $z$) is shown to be NP-complete. 
Given such a graph and vertices, construct a one-player game by $(1)$ adding the edges 
$(x,y)$ with weight $(n,-1)$ and $(z,w)$ with weight $(-1,n)$ (where $n$ is the 
number of vertices in the graph), and $(2)$ assigning all other edges of the graph the weight $(-1,-1)$. 
In the resulting one-player game, a winning memoryless strategy from $w$ must induce
a simple cycle through $w,x,y,z$ to ensure nonnegative sum of weights in 
the two dimensions. This show that the unknown initial credit problem for memoryless strategies
is at least as hard as the decision problem of~\cite{FHW80}, and thus NP-hard.
The NP-completeness result follows.
\hfill\qed
\end{proof}

The reduction in the proof of Theorem~\ref{lemm_memless_1} can be obtained
with weights in $\{-1,0,1\}$ by replacing the edges with weight $n$ by 
a sequence of $n$ edges with weight $1$. The reduction remains polynomial.
Theorem~\ref{lemm_memless_1} shows NP-hardness for dimension $k=2$ 
and weights in $\{-1,0,1\}$. For $k=1$, the problem is solvable in polynomial 
time with weights in $\{-1,0,1\}$, and for arbitrary integer weights, 
the problem is in UP $\cap$ coUP~\cite{BFLMS08,BCDGR09}.

\paragraph{\bf Equivalence with multi-mean-payoff games under finite-memory strategies.}
We show that multi-mean-payoff games where the players are restricted 
to play finite-memory strategies are log-space equivalent to multi-energy games. 
The result of Lemma~\ref{thrm_inter}
shows that the unknown initial credit problem (for multi-energy games) and
the mean-payoff threshold problem (with finite-memory strategies) are equivalent.

Note that if the players use finite-memory strategies, then the outcome $\pi$ is 
ultimately periodic (a play $\pi=s_0 s_1 \dots s_n \dots$ is {\em ultimately periodic} if 
it can be decomposed as $\pi=\rho_1 \cdot \rho_2^{\omega}$ where $\rho_1$ and 
$\rho_2$ are two finite sequences of states) and therefore, the value of $\MPsup(\pi)$
and $\MPinf(\pi)$ coincide. We denote by $\MeanPayoff_G$ the set of
ultimately periodic plays satisfying the multi-mean-payoff objective
$\MeanPayoffInf_G$ (or equivalently, satisfying $\MeanPayoffSup_G$).

\begin{lemma}\label{thrm_inter}
\label{lem:energy-mean-payoff-reduction}
For all multi-weighted two-player game structures, the answer to the 
unknown initial credit problem is {\sc Yes} if and only if the answer to the 
mean-payoff threshold problem under finite-memory strategies is {\sc Yes}.
\end{lemma}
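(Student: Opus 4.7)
The plan is to prove both directions by exploiting the fact that when both players use finite-memory strategies, the outcome is ultimately periodic of the form $\rho_1 \cdot \rho_2^\omega$, so the long-run energy behavior and the mean-payoff value are both governed by the cycle sum $\Sigma(\rho_2) := \sum_i w(s_i, s_{i+1})$ along $\rho_2$. The bridge is the elementary observation that an ultimately periodic play $\pi = \rho_1 \cdot \rho_2^\omega$ satisfies $\MP(\pi) = \Sigma(\rho_2)/|\rho_2|$, and the running energy $v_0 + \EL(\pi(n))$ stays bounded below only if every dimension of $\Sigma(\rho_2)$ is nonnegative.

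For the ($\Rightarrow$) direction, suppose $\lambda_1^{\sf FM}$ is a finite-memory strategy for player~1 winning the unknown initial credit problem with some $v_0 \in \nat^k$. By Theorem~\ref{thrm_gen_energy_fin} it suffices to test $\lambda_1^{\sf FM}$ against arbitrary finite-memory strategies $\lambda_2$ of player~2; any such outcome $\pi = \rho_1 \cdot \rho_2^\omega$ is ultimately periodic. If some coordinate of $\Sigma(\rho_2)$ were negative, then $v_0 + \EL(\pi(n))$ would eventually become negative in that coordinate, contradicting $\pi \in \PosEnergy_G(v_0)$. Hence $\Sigma(\rho_2) \geq \{0\}^k$ and $\MP(\pi) \geq \{0\}^k$, so $\lambda_1^{\sf FM}$ is winning for the mean-payoff threshold problem under finite-memory strategies.

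For the ($\Leftarrow$) direction, suppose $\lambda_1^{\sf FM}$ is a finite-memory strategy for player~1 winning the mean-payoff threshold problem against all finite-memory strategies of player~2. Consider the finite product graph $G' = G_{\lambda_1^{\sf FM}}(s_0)$, which is a one-player graph for player~2. I claim every simple cycle $C$ reachable in $G'$ satisfies $\Sigma(C) \geq \{0\}^k$: otherwise, player~2 would have a memoryless strategy reaching $C$ and looping through it, producing an ultimately periodic play with some coordinate of $\MP$ strictly negative, contradicting the mean-payoff objective. Given this, set $v_0 := N \cdot W \cdot \mathbf{1}$ where $N = |V(G')|$ and $W$ is the maximum absolute value of any coordinate of any edge weight in $G$. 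For any play $\pi$ in $G'$ and any prefix $\pi(n)$, one can iteratively excise simple cycles from $\pi(n)$; each excised cycle is reachable in $G'$ and contributes a nonnegative vector to the sum, while the residual path is simple of length at most $N$. Hence $\EL(\pi(n)) \geq -N \cdot W \cdot \mathbf{1}$ in every coordinate, so $v_0 + \EL(\pi(n)) \geq \{0\}^k$, showing that $\lambda_1^{\sf FM}$ is winning for the energy objective with initial credit $v_0$.

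The main obstacle is the ($\Leftarrow$) direction: one must reason about arbitrary (not just ultimately periodic) player-2 behavior once $\lambda_1^{\sf FM}$ is fixed, and convert the cycle-sum property into a concrete bound on the needed initial credit. The cycle-excision argument is the standard way to extract this bound and converts the qualitative mean-payoff guarantee into the quantitative energy guarantee.
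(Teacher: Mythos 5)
Your proposal is correct and follows essentially the same route as the paper: both directions reduce to the observation that the fixed finite-memory strategy's product graph $G_{\lambda_1}$ must have nonnegative weight vectors on all reachable (simple) cycles, and the initial credit $nW$ in every dimension is then extracted exactly as in the paper via the bound on acyclic path lengths (your cycle-excision step). The only cosmetic difference is that you phrase the forward direction through ultimately periodic outcomes against finite-memory counter-strategies rather than directly through cycles of $G_{\lambda_1}$, which is the same underlying argument.
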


\begin{proof}  
Let $G$ be multi-weighted two-player game structure of dimension $k$.
First, assume that there exists a winning strategy $\lambda_1$ for player~$1$
in~$G$ for the energy objective $\PosEnergy_G(v_0)$ (for some $v_0$). 
Theorem~\ref{thrm_gen_energy_fin} establishes that finite memory is sufficient to win
multi-energy games, so we can assume that $\lambda_1$ has finite memory.
Consider the restriction of the graph $G_{\lambda_1}$ to the reachable vertices,
and we show that the energy vector of every simple cycle is nonnegative. By contradiction,
if there exists a simple cycle with energy vector negative in one dimension,
then the infinite path that reaches this cycle and loops through it forever
would violate the objective $\PosEnergy_G(v_0)$ regardless of the vector $v_0$.
Now, this shows that every reachable cycle in $G_{\lambda_1}$ has nonnegative
mean-payoff value in all dimensions, hence $\lambda_1$ is winning for the 
multi-mean-payoff objective $\MeanPayoff_G$.

Second, assume that there exists a finite-memory strategy $\lambda_1$ for player~$1$
that is winning in~$G$ for the multi-mean-payoff objective $\MeanPayoff_G$. 
By the same argument as above, all simple cycles in $G_{\lambda_1}$ are nonnegative 
and the strategy $\lambda_1$ is also winning for the objective $\PosEnergy_G(v_0)$ 
for some $v_0$. Taking $v_0 = \{n W\}^k$ where $n$ is the number of states 
in~$G_{\lambda_1}$ (which bounds the length of the acyclic paths) and $W \in \integers$ 
is the largest weight in the game suffices.
\hfill\qed
\end{proof}

Note that the result of Lemma~\ref{thrm_inter} does not hold for arbitrary
strategies as shown in the following lemma.

\begin{lemma}\label{lemm_inf_power}
In multi-mean-payoff games, in general infinite-memory strategies are required 
for winning (i.e., in general, finite-memory strategies are not sufficient 
for winning). 
\end{lemma}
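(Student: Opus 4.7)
The plan is to exhibit a concrete one-player game that separates finite-memory from infinite-memory winning strategies for a multi-mean-payoff objective. I would take $G$ to consist of a single player-$1$ state $s_0$ carrying two self-loops $e_1$ and $e_2$ of respective weights $(1,-2)$ and $(-2,1)$, with objective $\MeanPayoffSup_G$ (i.e.\ $\MPsup(\pi)_j \geq 0$ for both $j \in \{1,2\}$).

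First I would show that no finite-memory strategy is winning. Since $s_0$ is the only state, every play consistent with a finite-memory strategy is ultimately periodic, $\pi = \rho_1 \cdot \rho_2^{\omega}$ with $|\rho_2|\geq 1$, and therefore $\MPsup(\pi)=\MPinf(\pi)$ equals the average weight over $\rho_2$. Writing $p$ and $q$ for the number of occurrences of $e_1$ and $e_2$ in $\rho_2$, this average equals $\tfrac{1}{p+q}(p-2q,\; q-2p)$. Both coordinates being nonnegative forces $p \geq 2q$ and $q \geq 2p$, whence $p=q=0$, contradicting $|\rho_2|\geq 1$.

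Next I would construct an infinite-memory winning strategy $\lambda_1$ as a concatenation of blocks, chosen adaptively from the current energy. Inductively, if at the start of block $k$ the accumulated energy is $(E_1,E_2)$, play $e_1$ for $a_k := \max(1,-E_1)$ steps and then play $e_2$ for $b_k := \max(1, 2a_k - E_2)$ steps. By construction, at the end of the $e_1$-portion of block $k$ the coordinate-$1$ energy is $E_1+a_k\geq 0$, so the running coordinate-$1$ average at that position is $\geq 0$; at the end of block $k$ the coordinate-$2$ energy is $E_2 -2a_k + b_k \geq 0$, so the running coordinate-$2$ average at that position is $\geq 0$. Both kinds of events occur at positions tending to infinity, yielding $\MPsup(\pi)_1 \geq 0$ and $\MPsup(\pi)_2 \geq 0$.

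The main technical point is checking that the recursive definition of $a_k,b_k$ stays well-defined: each energy coordinate remains an integer (finite at every stage, though it may grow in magnitude as $k$ increases), so $a_k,b_k$ are well-defined positive integers and the $\limsup$ conditions follow. The conceptual heart is that the convex hull of the two cycle mean-payoff vectors $(1,-2)$ and $(-2,1)$ intersects $\mathbb{R}^{2}_{\geq 0}$ only at the origin, which rules out any ultimately periodic winning play, while the $\MPsup$ objective is lax enough that the two coordinates may be satisfied at different positions by alternating increasingly long $e_1$-blocks and $e_2$-blocks — a behavior inherently unavailable to a finite-memory strategy.
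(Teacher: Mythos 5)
Your proof is correct, and in spirit it follows the same template as the paper's: pick a small one-player game, use the fact that a finite-memory strategy in such a game yields an ultimately periodic play whose $\MPsup$ and $\MPinf$ coincide with the average of the periodic part (which cannot be nonnegative in both coordinates), and then win with infinite memory by alternating blocks whose lengths grow, so that each coordinate has nonnegative running average at infinitely many positions. Your energy-resetting choice of block lengths $a_k,b_k$ makes the verification of the positive direction particularly clean, since energy $\geq 0$ at a checkpoint immediately gives running average $\geq 0$ there. The one substantive difference is the choice of witness: the paper uses two states with self-loops weighted $(2,0)$ and $(0,2)$ and thresholds $(1,1)$ for $\MPinf$ and $(2,2)$ for $\MPsup$, which simultaneously shows that infinite memory is needed (and helps) for \emph{both} the $\MeanPayoffSup$ and the $\MeanPayoffInf$ variants — a fact the paper later invokes (e.g., item~1 of Theorem~\ref{thrm:inf}). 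Your game cannot serve that second purpose: since every edge has coordinate sum $-1$, every play satisfies $\MPinf(\pi)_1+\MPsup(\pi)_2\leq -1$, so the $\MeanPayoffInf$ objective with threshold $(0,0)$ is unachievable there even with infinite memory. Thus your example proves the lemma as stated (it is an instance of a multi-mean-payoff objective), but to support the later uses for mean-payoff-inf objectives one needs an example like the paper's, or a second example for the inf case.
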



\begin{proof} 
The example of \figurename~\ref{fig:crazy} shows a one-player game. 
We claim that $(a)$ for $\MPinf$, player~$1$ can achieve
a threshold vector $(1,1)$, and $(b)$ for $\MPsup$,
player~$1$ can achieve a threshold vector $(2,2)$; 
$(c)$ if we restrict player~$1$ to use a finite-memory strategy, 
then it is not possible to win the multi-mean-payoff objective with threshold $(1,1)$
(and thus also not with $(2,2)$). 
To prove $(a)$, consider the strategy
that visits $n$ times $s_a$ and then $n$ times $s_b$, and repeats this forever
with increasing value of $n$. This guarantees a mean-payoff
vector $(1,1)$ for $\MPinf$ because in the long-run roughly half of the 
time is spent in $s_a$ and roughly half of the time in $s_b$.
To prove~$(b)$, consider the strategy that
alternates visits to $s_a$ and $s_b$ such that after the $n$th alternation,
the self-loop on the visited state $s$ ($s \in \{s_a,s_b\}$) is taken so
many times that the average frequency of $s$ gets larger than~$\frac{1}{n}$
in the current finite prefix of the play.
This is always possible and achieves threshold $(2,2)$ for $\MPsup$.
Note that the above two strategies require infinite memory. To prove $(c)$,
recall that finite-memory strategies produce an ultimately periodic play
and therefore $\MPinf$ and $\MPsup$ coincide. 
It is easy to see that such a play cannot achieve $(1,1)$ because the periodic
part would have to visit both $s_a$ and $s_b$ and then the mean-payoff vector $(v_1,v_2)$
of the play would be such that $v_1 + v_2 < 2$ and thus $v_1 = v_2 = 1$ is
impossible.
\hfill\qed
\end{proof}

\begin{figure}[!tb]
 \begin{center}

\begin{picture}(48,28)(0,0)


\node[Nmarks=i, iangle=180](n0)(10,12){$s_a$}
\node[Nmarks=n](n1)(40,12){$s_b$}

\drawloop[ELside=l,loopCW=y, loopdiam=6](n0){$(2,0)$}
\drawloop[ELside=l,loopCW=y, loopdiam=6](n1){$(0,2)$}


\drawedge[ELpos=50, ELside=l, ELdist=0.5, curvedepth=6](n0,n1){$(0,0)$}
\drawedge[ELpos=50, ELside=l, curvedepth=6](n1,n0){$(0,0)$}


\end{picture}
   \caption{A multi-mean-payoff game where infinite memory is necessary 
   to win (Lemma~\ref{lemm_inf_power}).}\label{fig:crazy}
 \end{center}
\end{figure}
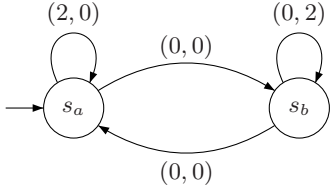

\noindent Lemma~\ref{thrm_inter} and Lemma~\ref{lemm_inf_power} 
along with Theorem~\ref{thrm_gen_energy_fin} give the following result.

\begin{theorem}\label{thrm_gen_mean}
Multi-mean-payoff games are determined under finite-memory, but not determined by finite-memory 
(i.e., winning strategies in general require infinite-memory, and 
determinacy and determinacy under finite-memory do not coincide).
For multi-mean-payoff objectives $\varphi$ we have $\win{1}^{finite}\varphi=\win{1}^{fin-inf}\varphi$. 
\end{theorem}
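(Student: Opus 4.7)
The plan is to prove the three claims (determinacy under finite-memory, failure of determinacy by finite-memory, and $\win{1}^{finite}\varphi = \win{1}^{fin-inf}\varphi$) separately, each by combining Lemma~\ref{thrm_inter}, Theorem~\ref{thrm_gen_energy_fin}, and Lemma~\ref{lemm_inf_power}.

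\textbf{Step 1 (Determinacy under finite-memory).} Fix a state $s_0 \not\in \win{1}^{finite}\varphi$. By Lemma~\ref{thrm_inter}, player~1 loses the unknown initial credit problem from $s_0$. By Theorem~\ref{thrm_gen_energy_fin} (and Lemma~\ref{lem:player-two-memoryless}), player~2 then has a memoryless, hence finite-memory, strategy $\lambda_2$ that is winning in the multi-energy game against all strategies of player~1. I will argue that this same $\lambda_2$ also defeats every finite-memory strategy of player~1 for the multi-mean-payoff objective: given any finite-memory $\lambda_1$, the play $\outcome_G(s_0,\lambda_1,\lambda_2)$ is ultimately periodic with some periodic cycle $C$; if $C$ were nonnegative in all dimensions, then for sufficiently large initial credit player~1 would satisfy $\PosEnergy$ by playing $\lambda_1$, contradicting that $\lambda_2$ is energy-winning. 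Hence $\EL(C)$ is negative in some coordinate, so both $\MPsup$ and $\MPinf$ are negative in that coordinate, and $s_0 \in \win{2}^{finite}\lnot\varphi$.

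\textbf{Step 2 (Not determined by finite-memory).} Lemma~\ref{lemm_inf_power} exhibits a one-player game and a threshold vector such that player~1 has a winning strategy but no finite-memory winning strategy for the multi-mean-payoff objective. This gives a state $s_0$ with $s_0 \in \win{1}\varphi \setminus \win{1}^{finite}\varphi$, which immediately shows that the two winning sets differ and therefore that determinacy by finite-memory fails (since determinacy itself holds for multi-mean-payoff games by Martin's theorem).

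\textbf{Step 3 ($\win{1}^{finite}\varphi = \win{1}^{fin-inf}\varphi$).} The inclusion $\win{1}^{fin-inf}\varphi \subseteq \win{1}^{finite}\varphi$ is immediate because a finite-memory strategy that beats every player~2 strategy in particular beats every finite-memory one. For the converse, suppose $s_0 \in \win{1}^{finite}\varphi$. By Lemma~\ref{thrm_inter}, player~1 wins the unknown initial credit problem from $s_0$, and by Theorem~\ref{thrm_gen_energy_fin} he has a finite-memory strategy $\lambda_1$ winning the multi-energy game against \emph{all} strategies of player~2. As in the forward direction of Lemma~\ref{thrm_inter}, the restriction of $G_{\lambda_1}$ to the states reachable under arbitrary player-2 play has every simple cycle nonnegative in all dimensions, and hence every infinite play consistent with $\lambda_1$ has $\MPinf \ge 0$ coordinatewise. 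Therefore $\lambda_1$ witnesses $s_0 \in \win{1}^{fin-inf}\varphi$.

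The only delicate point is ensuring, in Step~1, that player~2's energy-winning strategy also wins the mean-payoff game against every finite-memory player-1 strategy; this is the main (but short) obstacle, and it is handled by the ``ultimately periodic outcome has a cycle of fixed energy sign'' observation above.
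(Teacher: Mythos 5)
Your proposal is correct and follows essentially the same route as the paper, which derives the theorem directly from Lemma~\ref{thrm_inter}, Lemma~\ref{lemm_inf_power}, and Theorem~\ref{thrm_gen_energy_fin}; you merely spell out the glue the paper leaves implicit (in particular, that player~2's memoryless energy-spoiling strategy defeats every finite-memory player-1 strategy via the ultimately-periodic-cycle argument, and that a finite-memory energy-winning strategy ensures $\MPinf \geq 0$ against arbitrary player-2 strategies). These filled-in steps are sound, so no changes are needed.
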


\section{Multi-Mean-Payoff Games}\label{sec:multi-mean-payoff}

In this section we consider multi-mean-payoff games with infinite-memory 
strategies (we have already shown in the previous section that multi-mean-payoff
games with finite-memory strategies coincide with multi-energy games).
We present the following complexity results for the mean-payoff 
threshold problem:
(1)~NP~$\cap$~coNP for conjunction of $\MeanPayoffSup$ objectives; 
(2)~coNP-completeness for conjunction of $\MeanPayoffInf$ objectives; and 
(3)~coNP-completeness for conjunction of mean-payoff-inf and mean-payoff-sup objectives.

\subsection{Conjunction of $\MeanPayoffSup$ objectives}\label{subsec:mean-sup}
We consider multi-weighted two-player game structures with the multi-mean-payoff 
objective $\MeanPayoffSup_G =\{ \pi \in \Plays(G) \mid \MPsup(\pi) \geq  (0,0,\ldots,0)\}$)
for player~$1$.
In general winning strategies for player~1 require infinite memory.
We show that memoryless winning strategies exist for player~2 and 
we present a reduction of the decision problem for a conjunction of $k$ mean-payoff-sup
objectives to solving polynomially many instances of the decision problem for single 
mean-payoff-sup objective.
As a consequence the decision problem for $\MeanPayoffSup_G$ lies in NP $\cap$ coNP, 
and we obtain a pseudo-polynomial time algorithm for this problem.

In the next lemma we show that if player~1 can satisfy the 
$\MeanPayoffSup$ objective in every individual dimension from all 
states, then player~1 can satisfy the conjunctive $\MeanPayoffSup$ 
objective from all states. The converse holds trivially.
The main idea of the proof is as follows: 
for each $1 \le i \le k$, let 
$\lambda_1^i$ be a winning strategy for player~1 for the objective 
$\MeanPayoffSup_i$.
Intuitively, the winning strategy for the conjunction of mean-payoff-sup objective plays 
$\lambda_1^i$, until the mean-payoff value on dimension $i$ gets larger than a 
number very close to $0$, and then switches to the strategy to $\lambda_1^{(i\pmod{k}) + 1}$, 
etc.
This way player~1 ensures nonnegative mean-payoff-sup value in every dimension.
We present the proof formally below.
While memoryless winning strategies exist for each individual dimension, 
we present a proof that does not use the assumption of witness memoryless 
winning strategies for individual dimensions. A similar proof technique is
used later where memoryless winning strategies for each individual dimension
are not guaranteed to exist.

\begin{lemma} \label{lem:sup}
If for all states $s \in S$ and for all $1 \le i \le k$, player~1 
has a winning strategy from $s$ for the objective 
$\MeanPayoffSup_i=\set{\pi \in \Plays \mid (\MPsup(\pi))_i \geq 0}$
(player~1 has winning strategies for each individual dimension),
then for all states $s\in S$, player~1 has a winning strategy 
from $s$ for the objective
$\MeanPayoffSup=\set{\pi \in \Plays \mid \MPsup(\pi) \geq (0,0,\ldots,0)}$.
\end{lemma}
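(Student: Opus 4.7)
The plan is to combine the per-dimension winning strategies $\lambda_1^1, \dots, \lambda_1^k$ (each a winning strategy from every state for $\MeanPayoffSup_i$) into a single strategy $\lambda_1^*$ that cycles through the dimensions, in each round driving the running average of a chosen dimension above a vanishing negative threshold before moving on. Since $\MPsup$ only requires that the partial averages in dimension $i$ be $\geq -\varepsilon$ \emph{infinitely often} for every $\varepsilon > 0$, it suffices to guarantee that at infinitely many time points the partial average in each dimension is above $-\varepsilon$.

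The strategy $\lambda_1^*$ is defined in phases indexed by $n \in \nat$; phase $n$ consists of $k$ subphases, one per dimension. At the start of subphase $(n,i)$, reaching the current history at time $T_{n,i}$ and state $s$, player~$1$ switches to $\lambda_1^i$ played from $s$ (i.e., treating the current position as the new starting state for $\lambda_1^i$) and continues to play $\lambda_1^i$ until the partial average $\frac{1}{t}\EL(\pi(t))_i$ computed from the very beginning of the play satisfies $\frac{1}{t}\EL(\pi(t))_i \geq -\frac{1}{n}$ at some time $t > T_{n,i}$; at that point the subphase ends and the construction moves to subphase $(n, i+1)$, or to phase $n+1$ once $i = k$.

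The central verification is that each subphase $(n,i)$ terminates in finitely many steps. This follows from the decomposition, valid for $t > T_{n,i}$,
\[
\frac{1}{t}\EL(\pi(t))_i \;=\; \frac{1}{t}\EL(\pi(T_{n,i}))_i \,+\, \frac{t - T_{n,i}}{t}\cdot\frac{1}{t - T_{n,i}}\sum_{j=T_{n,i}}^{t-1} w(s_j,s_{j+1})_i,
\]
where the first summand tends to $0$ and the lim-sup of the second factor is $\geq 0$ because $\lambda_1^i$ is winning for $\MeanPayoffSup_i$ from $s$. Hence $\limsup_{t\to\infty}\frac{1}{t}\EL(\pi(t))_i \geq 0$ on the suffix where $\lambda_1^i$ is played, so the threshold $-\frac{1}{n}$ is reached at some finite $t$; the subphase therefore ends, and the outcome of $\lambda_1^*$ is a well-defined infinite play. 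Moreover, the end of each subphase $(n,i)$ is a witness time at which the partial average in dimension $i$ is $\geq -\frac{1}{n}$, giving infinitely many such witnesses for every $i$ and every $n$, and hence $\MPsup(\pi)_i \geq 0$ for all $i$.

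The main obstacle is the termination of subphases: a priori, when we switch to $\lambda_1^i$ the accumulated history may have produced a very negative partial average in dimension $i$, and one must ensure that the winning property of $\lambda_1^i$ from the \emph{current} state suffices to bring the partial average back up to $-\frac{1}{n}$. The displayed identity above resolves this, since the prefix contribution is diluted to $0$ while the suffix contribution has nonnegative lim-sup. No finite-memory assumption on $\lambda_1^i$ is needed, which is why the argument will later transfer to settings where memoryless winning strategies in single dimensions are not guaranteed.
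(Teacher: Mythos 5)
Your proposal is correct and follows essentially the same strategy as the paper's proof: cycle through the dimensions, playing the dimension-$i$ winning strategy from the current state until the running average in dimension $i$ recovers above a vanishing negative threshold, with the crucial point being that each waiting period is finite because the prefix contribution is diluted while the suffix, consistent with a winning strategy from the switch state, has nonnegative limsup. The only difference is bookkeeping: the paper packages the termination argument via pruned strategy trees ($\alpha$-good nodes at depth $Z = L\cdot W/\alpha$) and K\"onig's lemma, whereas you argue it directly from the prefix/suffix decomposition of the partial average, which is equally valid.
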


\begin{proof}
For each $s\in S$ and $1 \leq i \leq k$, let $\lambda_1^i(s)$ be a winning strategy for
player~1 from $s$ for the objective $\MeanPayoffSup_i$, and consider the strategy tree $T_{\lambda_1^{i}(s)}$.
For $\alpha >0$, we say that a node $v$ of $T_{\lambda_1^{i}(s)}$ is an \emph{$\alpha$-good} 
node if the average of the weights of dimension~$i$ of the path from the root 
to $v$ is at least $-\alpha$.
For $Z \in \nat$, let $\widehat{T}_{\alpha}^{i,Z}(s)$ be the tree obtained from
$T_{\lambda_1^{i}(s)}$ by removing all descendants of the $\alpha$-good nodes 
that are at depth at least $Z$.
Hence, all branches of $\widehat{T}_{\alpha}^{i,Z}(s)$
have length at least $Z$, and the leaves are $\alpha$-good nodes.

We show that $\widehat{T}_{\alpha}^{i,Z}(s)$ is a finite tree.
%
By K\"{o}nig's Lemma~\cite{Konig36}, it suffices to show that every path in the 
tree $\widehat{T}_{\alpha}^{i,Z}(s)$ is finite. 
Assume towards contradiction that there is an infinite path 
$\pi$ in $\widehat{T}_{\alpha}^{i,Z}(s)$. 
Then $\pi$ is a play consistent with $\lambda_1^{i}(s)$, and since $\pi$ does not 
contain any $\alpha$-good node beyond depth~$Z$, 
the mean-payoff-sup value of $\pi$ in dimension $i$ is at most $-\alpha$, 
i.e., $(\MPsup(\pi))_i \leq -\alpha$.
This contradicts the assumption that $\lambda_1^{i}(s)$ is a winning strategy 
for player~1 in dimension $i$.

We now describe a strategy for player~1 based on the winning strategies of the 
individual dimensions and show that the strategy is winning for the conjunction
of mean-payoff-sup objectives. Let $W \in \nat$ be the largest absolute value 
of the weight function $w$.
\begin{algorithmic}[1]
\STATE $\alpha \gets 1$ \LOOP \FOR{$i=1$ to $k$} \STATE Let $s$ be
the current state, and $L$ be the length of the play so far.
 \STATE $Z\gets \frac{L \cdot W}{\alpha}$ 
\STATE Play according to $\lambda_1^{i}(s)$ until a leaf 
of $\widehat{T}_{\alpha}^{i,Z}(s)$ is reached.
  \ENDFOR \STATE $\alpha \gets
\frac{\alpha}{2}$ \ENDLOOP
\end{algorithmic}

After the last command in the internal for-loop was executed, 
the mean-payoff value in dimension $i$, is at least 
$\frac{-L\cdot W - m \cdot \alpha}{L + m}$ where $m \geq \frac{L \cdot W}{\alpha}$
and this is at least $\frac{-L\cdot W - m \cdot \alpha}{m} \geq -2 \cdot \alpha$. 

Since $\widehat{T}_{\alpha}^{i,Z}(s)$ is a finite tree,
the main loop gets executed infinitely often (i.e., the strategy does not get stuck in the for-loop)
and $\alpha$ tends to $0$. Thus the supremum of the mean-payoff value is at least $0$ 
in every dimension. Hence the strategy described above is a winning strategy for player~1
for \MeanPayoffSup.
\hfill\qed
\end{proof}

In Lemma~\ref{lem:sup} the winning strategy constructed for player~1 requires 
infinite-memory, and by Lemma~\ref{lemm_inf_power} infinite memory is
required in general. For player~$2$, we show that memoryless winning strategies 
exist, and we derive the algorithmic solution for the mean-payoff threshold problem.

\begin{lemma}\label{lem:sup2}
In multi-mean-payoff games with conjunction of $\MeanPayoffSup$ objectives
for player~$1$, memoryless strategies are sufficient for player~$2$.
\end{lemma}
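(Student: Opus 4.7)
The plan is to proceed by induction on $|S|$, showing that player~$2$ has a memoryless winning strategy from every state in $L = \win{2} \lnot \MeanPayoffSup$. The base case $L = \emptyset$ is trivial. For the inductive step, I would first exploit the fact that $\MeanPayoffSup$ is a tail objective: $L = \AttrTwo{L}$ implies every player-$1$ state in $L$ has all successors in $L$ and every player-$2$ state in $L$ has at least one successor in $L$, so $G \upharpoonright L$ is a valid subgame in which, by determinacy, no state is winning for player~$1$.

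Next, I would apply the contrapositive of Lemma~\ref{lem:sup} inside $G \upharpoonright L$: since player~$1$ has no winning strategy for the conjunction from any state, there must exist some state $s \in L$ and some dimension $i$ from which player~$1$ has no winning strategy for the single-dimension objective $\MeanPayoffSup_i$ in $G \upharpoonright L$. By the classical memoryless determinacy of one-dimensional mean-payoff games~\cite{EM79}, player~$2$ then has a memoryless winning strategy $\sigma_X$ for $\lnot \MeanPayoffSup_i$ on a non-empty set $X \subseteq L$. Taking $Y = \AttrTwo{X}$ in $G \upharpoonright L$, I would extend $\sigma_X$ to a memoryless strategy $\sigma_Y$ on $Y$ using the standard attractor strategy on $Y \setminus X$; since the attractor reaches $X$ in finitely many steps and $\MPsup$ ignores finite prefixes, $\sigma_Y$ wins $\lnot \MeanPayoffSup_i$ on all of $Y$. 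Crucially, by definition of the player-$2$ attractor, every player-$1$ state in $Y$ has all its successors in $Y$, so player~$1$ cannot escape $Y$ once it is entered.

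Now I would restrict to the smaller subgame $G \upharpoonright (L \setminus Y)$, which is well-defined by the standard attractor properties (player-$2$ states in $L \setminus Y$ have all successors in $L \setminus Y$; player-$1$ states have at least one). In this subgame, player~$1$ has a subset of the moves available in $G \upharpoonright L$ while player~$2$'s moves are unchanged, so any winning strategy of player~$2$ in $G \upharpoonright L$ restricts to a winning strategy in $G \upharpoonright (L \setminus Y)$; hence every state of $L \setminus Y$ is losing for player~$1$. Because $Y \supseteq X \neq \emptyset$, the induction hypothesis applies and yields a memoryless winning strategy $\sigma'$ on $L \setminus Y$. Gluing $\sigma_Y$ on $Y$ with $\sigma'$ on $L \setminus Y$ gives a memoryless strategy $\sigma$ on $L$: any $\sigma$-consistent play either remains in $L \setminus Y$ and is won by $\sigma'$, or eventually enters $Y$ and is then trapped in $Y$ and won by $\sigma_Y$ (which defeats even a single dimension of the conjunction).

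The main obstacle is justifying that the inductive step makes strict progress: this relies on showing that $X$ is non-empty, which combines the contrapositive of Lemma~\ref{lem:sup} (to produce a losing state for some individual dimension in $G \upharpoonright L$) with memoryless determinacy of one-dimensional mean-payoff games (to extract a memoryless witness for player~$2$). Everything else is routine attractor bookkeeping together with the tail property of $\MPsup$.
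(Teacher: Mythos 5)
Your proof is correct and takes essentially the same route as the paper's: induction on $\abs{S}$, using (the contrapositive of) Lemma~\ref{lem:sup} together with memoryless determinacy of one-dimensional mean-payoff games to peel off a nonempty region on which a memoryless player-2 strategy falsifies a single $\MeanPayoffSup_i$, then gluing it with the inductive memoryless strategy via the ``eventually enters the peeled region or stays forever outside'' dichotomy — the paper merely peels $\win{2}\lnot\MeanPayoffSup_i$ directly from the full game instead of working inside $L$ and taking $\AttrTwo{X}$. One small repair: the trap properties you invoke (player-1 states of $L$ have all successors in $L$, and player-1 states of $X\subseteq Y$ cannot leave $Y$) do not follow from $L=\AttrTwo{L}$ or from the attractor definition alone, but from the standard fact that a winning region for a prefix-independent objective is a trap for the opponent, which is exactly what is needed here.
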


\begin{proof}
The proof is by induction on the number of states $\abs{S}$ in the game structure.
The base case with $\abs{S} = 1$ is trivial.
We now consider the inductive case with $\abs{S} = n \geq 2$. 
Let $k \in \nat$ be the dimension of the weight function $w$. 
For $i = 1,\dots,k$, let $W_i = \win{2} \lnot \MeanPayoffSup_i$ be the winning 
region for player~2 for the one-dimensional mean-payoff game played in dimension $i$.
(i.e., in $W_i$ player~2 wins for the objective complementary to 
$\MeanPayoffSup_i = \{\pi \in \Plays \mid (\MPsup(\pi))_i \geq 0\}$).
Let $W=\bigcup_{i=1}^k W_i$. We consider the following two cases:

\begin{enumerate}

\item 
If $W=\emptyset$, then player~1 can satisfy the mean-payoff-sup objective
in every dimension, and then by Lemma~\ref{lem:sup} player~1 wins from 
everywhere for the objective $\MeanPayoffSup=\set{\pi \in \Plays \mid 
\MPsup(\pi) \geq (0,0,\ldots,0)}$. Hence there is no winning strategy 
for player~2.

\item If $W\neq\emptyset$, then there exists $1 \leq i \leq k$ such that $W_i \neq \emptyset$.
In $W_i$ there is a memoryless winning strategy $\lambda_2$ for player~2 
to falsify $\MeanPayoffSup_i = \{\pi \in \Plays \mid (\MPsup(\pi))_i \geq 0\}$
since memoryless winning strategies exist for both players in mean-payoff games 
with single objective~\cite{EM79}.
The strategy also falsifies $\MeanPayoffSup=\{\pi \in \Plays \mid \MPsup(\pi) \geq (0,0,\ldots,0)\}$.

Since $W_i$ is a winning region for player~$2$, it follows that $W_i=\AttrTwo{W_i}$,
and the graph $G'$ induced by $S \setminus W_i$ is a game structure.
Let $W' = W \setminus W_i$ be the winning region for player~2 in $G'$. 
By induction hypothesis ($G'$ has strictly fewer states as a 
non-empty set $W_i$ is removed), it follows that there is a memoryless
winning strategy $\lambda_2'$ in $G'$ in the region $W'$.
The winning region $S \setminus (W_i \cup W')$ for player~1 in $G'$ 
is also winning for player~1 in $G$ (since $W_i=\AttrTwo{W_i}$, 
$G'$ is obtained by removing only player~1 edges).
Hence to complete the proof it suffices to show that the memoryless
strategy obtained by combining $\lambda_2$ in $W_i$ and $\lambda_2'$ 
in $W'$ is winning for player~2 from $W_i \cup W'$.
Define the strategy $\lambda_2^*$ as follows:
\[ 
\lambda^*_2(s) = \left\{ 
	\begin{array}{ll}
         \lambda_2(s) & \qquad \mbox{if $s\in W_i$}\\
         \lambda_2'(s) & \qquad \mbox{if $s\in W'$}.
	\end{array} \right. 
\]
Consider the memoryless strategy $\lambda_2^*$ for player~2 and 
the outcome of any counter strategy for player~1 that starts in $W'\cup W_i$.
There are two cases:
(a)~if the play reaches $W_i$, then it reaches in finitely many steps,
and then $\lambda_2$ ensures that player~2 wins; and 
(b)~if the play never reaches $W_i$, then the play always stays in $G'$, 
and now the strategy $\lambda_2'$ 
ensures winning for player~2.
This completes the proof of the second item.
\end{enumerate}
The desired result follows.
\hfill\qed
\end{proof}


\paragraph{\bf Algorithm.} We present Algorithm~\ref{alg:mean-payoff-sup-solve} to solve 
games with conjunction of mean-payoff-sup objectives.
The algorithm maintains the current game structure $G_{\cur}$
induced by the current set of states $S_{\cur}$.
In every iteration of the repeat-loop, for $i=1,\dots,k$, 
we compute the winning region $W_i$ for player~$2$ in the current game structure 
with the single mean-payoff objective on dimension~$i$ by a call
to {\sf SolveSingleMeanPayoffSup}($G_{\cur}, (w)_i$) which returns the 
winning region for player~$1$ in $G_{\cur}$ for the objective $\MeanPayoffSup_i$.
If $W_i$ is nonempty, then we remove $W_i$ from the current game structure and 
the iteration continues.


\begin{algorithm}[h]
\caption{{\sf SolveMeanPayoffSupGame}}
\label{alg:mean-payoff-sup-solve}
{
 \AlgData{A game $G$ with state space $S$ and multi-weight function $w$.}
 \AlgResult{The winning region of player~$1$ for objective $\MeanPayoffSup = \bigcap_{1 \leq i \leq k} \MeanPayoffSup_i$.}
 \Begin{
	\nl $G_{\cur} \gets G$ \;
	\nl $S_{\cur} \gets S$ \;
	\nl \Repeat{${LosingStatesFound} = false$}
	{
		\nl ${LosingStatesFound} \gets false$ \;
		\nl \For{$i=1$ to $k$}
		{
			\nl $W_i\gets S_{\cur} \setminus$ {\sf SolveSingleMeanPayoffSup}($G_{\cur}, (w)_i$) \quad {\tt /* solves $\MeanPayoffSup_i$ */}\;
			\nl \If{$W_i \neq \emptyset$}
			{
				\nl $S_{\cur} \gets S_{\cur} \setminus W_i$  \;
				\nl $G_{\cur} \gets G_{\cur} \upharpoonright S_{\cur}$ \;
				\nl ${LosingStatesFound} \gets true$ \;
			}			
		}
	}
	\nl \KwRet{$S_{\cur}$}\; 
 }
}
\end{algorithm}

In every iteration the set of states removed from the game structure
is certainly winning for player~2.
In the end we obtain a game structure such that player~1 wins the 
mean-payoff objective in every individual dimension from all states, 
and by Lemma~\ref{lem:sup} it follows that the remaining region is winning
for player~1. 
Thus game structures with conjunction of mean-payoff-sup objectives can 
be solved by $O(k \cdot \abs{S})$ calls to solutions of mean-payoff
games with single objective.
The following theorem summarizes the results for multi-weighted
games with conjunction of mean-payoff-sup objectives.

\begin{theorem}\label{thrm:sup}
For multi-weighted two-player game structures with objective 
$\MeanPayoffSup=\set{\pi \in \Plays \mid \MPsup(\pi) \geq (0,0,\ldots,0)}$
for player~1, the following assertions hold:
\begin{enumerate}
\item Winning strategies for player~1 require infinite-memory in general, 
and memoryless winning strategies exist for player~2.

\item The problem of deciding whether a given state is winning for player~1
lies in NP $\cap$ coNP.

\item The set of winning states for player~1 can be computed with  
$k\cdot \abs{S}$ calls to a procedure for solving game structures with 
single mean-payoff objective, hence in pseudo-polynomial time $O(k \cdot \abs{S}^2 \cdot \abs{E} \cdot W)$.

\end{enumerate}

\end{theorem}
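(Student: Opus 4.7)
My approach is to dispatch the three items separately by assembling the machinery already developed.

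For item~(1), the infinite-memory lower bound for player~1 is exactly the content of Lemma~\ref{lemm_inf_power} (the two-state gadget of \figurename~\ref{fig:crazy} achieves threshold $(1,1)$ for $\MPsup$ only with unboundedly growing visit phases), while memoryless sufficiency for player~2 is Lemma~\ref{lem:sup2}. For item~(3), I first argue correctness of Algorithm~\ref{alg:mean-payoff-sup-solve}: each set $W_i$ removed is player-2-winning for a single mean-payoff objective, hence losing for player~1 for the conjunction; moreover $W_i = \AttrTwo{W_i}$, so the remaining subgame is well-formed. When the loop halts, every state in $S_{\cur}$ wins $\MeanPayoffSup_i$ in $G_{\cur}$ for all $i$, and Lemma~\ref{lem:sup} applied to the subgame $G_{\cur}$ lifts this to a winning strategy for the conjunction. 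For the running time, each iteration of the outer repeat-loop either removes at least one state or terminates, giving at most $\abs{S}$ iterations and thus $O(k\cdot \abs{S})$ calls to {\sf SolveSingleMeanPayoffSup}; invoking the Zwick--Paterson algorithm for each call yields the stated $O(k \cdot \abs{S}^2 \cdot \abs{E} \cdot W)$ bound.

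For item~(2), the coNP direction follows from Lemma~\ref{lem:sup2}: a state $s_0$ is losing iff player~2 has a \emph{memoryless} winning strategy $\lambda_2$. The verifier guesses $\lambda_2$, fixes it, and checks in polynomial time that in the resulting one-player graph there exists a dimension $i$ in which player~1 cannot force $\MPsup_i \geq 0$; by the contrapositive of Lemma~\ref{lem:sup} this is equivalent to player~1 losing the conjunction, and single-dimension mean-payoff-sup maximization in a one-player graph (reachable maximum-mean cycle) is polynomial.

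The NP direction is the more delicate part and the main obstacle: a winning strategy of player~1 for the conjunction may need infinite memory, so it cannot serve as an NP witness directly. I therefore use a structural witness. The verifier guesses a set $U \subseteq S$ with $s_0 \in U$ such that $U$ is a trap for player~2 (every edge from $U \cap S_2$ stays in $U$, and every $s \in U \cap S_1$ has at least one successor in $U$), together with, for each dimension $i$, a memoryless strategy $\lambda_1^i$ of player~1 supported inside $U$. Verification has two polynomial-time checks: that $U$ is trap-closed (immediate from the edge list), and that for every $i$ and every $s \in U$ the strategy $\lambda_1^i$ witnesses $\MeanPayoffSup_i$ from $s$ in $G \upharpoonright U$. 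The second check reduces to verifying the absence of a reachable negative-mean cycle in dimension $i$ in the graph obtained by fixing $\lambda_1^i$. By uniform memoryless determinacy of single-dimension mean-payoff games, whenever $U$ is actually player-1-winning for the conjunction a suitable uniform $\lambda_1^i$ exists, so a valid polynomial-size certificate exists. Conversely, given a valid certificate, Lemma~\ref{lem:sup} applied to $G \upharpoonright U$ yields that player~1 wins $\MeanPayoffSup$ from every state of $U$, in particular from $s_0$. This places the problem in NP, and combined with coNP-membership finishes item~(2).
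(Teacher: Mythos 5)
Your coNP argument for item~(2) has a genuine gap. After guessing a memoryless strategy $\lambda_2$ for player~2, you check whether there is a \emph{single} dimension $i$ such that player~1 cannot force $\MPsup_i \geq 0$ from $s_0$ in the one-player graph $G_{\lambda_2}$, and you claim this is equivalent to player~1 losing the conjunction ``by the contrapositive of Lemma~\ref{lem:sup}''. The contrapositive of Lemma~\ref{lem:sup} only says that if the conjunction is lost from \emph{some} state then \emph{some} state loses \emph{some} dimension; it does not say that losing the conjunction from $s_0$ forces a single dimension to be lost from $s_0$, and this stronger statement is false. Counterexample: take $S_2=\emptyset$ (so the only $\lambda_2$ is trivial), with $s_0$ having edges to two absorbing states $a$ and $b$ whose self-loops carry weights $(1,-1)$ and $(-1,1)$. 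Player~1 loses $\MeanPayoffSup$ from $s_0$, yet wins $\MeanPayoffSup_1$ (go to $a$) and $\MeanPayoffSup_2$ (go to $b$) individually from $s_0$, so your verifier rejects the only possible certificate and the negative instance has no accepted witness: completeness of the certificate scheme fails. The paper's verification avoids this by actually \emph{solving the conjunction} on $G_{\lambda_2}$: it runs Algorithm~\ref{alg:mean-payoff-sup-solve} on the one-player graph, i.e.\ iteratively removes states losing some individual dimension, which costs $k\cdot\abs{S}$ calls to Karp's algorithm and is still polynomial. Your coNP paragraph can be repaired by substituting this (or an equivalent one-player characterization, e.g.\ existence of a reachable strongly connected subgraph containing, for every dimension, a cycle with nonnegative mean in that dimension), but as written the step fails.

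The rest is essentially the paper's route: item~(1) via Lemmas~\ref{lemm_inf_power} and~\ref{lem:sup2}, item~(3) via Algorithm~\ref{alg:mean-payoff-sup-solve} and Lemma~\ref{lem:sup}, and your NP certificate (player-2 trap $U$ plus uniform memoryless $\lambda_1^i$ per dimension, verified by absence of reachable negative cycles) is a slightly more explicit version of the paper's ``guess the winning region $W$ and per-dimension memoryless strategies'' argument; making the trap condition explicit is a reasonable strengthening, and the existence of the certificate from $U=\win{1}\MeanPayoffSup$ uses prefix-independence exactly as you indicate. One minor inaccuracy in item~(3): the stated bound $O(k\cdot\abs{S}^2\cdot\abs{E}\cdot W)$ relies on the $O(\abs{S}\cdot\abs{E}\cdot W)$ algorithm of~\cite{BCDGR09} per call, not on Zwick--Paterson, whose per-call cost is higher and would give a worse pseudo-polynomial bound.
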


The results of Theorem~\ref{thrm:sup} are proved as follows. 
Item~$1$ follows from Lemma~\ref{lemm_inf_power} and Lemma~\ref{lem:sup2}. 
Item~$3$ follows from Algorithm~\ref{alg:mean-payoff-sup-solve} and the 
results of~\cite{BCDGR09} where an algorithm  is given for games with 
single mean-payoff objectives that works in time $O(\abs{S} \cdot \abs{E} \cdot W)$.
We now present the details of Item~$2$ in two parts.
(1)~\emph{(In NP).} The NP algorithm guesses the winning region $W$ for 
player~1, and a memoryless winning strategy $\lambda_1^i$ for every 
individual dimension $i$ (such memoryless winning strategies for every 
individual dimension exist by the results of~\cite{EM79}). 
The verification procedure checks in polynomial time that for every dimension 
$i$ the set $W$ is the winning set for player~1 in the graph 
$G_{\lambda_1^i}$ using the polynomial time algorithm of~\cite{Karp}.
The correctness (that is, the existence of winning strategy in every individual dimension
implies winning for the conjunction) follows from Lemma~\ref{lem:sup}.
(2)~\emph{(In coNP).} The coNP algorithm guesses a memoryless winning 
strategy $\lambda_2$ for player~2. 
The verification procedure needs to solve mean-payoff-sup objectives 
for the graph $G_{\lambda_2}$ and by Algorithm~\ref{alg:mean-payoff-sup-solve}
this can be solved with $k \cdot \abs{S}$ calls to the polynomial time 
algorithm of~\cite{Karp} to solve graphs with single mean-payoff objectives.
Thus we have the polynomial-time verification procedure, and the coNP 
complexity bound follows.


\subsection{Conjunction of $\MeanPayoffInf$ objectives}\label{subsec:mean-inf}
We consider multi-weighted two-player game structures, and 
the multi-mean-payoff-inf objective 
$\MeanPayoffInf=\{ \pi \in \Plays(G) \mid \MPinf(\pi)\geq (0,0,\ldots,0)\}$)
for player~$1$.
In general winning strategies for player~1 require infinite memory (Lemma~\ref{lemm_inf_power}).
We show that memoryless winning strategies exist for player~2, and 
the threshold problem is coNP-complete.

\paragraph{\bf Memoryless strategies for player~2.} The objective 
for player~2 is the complementary objective of player~1. 
It follows from the results of~\cite{Kop} that memoryless winning 
strategies exist for player~2 (see Appendix for discussion).

\paragraph{\bf Complexity.} We show that the problem of deciding whether 
a given state is winning for player~1 in multi-weighted game structures 
with conjunction of mean-payoff-inf objectives is coNP-complete. 
We first argue about the coNP lower bound.

\smallskip\noindent{\bf coNP lower bound.}
The proof is essentially the same as the proof of Lemma~\ref{thrm_hard} and relies
on the existence of memoryless winning strategies for player~2.
We consider the hardness proof of Lemma~\ref{thrm_hard} and the reduction 
used in the lemma.
If the formula is satisfiable, then consider the memoryless winning strategy 
for player~2 constructed from the satisfying assignment. 
Consider an arbitrary strategy (possibly with infinite-memory) for player~1.
Since the strategy for player~2 is constructed from a non-conflicting 
assignment, it follows that conflicting literals do not appear.
Within every three steps some literal is visited. If $n$ is the number of 
variables, then in any play prefix compatible with the strategy of player~$2$,
the frequency of the literal~$x$ with highest frequency in this prefix is at least
$\frac{1}{3\cdot (n+1)}$ (and note that the literal $\overline{x}$ has never appeared). 
It follows that the average of the weights in the dimension for  
$\overline{x}$ is at most $-\frac{1}{3\cdot (n+1)}$ and therefore
the mean-payoff-inf objective is violated in some dimension.
Conversely, if the formula is not satisfiable, then against every memoryless 
strategy for player~2, the counter strategy constructed in Lemma~\ref{thrm_hard} 
(that alternates between the conflicting assignments) ensures that the 
mean-payoff-inf objective is satisfied. 
Hence the coNP-hardness follows.

\smallskip\noindent{\bf coNP upper bound.}
The rest of the section is devoted to proving the coNP upper bound. 
Once a memoryless strategy for player~2 is fixed (as a polynomial witness),
we obtain a one-player game structure. 
To establish the coNP upper bound we need to show that the problem can 
be solved in polynomial time for one-player game structures.
A polynomial-time algorithm for the problem is obtained by solving a
variant of the zero circuit problem for multi-weighted directed graphs.
The variant of the zero circuit problem is the \emph{nonnegative multi-cycle} 
problem for directed graphs, where the multi-cycle is not required to be
connected by edges as in the case of zero circuit problem.

\paragraph{\bf Nonnegative multi-cycles.} Let $G=(V,E,w:E \to \integ^k)$ be a 
multi-weighted directed graph that is strongly connected.
A \emph{multi-cycle} is a multi-set of simple cycles.
For a multi-cycle $\mathbf{C}$ we denote by $\SetCycle(\mathbf{C})$ the 
set of cycles that appear in $\mathbf{C}$, and hence $\SetCycle(\mathbf{C})$ 
is a set of simple cycles.
For multi-cycle $\mathbf{C} = \{C_1,\dots,C_n\}$ we denote with $m_i$ the number 
of occurrences of a simple cycle $C_i$ in the multi-set $\mathbf{C}$, and refer 
to $m_i$ as the \emph{factor} of $C_i$.
For a simple cycle $C = (e_0,e_1 \dots e_n)$, we denote $w(C) = \sum_{e\in C} w(e)$.
For a multi-cycle $\mathbf{C}$, we denote $w(\mathbf{C})=\sum_{C \in \mathbf{C}} w(C)$ (note that
in the summation a cycle $C$ may appear multiple times in $\mathbf{C}$, and alternatively
the summation can be expressed as considering simple cycles $C_i$ that appear in $\mathbf{C}$ 
and summing up $m_i \cdot w(C_i)$).
A \emph{nonnegative multi-cycle} is a non-empty  multi-set of simple cycles 
$\mathbf{C}$ such that $w(\mathbf{C}) \geq 0$ 
(i.e., in every dimension the weight is nonnegative).

\begin{lemma} \label{lem:multiCycleProps}
Let $G=(V,E,w:E\to\integ^k)$ be a multi-weighted directed graph that is strongly connected.
\begin{enumerate}
\item The problem of deciding if $G$ has a nonnegative multi-cycle can be solved in polynomial time.
\item If $G$ does not have a nonnegative multi-cycle, 
then there exist a constant $m_G\in\Nat$ and a real-valued constant $c_G > 0$ such that 
for all finite paths $\pi^f$ in the graph $G$ we have 
$\min\{w_i(\pi^f)\mid i\in\{1,\dots,k\}\} \leq m_G -c_G \cdot \abs{\pi^f}$. 

\end{enumerate}
\end{lemma}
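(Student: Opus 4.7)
My plan is to reduce both parts to a single linear program over flows, and to exploit the standard flow/cycle decomposition to translate fractional solutions into integer multi-cycles and vice versa. Consider the LP
\[
\max t \quad\text{s.t.}\quad x_e\geq 0,\ \sum_{e\in\mathrm{in}(v)} x_e=\sum_{e\in\mathrm{out}(v)} x_e\ \forall v,\ \sum_e x_e=1,\ \sum_e x_e\cdot w_i(e)\geq t\ \forall i.
\]
This LP has polynomially many variables and constraints, is feasible (since $G$ is strongly connected, any simple cycle provides a feasible point), and is bounded, so its optimum $t^\ast$ is a rational number computable in polynomial time.

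The key claim I would establish is that $G$ has a nonnegative multi-cycle if and only if $t^\ast\geq 0$. For the ``if'' direction, an optimal rational solution $x^\ast$ with $\min_i\sum_e x^\ast_e w_i(e)\geq 0$ can be scaled by the lcm of its denominators to an integer flow with $\sum_e x_e>0$; by the standard cyclic decomposition of integer circulations (iteratively extract simple cycles by following positive-flow edges until a repeat occurs and subtract), this integer flow decomposes into a multi-set of simple cycles whose total weight equals $\sum_e x_e w(e)\geq 0$ componentwise, hence a nonnegative multi-cycle. For the ``only if'' direction, any multi-cycle $\mathbf{C}$ corresponds to an integer flow, and normalizing by $|\mathbf{C}|$ (the total number of edges counted with multiplicity) yields a feasible LP point. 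This proves part~1.

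For part~2, assume no nonnegative multi-cycle exists; then $t^\ast<0$, and we set $c_G := -t^\ast > 0$. By the max-min interpretation of the LP, every normalized flow $x$ satisfies $\min_i\sum_e x_e w_i(e)\leq -c_G$; applied to the normalization of an arbitrary multi-cycle $\mathbf{C}$ this gives $\min_i w_i(\mathbf{C})\leq -c_G\cdot|\mathbf{C}|$. Given any finite path $\pi^f$, I would decompose it as $\pi^f=\mathbf{C}\cup P$ where $\mathbf{C}$ is a multi-cycle obtained by repeatedly excising simple cycles formed by revisited vertices, and $P$ is the residual simple path of length at most $|V|-1$; this satisfies $w(\pi^f)=w(\mathbf{C})+w(P)$ and $|\mathbf{C}|\geq|\pi^f|-(|V|-1)$. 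Choosing the index $i^\ast$ minimizing $w_{i^\ast}(\mathbf{C})$ and letting $W$ denote the largest absolute weight, we get
\[
\min_i w_i(\pi^f)\leq w_{i^\ast}(\pi^f)\leq -c_G\cdot|\mathbf{C}|+W\cdot|P|\leq -c_G\cdot|\pi^f|+(|V|-1)(c_G+W),
\]
so $m_G:=\lceil (|V|-1)(c_G+W)\rceil$ completes the proof.

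The main obstacle I anticipate is the careful bookkeeping that the index $i^\ast$ achieving the minimum for the multi-cycle part may differ from the one minimizing the full path; the single-index upper bound $\min_i w_i(\pi^f)\leq w_{i^\ast}(\pi^f)$ is what makes the argument go through. A secondary subtlety is justifying that clearing denominators and then applying integer flow decomposition is valid even when $x^\ast$ has fractional flow values, which follows because flow conservation is preserved under scaling.
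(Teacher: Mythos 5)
Your proposal is correct, and for part~2 it takes a genuinely different (and arguably cleaner) route than the paper. For part~1 the paper uses essentially the same linear program, but as a pure feasibility system (with $\sum_e x_e \geq 1$ in place of your normalization and maximization of $t$) and cites Kosaraju--Sullivan for the equivalence with the existence of a nonnegative multi-cycle; you instead prove that equivalence directly by clearing denominators and invoking the integral circulation-into-simple-cycles decomposition, which makes the argument self-contained. The real divergence is in part~2: the paper fixes the \emph{set} of simple cycles $\wh{C}$ occurring in the cyclic part of the path, defines $f(\alpha_1,\dots,\alpha_n)=\min_i \sum_j \alpha_j w_i(C_j)$ on the simplex, uses Weierstrass' theorem to get a maximum $c_f$, shows $c_f<0$ by contradiction with the LP, and then takes the max of $m_{\wh C}$ and the min of $-c_f$ over the doubly-exponential family of possible cycle sets to obtain $m_G$ and $c_G$; you short-circuit all of this by observing that the LP optimum $t^\ast=\max_x\min_i\sum_e x_e w_i(e)$ over normalized circulations is already a \emph{uniform} bound valid for every multi-cycle at once, so $c_G=-t^\ast$ works directly. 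This buys you a single compactness-free argument and an explicitly computable rational constant $c_G$ (the paper's constant comes out of a non-constructive maximization per cycle set), at no extra cost; your path decomposition (excise simple cycles, leaving a simple residual path of length at most $|V|-1$) versus the paper's prefix--cycles--suffix split is only a cosmetic difference, and your handling of the empty multi-cycle case and of the single minimizing index $i^\ast$ is sound.
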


\begin{proof}
We prove both the items below.
\begin{enumerate}
\item 
The proof of the first item is almost exactly as the proof of Theorem 2.2 in~\cite{KS88}.
Given the directed strongly connected graph $G=(V,E,w:E\to\integ^k)$, we consider 
a variable $x_e$ (for edge coefficient of $e$) for every $e\in E$. 
We define the following set of linear constraints.
\begin{enumerate}
\item For $v\in V$, let $\mathit{IN}(v)$ be the set of all in-edges of $v$, and $\mathit{OUT}(v)$ be the 
set of out-edges of $v$. For every $v\in V$ we define the linear constraint that $\sum_{e\in\mathit{IN}(v)} x_e = \sum_{e\in\mathit{OUT}(v)} x_e$.
\item For every $e\in E$ we define the constraint $x_e\geq 0$.
\item For every dimension $i\in \set{1,\dots,k}$, we define the constraint $\sum_{e\in E} x_e \cdot w_i(e) \geq 0$. 
\item Finally, we define the constraint $\sum_{e\in E} x_e \geq 1$.
\end{enumerate}
The first set of linear constraints is intuitively the flow constraints; 
the second constraint specifies that for every edge $e$, the edge coefficient  $x_e$ 
is nonnegative;
the third constraint specifies that in every dimension the sum of edge 
coefficient time the weights is nonnegative; and 
the last constraint ensures that at least one edge coefficient is strictly
positive (to ensure that the multi cycle is non-empty). 
This set of constraints can be solved in polynomial time using standard 
linear programming algorithms. 
It essentially follows from~\cite{KS88} that this set of linear constraints has a 
solution iff a nonnegative multi-cycle exists.

\item 
Let $\pi^f$ be a finite path in $G$. 
The finite path $\pi^f$ can be decomposed into three paths 
$\pi_0^f,\pi^f_c,\pi_1^f$ where $\pi_0^f$ is an initial prefix of length 
at most $\abs{V}$, $\pi_c^f$ consists of cycles (not necessarily simple), 
and $\pi_1^f$ is a segment of length at most $\abs{V}$ in the end. 
We can uniquely decompose $\pi_c$ into a set $\mathbf{C}$ of multi-cycles and 
hence also into a set of \emph{simple} cycles 
$\wh{C}=\SetCycle(\mathbf{C}) = \set{C_1, \dots, C_n}$, for $n \leq 2^{\abs{E}}$, such that cycle 
$C_i$ occurs $r_i$ times in $\pi_c$, for some $r_i \in \Nat$.
The sum of the weights in the part of $\pi_c^f$ is 
\[
w(\pi_c^f)= \sum_{i=1}^n r_i \cdot w(C_i) = 
(\sum_{i=1}^n r_i) \cdot \sum_{i=1}^n \frac{r_i}{(\sum_{i=1}^n r_i)} \cdot w(C_i)
\leq 
\abs{\pi_c^f} \cdot \sum_{i=1}^n \frac{r_i}{ (\sum_{i=1}^n r_i)} \cdot w(C_i).
\]
The second equality is obtained by multiplying and dividing with $(\sum_{i=1}^n r_i)$,
and the inequality is obtained since 
$(\sum_{i=1}^n r_i) \leq \abs{\pi_c^f}$ (as $\abs{\pi_c^f}= \sum_{i=1}^n r_i \cdot \abs{C_i}$).
Let $\beta_i=\frac{r_i}{(\sum_{i=1}^n r_i)}$ and observe that
$\beta_1, \dots, \beta_n \geq 0$ with $\sum_{j=1}^n \beta_j = 1$.
We first show the existence of a constant $\eta_{\wh{C}}>0$, 
such that for every $\alpha_1, \dots, \alpha_n \geq 0$ with
$\sum_{j=1}^n \alpha_j = 1$, there exists a dimension $i\in \set{1,\dots,k}$ 
such that $\sum_{j=1}^n \alpha_j \cdot w_i(C_j) \leq -\eta_{\wh{C}}$.

For every $i\in\set{1,\dots, k}$, we define a function 
$f_i(\alpha_1,\dots,\alpha_n) = \sum_{j=1}^n \alpha_j \cdot w_i(C_j)$ and 
$f(\alpha_1,\dots,\alpha_n) = \min\set{f_i(\alpha_1,\dots,\alpha_n) \mid 
1\leq i \leq k}$.
For every $i \in \set{1,\ldots,k}$, the function $f_i$ is continuous.
Since $f$ is the minimum of a finite number of continuous functions, 
$f$ is also continuous.
Observe that $[0,1]^n \cap \set{(\alpha_1,\dots, \alpha_n) \mid 
\sum_{j=1}^n \alpha_j = 1}$ is a closed and bounded set.
Hence by \emph{Weierstrass theorem} the function $f$ has a maxima $c_f$ in this domain.
Let $\alpha ^*_1, \dots, \alpha ^*_n \geq 0$ such that $f(\alpha ^*_1, \dots, \alpha ^*_n) = c_f$ and $\sum_{j=1}^n \alpha^*_j = 1$.
Assume towards contradiction that $c_f \geq 0$, we then show that the linear programming problem on 
the constraints mentioned above (in item~1) has a solution, which leads to a contradiction.
For an edge $e$, we define the edge coefficient as follows: 
$x_e=\sum_{e \in C_j \in \wh{C}} \alpha_j^*$ (i.e., the sum of the $\alpha^*_j$'s of the
cycle the edge belongs to). 
It follows that all the constraints are satisfied, and this contradicts the assumption 
that there is no nonnegative multi-cycle.
Hence we have $c_f < 0$.
Hence it follows that there exists a dimension $i$ such that 
\[
\begin{array}{rcl}
w_i(\pi^f) & \leq &   
(\abs{\pi_0^f} + \abs{\pi_1^f}) \cdot W + c_f \cdot \abs{\pi^f_c} 
=
(\abs{\pi_0^f} + \abs{\pi_1^f}) \cdot W + 
(\abs{\pi_0^f} + \abs{\pi_1^f}) \cdot (-c_f) + 
c_f \cdot \abs{\pi^f} \\
& \leq & 
2\cdot \abs{V} \cdot (W-c_f) + c_f \cdot \abs{\pi^f}
.
\end{array}
\]
Let $m_{\wh{C}}= \lceil 2\cdot \abs{V} \cdot (W-c_f) \rceil$ and 
$\eta_{\wh{C}}=-c_f$, and we obtain the desired result for the
path $\pi^f$.
Let $\mathcal{C}=\set{\SetCycle(\mathbf{C}) \mid \mathbf{C} \text{ is a multi-cycle}}$ 
be the set of simple cycles of all the multi-cycles of $G$.
Note that $\mathcal{C}$ is a set whose elements are subsets of simple cycles, 
i.e., $\mathcal{C}$ is the power set of power set of simple cycles and hence 
$\abs{\mathcal{C}} \leq 2^{2^{\abs{E}}}$.
By choosing $m_G=\max_{\wh{C} \in \mathcal{C}} m_{\wh{C}}$ and 
$c_G=\min_{\wh{C} \in \mathcal{C}} \eta_{\wh{C}}$
we obtain the desired result. 
\end{enumerate}
\hfill\qed
\end{proof}

\noindent In sequel we abbreviate a maximal strongly connected component of a graph 
as a scc.

\begin{lemma}\label{lemm_inf_1}
Let $G$ be a multi-weighted one-player game structure, and let $s_0$ be the
initial state. 
If there is a scc $C$ reachable from $s_0$ such that the multi-weighted directed 
graph induced by $C$ has a nonnegative multi-cycle, then player~1 has a 
strategy to satisfy the mean-payoff-inf objective $\MeanPayoffInf$. 
\end{lemma}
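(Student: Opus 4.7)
The plan is to construct an explicit (infinite-memory) strategy for player~1 that witnesses $\MeanPayoffInf$, using the nonnegative multi-cycle as the ``engine'' and exploiting strong connectivity of $C$ to stitch its simple cycles together.

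First, I would reach the scc. Since $C$ is reachable from $s_0$, there is a finite prefix $\rho_0$ of length at most $\abs{S}$ leading to some vertex $v\in C$; this prefix contributes a bounded additive amount to every coordinate of the energy level and has no effect on the mean-payoff in the limit. Next, let $\mathbf{C}$ be the guaranteed nonnegative multi-cycle, with $\SetCycle(\mathbf{C})=\{C_1,\dots,C_n\}$ and factors $m_1,\dots,m_n$, so that $w(\mathbf{C})=\sum_i m_i\,w(C_i)\geq (0,\dots,0)$. Pick a vertex $v_i\in C_i$ for each $i$, and using strong connectivity of $C$ fix for each ordered pair $(i,j)$ a path $P_{ij}$ inside $C$ from $v_i$ to $v_j$ of length at most $\abs{S}$; also fix a path from $v$ to $v_1$ inside $C$.

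The strategy then proceeds in rounds indexed by $r\geq 1$. In round $r$ player~1, starting at $v_1$, traverses $C_1$ exactly $m_1\cdot r$ times, follows $P_{12}$ to $v_2$, traverses $C_2$ exactly $m_2\cdot r$ times, and so on, finally traversing $C_n$ exactly $m_n\cdot r$ times and taking $P_{n1}$ back to $v_1$. Write $L=\sum_i m_i\abs{C_i}$ and $W$ for the largest absolute weight in~$G$. The ``cycle part'' of round $r$ has length $rL$ and weight exactly $r\cdot w(\mathbf{C})\geq 0$ in every dimension; the $n$ connecting paths contribute at most $n\abs{S}$ steps and weight bounded in absolute value by $n\abs{S}W$ per round. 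Summing across rounds $1,\dots,r$, the cumulative weight in dimension $d$ is at least $\tfrac{r(r+1)}{2}\,w(\mathbf{C})_d-r\cdot n\abs{S}W$ and the cumulative length is at most $\tfrac{r(r+1)}{2}L+r\cdot n\abs{S}$.

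The main (and essentially only) obstacle is controlling partial averages \emph{inside} a round, since mean-payoff-inf is a liminf and within round $r+1$ some cycle $C_j$ may be strictly negative in some coordinate. Consider a prefix that ends $t$ steps into round $r+1$, where $t\leq (r+1)L+n\abs{S}=O(rL)$. Its cumulative weight in dimension $d$ is at least
\[
\tfrac{r(r+1)}{2}\,w(\mathbf{C})_d - r\cdot n\abs{S}W - tW,
\]
while its length is at least $\tfrac{r(r+1)}{2}L$. Dividing, the average in dimension $d$ is at least
\[
\frac{w(\mathbf{C})_d}{L}\;-\;O\!\left(\frac{n\abs{S}W}{rL}\right)\;-\;O\!\left(\frac{W}{rL}\right),
\]
where the last term uses $t=O(rL)$. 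Since $w(\mathbf{C})_d\geq 0$ and both error terms vanish as $r\to\infty$, the liminf of the average in every dimension is $\geq 0$, so the strategy satisfies $\MeanPayoffInf$, as required. The prefix $\rho_0$ is absorbed by the same limit argument, being of bounded length.
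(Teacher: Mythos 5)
Your proposal is correct and follows essentially the same route as the paper's proof: reach the scc, stitch the simple cycles of the nonnegative multi-cycle together with connecting paths (guaranteed by strong connectivity), repeat each cycle $C_i$ a number of times proportional to $m_i$ with a round-multiplier growing to infinity, and amortize the bounded per-round contributions of the connecting paths and the in-round partial progress against the quadratically growing play length. The only quibble is a harmless constant-bookkeeping slip (the $tW$ error term divided by $\tfrac{r(r+1)}{2}L$ is $O(W/r)$ rather than $O(W/(rL))$), which does not affect the conclusion that the liminf of the averages is nonnegative in every dimension.
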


\begin{proof}
Let $C$ be a scc reachable from $s_0$ 
such that the graph induced by $C$ has a nonnegative multi-cycle.
Then there exist simple cycles $C_1, \dots, C_n$, factors $m_1, \dots, m_n$ and 
finite paths $\pi_{1,2}, \pi_{2,3}, \dots, \pi_{n-1,n}, \pi_{n,1}$ such that
(i)~the path $\pi_{i,j}$ is an acyclic path from $C_i$ to $C_j$, and  
(ii)~for every $i = 1, \dots, k$, we have $\sum_{j=1}^n m_j \cdot w_i(C_j) 
\geq 0$.
%
An infinite memory strategy for player~1 is as follows: initialize $Z=1$, and 
follow the steps below: 
\begin{algorithmic}[1]
\LOOP
\STATE $Z \cdot m_1$ times in cycle $C_1$
\STATE $\pi_{1,2}$
\STATE $Z \cdot m_2$ times in cycle $C_2$
\STATE $\pi_{2,3}$
\STATE $\cdots$
\STATE $Z \cdot m_n$ times in cycle $C_n$
\STATE $\pi_{n,1}$
\STATE $Z \gets Z+1$
\ENDLOOP
\end{algorithmic}
Let $L=\abs{\pi_{1,2}} + \abs{\pi_{2,3}} +\cdots \abs{\pi_{n-1,n}} +\abs{\pi_{n,1}}$ be the sum 
of the lengths of the paths between cycles, and 
let $P=\abs{C_1} + \abs{C_2} +\cdots + \abs{C_n}$ be the sum of the lengths of the cycles.
Note that both $L$ and $P$ are bounded by $2^{\abs{E}} \cdot \abs{S}$
as $n \leq 2^{\abs{E}}$ 
and each path and cycle is of length at most $\abs{S}$. 
Consider the steps executed in round $Z+1$: the sum of weights due to executing the
cycles in all previous rounds up to $Z$ is nonnegative in all dimensions.
Hence the sum of weights in any dimension, in the steps executed in round $Z+1$ 
is at least 
\[
-( \abs{S}+ (Z+1)\cdot P + Z \cdot L + L) \cdot W.
\]
The negative contribution can come from executing the initial prefix of length at most
$\abs{S}$ to reach the scc, then the cycles in the present round 
(bounded by $(Z+1)\cdot P$ steps) and the paths $\pi_{i,j}$ of length at most~$L$ in
the previous $Z$ rounds and in the current round (in total bounded by 
$Z \cdot L +L$ steps).
The number of steps executed so far is at least $(L+P) \cdot \sum_{i=1}^Z i =
(L+P)\cdot \frac{Z\cdot(Z+1)}{2} \geq \frac{(L+P)\cdot Z^2}{2}$. 
Hence the average for all dimensions for all steps in round $Z+1$ is at least
\[
\begin{array}{rcl}
\displaystyle
\frac{-2\cdot (\abs{S} + (Z+1)\cdot P + Z \cdot L + L) \cdot W}{(L+P)\cdot Z^2}
& = & 
\displaystyle
\frac{-2\cdot (\abs{S} + (Z+1)\cdot (P+L) ) \cdot W}{(L+P)\cdot Z^2}
\\[2ex]
& \geq & 
\displaystyle
\frac{-2\cdot \abs{S}\cdot W}{Z^2} + \frac{-2\cdot (Z+1)\cdot W}{Z^2}.
\end{array}
\]
As $Z \to \infty$, it follows that the mean-payoff-inf value is at least~0 
in every dimension, and hence the result follows.
\hfill\qed
\end{proof}

\begin{lemma}\label{lemm_inf_2}
Let $G$ be a multi-weighted one-player game structure, and let $s_0$ be the
initial state. 
If for every scc $C$ reachable from $s_0$ the multi-weighted directed 
graph induced by $C$ does not have a nonnegative multi-cycle, then player~1 does not 
have strategy from $s_0$ to satisfy the mean-payoff-inf objective $\MeanPayoffInf$. 
\end{lemma}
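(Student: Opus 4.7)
The plan is to derive, for every infinite play $\pi$ from $s_0$, a single dimension in which $\MPinf(\pi)$ is strictly negative. The main tool is Lemma~\ref{lem:multiCycleProps}(2), and the extra ingredient is a pigeonhole argument over the $k$ dimensions, because that lemma only guarantees the existence of \emph{some} bad dimension per prefix.

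First I would observe that since $G$ is finite and sccs cannot be re-entered once left, any infinite play $\pi = s_0 s_1 \dots$ eventually remains forever in a single scc $C$ that is reachable from $s_0$. Let $N$ be an index such that $s_n \in C$ for all $n \geq N$. By hypothesis the subgraph induced by $C$ has no nonnegative multi-cycle, so by Lemma~\ref{lem:multiCycleProps}(2) there exist $m_C \in \nat$ and $c_C > 0$ such that every finite path $\rho$ lying in $C$ satisfies
\[
\min_{1 \leq i \leq k} w_i(\rho) \;\leq\; m_C - c_C \cdot \abs{\rho}.
\]

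Next, for each $n \geq N$ apply the inequality to the finite path $\rho_n = s_N s_{N+1} \dots s_n$ inside $C$: there exists a dimension $i_n \in \{1, \dots, k\}$ with $w_{i_n}(\rho_n) \leq m_C - c_C \cdot (n-N)$. Since $\{1,\dots,k\}$ is finite, by the pigeonhole principle there is a dimension $i^\star$ with $i_n = i^\star$ for infinitely many $n$. For each such $n$, writing $W$ for the largest absolute weight of $G$, we have
\[
\EL(\pi(n))_{i^\star} \;=\; \EL(\pi(N))_{i^\star} + w_{i^\star}(\rho_n) \;\leq\; N \cdot W + m_C - c_C\cdot(n-N),
\]
so $\frac{1}{n}\EL(\pi(n))_{i^\star} \to -c_C$ as $n$ ranges over this infinite subsequence. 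Consequently $\MPinf(\pi)_{i^\star} \leq -c_C < 0$, hence $\pi \notin \MeanPayoffInf$.

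Since this argument works for every infinite play from $s_0$, every strategy of player~1 produces a losing play, so player~1 has no winning strategy from $s_0$. The only delicate step is the pigeonhole on dimensions: without it, different prefixes could witness different bad dimensions and one might wrongly conclude that each individual $\MPinf(\pi)_i$ could still be nonnegative. Fixing a single $i^\star$ realized by infinitely many prefix lengths is what forces the $\liminf$ in that coordinate to be at most $-c_C$.
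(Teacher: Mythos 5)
Your proof is correct and follows essentially the same route as the paper: the play eventually remains in a single reachable scc, and Lemma~\ref{lem:multiCycleProps}(2) forces some dimension's mean-payoff-inf below $-c_C<0$. The only difference is that you spell out the pigeonhole step over dimensions (fixing one $i^\star$ witnessed by infinitely many prefixes), which the paper's terser argument leaves implicit.
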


\begin{proof}
Consider an arbitrary strategy for player~1, and let the set of states visited 
infinitely often be contained in an scc $C$.
Since $C$ does not have a nonnegative multi-cycle it follows from 
Lemma~\ref{lem:multiCycleProps}(2) that every infinite path that visits states in
$C$ has a mean-payoff-inf value at most $-c$, for some $c > 0$, in some dimension.
It follows the strategy for player~1 does not satisfy the mean-payoff-inf objective 
$\MeanPayoffInf$.
\hfill\qed
\end{proof}

The following lemma shows that in one-player game structure the $\MeanPayoffInf$ 
objective can be solved in polynomial time.
To describe the precise complexity, let us denote by $\mathsf{LP}(i,j)$ 
the complexity to solve linear inequalities with $i$ variables and $j$ constraints.

\begin{lemma}
Given a multi-weighted one-player game structure $G$ and a state $s_0$, 
the problem of deciding whether player~1 has a strategy for a mean-payoff-inf 
objective $\MeanPayoffInf$ from $s_0$ can be solved in polynomial time 
(in time $O(|S|+|E|) + \mathsf{LP}(|E|, |S|+|E|+k+1)$). 
\end{lemma}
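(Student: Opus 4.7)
By Lemmas~\ref{lemm_inf_1} and~\ref{lemm_inf_2}, player~1 has a winning strategy from $s_0$ for the $\MeanPayoffInf$ objective if and only if some maximal strongly connected component (scc) reachable from $s_0$ contains a nonnegative multi-cycle in the induced subgraph. My plan is to turn this characterization directly into a decision procedure that combines an scc decomposition with the linear-programming feasibility test of Lemma~\ref{lem:multiCycleProps}(1).

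The algorithm proceeds in three steps. First, I would compute the set of states reachable from $s_0$ by a standard graph search. Second, I would compute the sccs of the reachable subgraph using Tarjan's algorithm; together these two steps run in time $O(|S|+|E|)$. Third, for each nontrivial reachable scc $C$ (those containing at least one edge), I would instantiate the LP from Lemma~\ref{lem:multiCycleProps}(1) on $C$: one nonnegative variable $x_e$ per edge of $C$, one flow-conservation equality at each vertex of $C$, $k$ nonnegativity constraints $\sum_{e\in E_C} x_e\cdot w_i(e)\geq 0$ (one per dimension), and the nontriviality constraint $\sum_{e\in E_C} x_e \geq 1$. By Lemma~\ref{lem:multiCycleProps}(1), feasibility of this LP is equivalent to the existence of a nonnegative multi-cycle in $C$. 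The algorithm returns $\mathsf{Yes}$ iff at least one of these LPs is feasible. Correctness is immediate from Lemmas~\ref{lemm_inf_1},~\ref{lemm_inf_2}, and~\ref{lem:multiCycleProps}(1).

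For the complexity, the sccs of the reachable subgraph partition its vertices and edges, so the total number of LP variables across all per-scc calls is at most $|E|$ and the total number of per-scc constraints is at most $|S|+|E|$, plus $k+1$ global constraints that appear in every LP. Since the linear-programming cost $\mathsf{LP}(i,j)$ is superlinear in $(i,j)$, the cumulative time for solving one LP per nontrivial scc is dominated by a single call to $\mathsf{LP}(|E|,|S|+|E|+k+1)$ (the worst case is a single large scc containing essentially all edges). Adding the linear-time preprocessing yields the stated bound $O(|S|+|E|)+\mathsf{LP}(|E|,|S|+|E|+k+1)$.

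The main obstacle I anticipate is purely in the complexity bookkeeping: one must check that the LP-cost function $\mathsf{LP}(\cdot,\cdot)$ is sufficiently well-behaved (superlinear and monotone) to absorb a sum of smaller per-scc calls into a single evaluation at the given arguments, and one must carefully handle trivial sccs (single states without self-loops) in constant time so that they do not invoke the LP solver. Everything else is mechanical: the correctness of the reduction follows directly from the two lemmas already proved, and the LP formulation is exactly the one justified in Lemma~\ref{lem:multiCycleProps}(1).
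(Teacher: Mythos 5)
Your proposal is correct and follows essentially the same route as the paper: reduce via Lemmas~\ref{lemm_inf_1} and~\ref{lemm_inf_2} to checking each scc reachable from $s_0$ for a nonnegative multi-cycle, decided by the linear program of Lemma~\ref{lem:multiCycleProps}(1), with linear-time reachability and scc decomposition giving the stated bound. Your extra bookkeeping about summing per-scc LP costs into one call only makes explicit what the paper leaves implicit.
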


\begin{proof}
It follows from Lemma~\ref{lemm_inf_1} and Lemma~\ref{lemm_inf_2} that an algorithm to 
solve the problem is as follows: consider the scc decomposition of the graph,
and for every multi-weighted graph induced by an scc $C$ reachable from $s_0$ 
check if the multi-weighted directed graph induced by $C$ has a nonnegative 
multi-cycle (in polynomial time by Lemma~\ref{lem:multiCycleProps}(1)). 
Since scc decomposition is linear time (in time $O(|S|+|E|)$) and the number of 
scc's is linear,  we obtain the desired result.
The complexity of the linear inequations follows from Lemma~\ref{lem:multiCycleProps}. 
\hfill\qed
\end{proof}

Thus we obtain the desired coNP upper bound.
We have the following theorem summarizing the result of this section.

\begin{theorem}\label{thrm:inf}
For multi-weighted two-player game structures with objective 
$\MeanPayoffInf=\set{\pi \in \Plays \mid \MPinf(\pi) \geq (0,0,\ldots,0)}$
for player~1, the following assertions hold:
\begin{enumerate}
\item Winning strategies for player~1 require infinite-memory in general, 
and memoryless winning strategies exist for player~2.

\item The problem of deciding whether a given state is winning for player~1
is coNP-complete.


\end{enumerate}

\end{theorem}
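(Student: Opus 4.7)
The plan is simply to assemble the pieces already developed in the excerpt into a proof of the two items. Most of the technical work is done in the surrounding lemmas, so the proof is essentially bookkeeping.

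For Item~1, I would first invoke Lemma~\ref{lemm_inf_power}, which exhibits a one-player game (Figure~\ref{fig:crazy}) in which player~1 achieves mean-payoff-inf threshold $(1,1)$ only with infinite memory; since that example is already a mean-payoff-inf witness, no extra argument is needed for the infinite-memory lower bound. For the memoryless upper bound for player~2, I would cite the cited result of Kopczy\'nski~\cite{Kop}: the complement of $\MeanPayoffInf$ is a disjunction of prefix-independent convex objectives in the sense of~\cite{Kop}, and player~2's winning region in such games admits memoryless strategies. The appendix, according to the paper, already contains the detailed discussion, so I would just cite it here.

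For Item~2, I would prove both bounds separately. For the \emph{coNP lower bound}, I would reuse verbatim the reduction from the complement of 3SAT given in the proof of Lemma~\ref{thrm_hard}: the same gadget works because (i)~if $\psi$ is satisfiable then the memoryless player~2 strategy induced by a satisfying assignment avoids visiting both literals of any variable, so the literal $x$ visited most often forces $\MPinf$ in the $\overline x$-dimension to be at most $-\tfrac{1}{3(n+1)} < 0$; and (ii)~if $\psi$ is unsatisfiable, then against any memoryless player~2 strategy the assignment is conflicting, and player~1's strategy alternating between two conflicting clauses keeps each dimension bounded, hence $\MPinf \geq 0$ in every dimension. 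Because memoryless strategies suffice for player~2 by Item~1, this reduction gives coNP-hardness.

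For the \emph{coNP upper bound}, a nondeterministic algorithm guesses a memoryless strategy $\lambda_2$ for player~2 and then verifies in polynomial time that player~1 loses in the one-player game $G_{\lambda_2}$. The verification procedure is exactly the algorithm justified by Lemmas~\ref{lem:multiCycleProps}, \ref{lemm_inf_1}, and~\ref{lemm_inf_2}: compute the scc decomposition of $G_{\lambda_2}$ in linear time, and for each scc reachable from $s_0$ test (via the linear program of Lemma~\ref{lem:multiCycleProps}(1)) whether it contains a nonnegative multi-cycle; by Lemmas~\ref{lemm_inf_1} and~\ref{lemm_inf_2}, player~1 wins from $s_0$ iff some such scc exists. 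The main conceptual obstacle, which has already been addressed in the supporting lemmas, is the equivalence between existence of a winning player~1 strategy in a one-player $\MeanPayoffInf$ game and existence of a nonnegative multi-cycle in some reachable scc; for the write-up of Theorem~\ref{thrm:inf} itself there is nothing left to do but to chain the citations. This yields the matching coNP upper bound and completes the proof.
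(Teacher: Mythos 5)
Your proposal is correct and matches the paper's own argument essentially step for step: Lemma~\ref{lemm_inf_power} for the infinite-memory lower bound, the result of Kopczy\'nski~\cite{Kop} (as detailed in the Appendix) for memoryless player-2 strategies, the adapted 3SAT reduction of Lemma~\ref{thrm_hard} for coNP-hardness, and guessing a memoryless player-2 strategy followed by the scc decomposition and nonnegative-multi-cycle test of Lemmas~\ref{lem:multiCycleProps}, \ref{lemm_inf_1} and~\ref{lemm_inf_2} for the coNP upper bound. The only nitpick is your phrasing of the~\cite{Kop} invocation: the hypothesis used there is that \emph{player~1's} objective $\MeanPayoffInf$ (a conjunction of prefix-independent convex objectives, hence itself prefix-independent and convex) satisfies these properties, which yields memoryless winning strategies for player~2 --- not that player~2's complement objective is a disjunction of convex objectives --- but since you defer to the Appendix discussion, which states it correctly, nothing substantive is missing.
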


\subsection{Conjunction of $\MeanPayoffInf$ and $\MeanPayoffSup$ objectives}
We consider multi-weighted two-player game structures, 
two sets $I,J \subseteq \{1,\dots,k\}$, and the multi-mean-payoff objective 
$\MeanPayoffInfSup(I,J)= \{ \pi \in \Plays(G) \mid 
\forall i \in I: \MPinf(\pi)_i \geq 0 \text{ and } 
\forall j \in J: \MPsup(\pi)_j \geq 0 \}$ for player~1.

Note that the problem is more general than the problem considered in the 
previous section (with $J=\emptyset$ we obtain $\MeanPayoffInf$ objectives,
and with $I=\emptyset$ we obtain $\MeanPayoffSup$ objectives).
Hence it follows that in general winning strategies for player~1 
require infinite-memory, and the problem is coNP-hard.
We show that memoryless winning strategies exist for player~2, 
and that the decision problem is coNP-complete.

We start with the crucial result that considers the case when the 
mean-payoff-sup objective is required for one dimension, and for all the 
other dimensions the mean-payoff-inf objective is required.
The lemma shows that if only one dimension is $\MeanPayoffSup$ objective,
then it can be equivalently considered as $\MeanPayoffInf$ objective.

\begin{lemma} \label{lem:GMPmixedOneSup}
Let $I=\{1,\dots,k-1\}$ and $s$ be a state.
Player~1 has a winning strategy for the objective  
$\MeanPayoffInfSup(I,\{k\})$ from $s$ if and only if player~1 has a winning strategy 
for the objective $\MeanPayoffInf = \MeanPayoffInfSup(I\cup\{k\},\emptyset)$ from $s$.
\end{lemma}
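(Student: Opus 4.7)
The implication from right to left is immediate since $\MPinf(\pi)_j \leq \MPsup(\pi)_j$ for every dimension $j$, so any strategy that wins for $\MeanPayoffInfSup(I\cup\{k\},\emptyset)$ also wins for $\MeanPayoffInfSup(I,\{k\})$. My plan for the substantive left-to-right direction is to argue by contradiction, invoking the memoryless determinacy for player~$2$ established for conjunctive $\MeanPayoffInf$ objectives (Theorem~\ref{thrm:inf}). I would suppose that player~$1$ has a winning strategy $\lambda_1$ for $\MeanPayoffInfSup(I,\{k\})$ from $s$ but does not win $\MeanPayoffInf$ from $s$. Then there is a memoryless strategy $\lambda_2$ of player~$2$ such that every play of the one-player graph $G_{\lambda_2}$ from $s$ violates $\MeanPayoffInf$, so by Lemma~\ref{lemm_inf_2} no scc of $G_{\lambda_2}$ reachable from $s$ contains a nonnegative multi-cycle. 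The goal will be to derive a contradiction by producing such a multi-cycle from the play $\pi = \outcome_G(s,\lambda_1,\lambda_2)$, which by the choice of $\lambda_1$ must satisfy $\MPinf(\pi)_i \geq 0$ for all $i\in I$ and $\MPsup(\pi)_k \geq 0$.

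The extraction would be the heart of the argument. I would let $C$ be the scc of $G_{\lambda_2}$ in which $\pi$ eventually remains. Combining the definitions of $\MPinf$ and $\MPsup$, for every $\epsilon > 0$ there should be infinitely many indices $n$ for which the average of the weight vector along $\pi(n)$ is at least $-\epsilon$ in every coordinate of $I\cup\{k\}$ simultaneously: the $\MPinf$ hypothesis supplies this bound on coordinates in $I$ for all sufficiently large $n$, and the $\MPsup$ hypothesis selects infinitely many $n$ on which coordinate $k$ also complies. For each such $n$, I would decompose $\pi(n)$ (as in the proof of Lemma~\ref{lem:multiCycleProps}(2)) into an initial segment of length at most $\abs{S}$, a ``cycle middle'' which is a genuine multi-cycle $\mathbf{M}_n$ in $C$, and a final segment of length at most $\abs{S}$, so that the per-edge average of $\mathbf{M}_n$ exceeds $-\epsilon - O(\abs{S}\cdot W/n)$ in every dimension of $I\cup\{k\}$. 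Normalizing $\mathbf{M}_n$ by its length then gives a rational flow $x^{(n)}$ on the edges of $C$ that satisfies flow conservation, has $\sum_e x^{(n)}_e = 1$, and satisfies $\sum_e x^{(n)}_e w_i(e) \geq -\epsilon - O(\abs{S}\cdot W/n)$ for every $i\in I\cup\{k\}$.

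Finally, the $x^{(n)}$ all lie in the closed, bounded polytope of normalized circulations on the edges of $C$; letting $\epsilon\to 0$ and $n\to\infty$ along a convergent subsequence should produce a limit flow $x^\ast$ in that polytope with $\sum_e x^\ast_e w_i(e) \geq 0$ for every $i\in I\cup\{k\}$. Thus the linear program used in the proof of Lemma~\ref{lem:multiCycleProps}(1), restricted to $C$, would have a real feasible point, hence a rational feasible point by standard LP theory, and clearing denominators would yield an integer solution corresponding to a nonnegative multi-cycle of $C$, contradicting the choice of $\lambda_2$ and completing the proof. The main obstacle I expect is precisely this last step: one has to verify that the per-edge nonnegativity bounds on the $\mathbf{M}_n$'s, which are only approximate and of non-uniform length, can indeed be combined by a compactness argument into an \emph{exact} nonnegative multi-cycle rather than merely a sequence of multi-cycles with vanishing negative slack, and that the limit lives in the same polytope so that flow conservation and the normalization $\sum_e x_e = 1$ are preserved.
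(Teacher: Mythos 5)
Your proof is correct, and it shares the paper's skeleton---determinacy together with the memoryless player-2 strategies of Theorem~\ref{thrm:inf} reduce everything to a one-player graph $G_{\lambda_2}$ in which no reachable scc has a nonnegative multi-cycle---but it diverges at the analytic core. The paper does not argue by contradiction on a single outcome: it invokes the uniform drift bound of Lemma~\ref{lem:multiCycleProps}(2) (there are constants $m$ and $c>0$ such that every finite path has some dimension with average payoff essentially at most $-c$), and concludes that the \emph{same} memoryless strategy $\lambda_2$ falsifies $\MeanPayoffInfSup(I,\{k\})$: along any consistent play $\pi$ with $\MPsup(\pi)_k\geq 0$, the dimension-$k$ average exceeds $-\frac{c}{2}$ infinitely often, so at those positions the offending dimension must be some $i<k$, and by pigeonhole over the finitely many dimensions $\MPinf(\pi)_i<0$. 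You instead fix the outcome of $\lambda_1$ against $\lambda_2$, extract normalized circulations from the prefixes on which all coordinates of $I\cup\{k\}$ average at least $-\epsilon$, and pass to a limit in the compact polytope of circulations on the scc $C$; the compactness step you flag as the main obstacle does go through, since nonnegativity, flow conservation, the normalization $\sum_e x_e=1$, and the weight constraints are all closed conditions, so the limit is an exact feasible point of the linear program of Lemma~\ref{lem:multiCycleProps}(1), and a rational feasible point scaled to integers decomposes into a nonnegative multi-cycle exactly as in \cite{KS88}, contradicting the choice of $\lambda_2$. In effect you inline the compactness that the paper packages once and for all in Lemma~\ref{lem:multiCycleProps}(2) (its Weierstrass argument on the simplex of cycle frequencies); the paper's route buys a reusable quantitative bound and the slightly stronger conclusion that a single memoryless strategy of player~2 refutes both objectives, while yours avoids the uniform constant and the dimension pigeonhole at the cost of redoing a limit argument. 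Two small repairs: the fact that absence of a winning strategy for player~1 in $G_{\lambda_2}$ implies absence of nonnegative multi-cycles in every reachable scc is the contrapositive of Lemma~\ref{lemm_inf_1}, not Lemma~\ref{lemm_inf_2}; and the prefix of the play before it settles in $C$ has some fixed length $N_0$ which need not be at most $\abs{S}$ and may contain cycles outside $C$, but since $N_0$ does not depend on $n$ its contribution to the averages vanishes, so your estimates are unaffected.
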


\begin{proof}
To prove the lemma we show the following equivalent statement:
Player~2 has a winning strategy to falsify 
$\MeanPayoffInfSup(I,\{k\})$ from $s$ if and only if player~2 has a winning strategy 
to falsify $\MeanPayoffInf = \MeanPayoffInfSup(I\cup\{k\},\emptyset)$ from $s$.

One direction is trivial as for any sequence $(u_i)_{i\geq 0}$ of real numbers we 
have $\limsup_{i\to\infty} u_i \geq \liminf_{i\to\infty} u_i$, and hence
it follows that a winning strategy for player~2 to falsify $\MeanPayoffInfSup(I,\{i\})$
is also a winning strategy to falsify $\MeanPayoffInf$.

Suppose that player~2 has a winning strategy for $\MeanPayoffInf$, then by 
Theorem~\ref{thrm:inf} player~2 has a memoryless winning strategy~$\lambda_2$.
Let $G_{\lambda_2}$ be the one-player game structure obtained by fixing the 
strategy $\lambda_2$ for player~2.
Since $\lambda_2$ is winning for player~$2$, it follows from Lemma~\ref{lemm_inf_1}
that in $G_{\lambda_2}$, for all scc's $C$, in the subgraph induced by $C$ there 
is no nonnegative multi-cycle.
It follows from Lemma~\ref{lem:multiCycleProps} that
there exist a constant $m_{G_{\lambda_2}} \in\Nat$ and a real-valued constant 
$c_{G_{\lambda_2}} > 0$ such that for all finite paths $\pi^f$ in the graph $G$ we have 
$\min\{w_i(\pi^f)\mid i\in\{1,\dots,k\}\} \leq m_{G_{\lambda_2}} -c_{G_{\lambda_2}} \cdot \abs{\pi^f}$.
Let us denote $c = c_{G_{\lambda_2}}$. 
We show that $\lambda_2$ is winning for player~$2$ (to falsify $\MeanPayoffInfSup(I,\{k\})$).
Consider a play $\pi$ consistent with $\lambda_2$, and assume that $\MPsup(\pi)_k \geq 0$.
Then the average payoff in dimension $k$ is greater than $-\frac{c}{2}$ in 
infinitely many positions (since the limit-superior is at least~0), and 
by Lemma~\ref{lem:multiCycleProps} there is a dimension $1 \leq i < k$
with average payoff at most $-c$ in infinitely many positions, thus $\MPinf(\pi)_i < 0$.
Hence either the supremum of the average weight in dimension $k$ is negative, 
or the  infimum of the average weight in one of the other dimensions is negative.
In either case, the strategy $\lambda_2$ is winning for player~2. 
This completes the proof.
\hfill\qed
\end{proof}

Our goal is now to prove a result similar to Lemma~\ref{lem:sup} for 
$\MeanPayoffInfSup(I,J)$ objectives. 
To prove the result, we first prove two lemmas.
The following lemma about $\MeanPayoffInf$ objectives 
is derived from the proof of Lemma~\ref{lemm_inf_1} and
it shows that if player~1 has a winning strategy for a mean-payoff-inf objective
(with threshold $0$ in every dimension), then for every $\alpha>0$ there is a 
finite-memory strategy to ensure mean-payoff-inf value of at least $-\alpha$ in every dimension.
Lemma~\ref{lemm:mixed_req_2} will be a consequence of Lemma~\ref{lemm:mixed_req_1}.

\begin{lemma}\label{lemm:mixed_req_1}
Let $G$ be a multi-weighted two-player game structure, and let $s_0$ be the
initial state. 
If there is a winning strategy for player~1 for the objective 
$\MeanPayoffInf=\set{ \pi \in \Plays(G) \mid 
\forall 1 \leq i \le k. \ (\MPinf(\pi))_i\geq 0}$,
then for all $\alpha>0$ there is a finite-memory winning strategy for 
player~1 to ensure the objective
$\MeanPayoffInf(-\alpha)=\set{ \pi \in \Plays(G) \mid 
\forall 1 \leq i \le k. \ (\MPinf(\pi))_i\geq -\alpha}$.
\end{lemma}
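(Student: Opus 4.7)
The plan is to derive a finite-memory $\alpha$-approximation of a winning strategy by combining memoryless determinacy of player~2 (Theorem~\ref{thrm:inf}) with a \emph{frozen-$Z$} variant of the schedule from the proof of Lemma~\ref{lemm_inf_1}. First, I would restrict to player~1's winning region $W$ for the threshold-$0$ objective; because $\MeanPayoffInf$ is tail-closed, $G\upharpoonright W$ is a sub-game in which player~1 still wins from every state. By Theorem~\ref{thrm:inf}, for every memoryless strategy $\lambda_2$ of player~2 the one-player game $G_{\lambda_2}\upharpoonright W$ is won by player~1, and by the contrapositive of Lemma~\ref{lemm_inf_2} every scc reachable there contains a nonnegative multi-cycle realised by simple cycles $C^{\lambda_2}_1,\dots,C^{\lambda_2}_{n}$ with positive integer factors $m^{\lambda_2}_i$ and connecting acyclic paths of length at most $\abs{S}$.

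Next, I would modify the schedule of Lemma~\ref{lemm_inf_1} by freezing the parameter $Z$ at a sufficiently large constant $Z_0$, so that the resulting play is ultimately periodic and hence implementable with finite memory. A direct computation of the aggregate weight over one period, analogous to the one in Lemma~\ref{lemm_inf_1}, bounds the mean weight per period in every dimension from below by $-\abs{S}\cdot W/(Z_0\cdot P^{\lambda_2})$, where $P^{\lambda_2}=\sum_i m^{\lambda_2}_i\abs{C^{\lambda_2}_i}$ and $W=\max_e\abs{w(e)}$. Choosing $Z_0\geq\lceil\abs{S}\cdot W/(\alpha\cdot P^{\lambda_2})\rceil$ yields, for each memoryless $\lambda_2$, an ultimately periodic finite-memory strategy $\sigma^{\textsf{FM}}_{\lambda_2}$ that wins threshold $-\alpha$ in $G_{\lambda_2}$.

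The main obstacle is uniformity: producing a single finite-memory strategy of player~1 that wins against every (possibly infinite-memory) strategy of player~2. I would assemble such a strategy by a product Moore-machine construction whose memory records a partial map $f:S_2\rightharpoonup S$ tabulating player~2's first observed choice at each visited player-2 state, identifies a memoryless hypothesis $\lambda^f_2$ consistent with $f$, and executes $\sigma^{\textsf{FM}}_{\lambda^f_2}$, re-synchronising via a bounded attractor detour inside $W$ each time $f$ is extended by a fresh observation. Since the set of potential hypotheses is strictly refined at every revision and is finite, only finitely many revisions occur along any play, so the tail of the play coincides with the ultimately periodic schedule of a fixed $\sigma^{\textsf{FM}}_{\lambda^{f^\ast}_2}$, delivering $\MPinf\geq -\alpha$ in every dimension; the bounded transient from the revisions does not affect the $\liminf$. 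The technical point requiring the most care is ensuring that every revision shrinks the hypothesis set strictly and that the attractor detours inside $W$ take only a bounded number of steps, so that the total transient cost remains finite along any play.
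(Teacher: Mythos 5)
Your per-adversary analysis is exactly the paper's: fix a memoryless $\lambda_2$, use the hypothesis plus (the contrapositive of) Lemma~\ref{lemm_inf_2} to get a reachable scc of $G_{\lambda_2}$ with a nonnegative multi-cycle, and run the schedule of Lemma~\ref{lemm_inf_1} with the parameter $Z$ frozen at a large constant so that the strategy becomes finite-memory and achieves $\MPinf \geq -\alpha$ in every dimension. Up to that point the argument is sound (modulo the exact constant: the connecting paths have total length bounded by $2^{\abs{E}}\cdot\abs{S}$, not $\abs{S}$, but this only changes the choice of $Z_0$).

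The gap is in your uniformization step. The ``learning'' Moore machine records only player~2's \emph{first} observed choice at each player-2 state and revises its hypothesis only when a \emph{fresh} state is observed. But the adversary in the lemma is an arbitrary (infinite-memory) strategy, and nothing forces it to be consistent with any memoryless strategy: player~2 may later play a different successor at an already-recorded state. Such a deviation triggers no revision in your machine, yet it invalidates the executed schedule, whose cycles $C^{\lambda_2^{f}}_i$ and connecting paths presuppose that every player-2 state on them is resolved according to $\lambda_2^{f}$. So the claim that ``the tail of the play coincides with the ultimately periodic schedule of $\sigma^{\mathsf{FM}}_{\lambda_2^{f^\ast}}$'' fails, and with it the $\liminf$ guarantee. (The ``bounded attractor detour'' is also unjustified: reaching the hypothesized scc was guaranteed only in the one-player game $G_{\lambda_2^{f}}$, and player~1 cannot in general force it when player~2 deviates from the hypothesis.) The paper closes this uniformity gap non-constructively rather than by an explicit automaton: since $\MeanPayoffInf(-\alpha)$ under finite-memory strategies reduces to a multi-energy game (Lemma~\ref{thrm_inter}), player~2 needs only memoryless strategies there (Lemma~\ref{lem:player-two-memoryless}); your per-$\lambda_2$ construction shows no memoryless strategy of player~2 wins against finite-memory player~1, so by determinacy under finite memory (Theorem~\ref{thrm_gen_mean}) player~1 has a single finite-memory strategy winning against all finite-memory strategies of player~2, and the equality $\win{1}^{finite}\varphi=\win{1}^{fin-inf}\varphi$ from the same theorem extends this to arbitrary strategies of player~2. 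Replacing your hypothesis-revision machine by this determinacy argument repairs the proof; as written, the explicit construction does not establish the lemma.
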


\begin{proof}
Since against finite-memory strategies for player~1 memoryless 
winning strategies exist for player~2 (Lemma~\ref{thrm_inter} and 
Lemma~\ref{lem:player-two-memoryless})
and multi-mean-payoff games are determined under finite memory (Theorem~\ref{thrm_gen_mean}) 
to prove that finite-memory winning strategies exist for player~1 
for the objective $\MeanPayoffInf(-\alpha)$ we show that against 
every memoryless strategy for player~2 there exists a  
finite-memory winning strategy for player~1.
Consider a memoryless strategy for player~2 and the one-player 
game structure obtained after fixing the strategy.
By Lemma~\ref{lemm_inf_2}, since player~1 satisfies the $\MeanPayoffInf$ objective, 
there must be a scc $C$ reachable from $s_0$ (within $\abs{S}$ steps) such that 
the graph induced by $C$ has a nonnegative multi-cycle.
Then there exist simple cycles $C_1, \dots, C_n$, factors $m_1, \dots, m_n$ and 
finite paths $\pi_{1,2}, \pi_{2,3}, \dots, \pi_{n-1,n}, \pi_{n,1}$ such that:
\begin{enumerate}
    \item the path $\pi_{i,j}$ is a path between $C_i$ to $C_j$ with length at most $\abs{S}$.
    \item For every $i = 1, \dots, k$, we have $\sum_{j=1}^n m_j \cdot w_i(C_j) \geq 0$
\end{enumerate}
A finite memory strategy for player~1 is as follows: for large enough $Z$, 
follow the steps below: 

\begin{algorithmic}[1]
\LOOP
\STATE $Z \cdot m_1$ times in cycle $C_1$
\STATE $\pi_{1,2}$
\STATE $Z \cdot m_2$ times in cycle $C_2$
\STATE $\pi_{2,3}$
\STATE $\cdots$
\STATE $Z \cdot m_n$ times in cycle $C_n$
\STATE $\pi_{n,1}$
\ENDLOOP
\end{algorithmic}

In contrast with the strategy of Lemma~\ref{lemm_inf_1}, the above strategy 
plays the same in every round but for large enough $Z$, thus it can be implemented with finite memory.
Let $L=\abs{\pi_{1,2}} + \abs{\pi_{2,3}} +\cdots \abs{\pi_{n-1,n}} +\abs{\pi_{n,1}}$ be the sum 
of the lengths of the paths between cycles, and 
let $M=\abs{C_1} + \abs{C_2} +\cdots + \abs{C_n}$ be the sum of the lengths of the cycles.
Note that both $L$ and $M$ are bounded by $2^{\abs{E}} \cdot \abs{S}$ as $n \leq 2^{\abs{E}}$ 
and each path and cycle is of length at most $\abs{S}$.
Consider the steps executed in round $i$: the sum of weights due to executing the
cycles in all previous rounds up to $Z$ is nonnegative in all dimensions.
Hence the sum of weights in any dimension, in the steps executed in round~$i$ 
is at least 
\[
-( \abs{S}+ Z\cdot M + i \cdot L + L) \cdot W.
\]
The argument is as in Lemma~\ref{lemm_inf_1}.
The number of steps executed so far is at least 
$(L+M) \cdot (i-1) \cdot Z$.
Hence the average for all dimensions for all steps in round $i$ is at least
\[
-\frac{( \abs{S}+ (i+1)\cdot L + Z\cdot M) \cdot W}{(L+M) \cdot (i-1) \cdot Z}
\geq 
-\left(\frac{\abs{S}\cdot W}{Z} + \frac{2\cdot W}{Z} + \frac{W}{(i-1)} \right),
\]
for $i \geq 3$.
With $Z$ large enough ($Z \geq \frac{(\abs{S} + 2)\cdot W}{\alpha}$), it follows 
that as $i \to \infty$, the mean-payoff-inf value is at least~$-\alpha$ 
in every dimension, and hence the result follows.
\hfill\qed
\end{proof}

\begin{lemma}\label{lemm:mixed_req_2}
Let $G$ be a multi-weighted two-player game structure, and let $s_0$ be the
initial state. 
If there is a winning strategy for player~1 for the objective 
$\MeanPayoffInf=\set{ \pi \in \Plays(G) \mid 
\forall 1 \leq i \le k. \ (\MPinf(\pi))_i\geq 0}$,
then for all $\alpha>0$ there is a finite-memory winning strategy $\lambda$ 
and a number $N_{\alpha,\lambda,s_0}$ such that against all strategies of 
player~2 and for all $n \in \Nat$ the sum of weights after $n$ steps
is at least $-(N_{\alpha,\lambda,s_0} +n)\cdot \alpha$ in every dimension,
i.e., the average of the weights is at least $-2\cdot \alpha$ once $n \geq N_{\alpha,\lambda,s_0}$.
\end{lemma}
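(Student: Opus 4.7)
My plan is to leverage Lemma~\ref{lemm:mixed_req_1} as a black box and strengthen its asymptotic guarantee into a uniform bound on partial sums through a cycle-decomposition argument in the product graph. Given $\alpha > 0$, apply Lemma~\ref{lemm:mixed_req_1} to obtain a finite-memory strategy $\lambda$ for player~1 that ensures $\MPinf(\pi)_i \geq -\alpha$ in every dimension $i$ along every play $\pi$ consistent with $\lambda$. Let $G_\lambda$ denote the finite product graph of $G$ with the Moore machine realizing $\lambda$, with initial vertex $\tuple{m_0, s_0}$, and let $N$ denote its number of vertices.

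The key intermediate claim is that every simple cycle $C$ in $G_\lambda$ reachable from $\tuple{m_0, s_0}$ has average weight at least $-\alpha$ in every dimension. I would prove this by contradiction: if some reachable simple cycle had average weight strictly less than $-\alpha$ in some dimension $i$, then player~2 could play the (possibly infinite-memory) strategy that first drives the play along a prefix reaching $C$ in $G_\lambda$ and then cycles along $C$ forever, producing a play compatible with $\lambda$ whose $\MPinf$ in dimension $i$ is strictly below $-\alpha$, contradicting the guarantee of $\lambda$.

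With the per-cycle bound in hand, for any finite prefix of length $n$ of a play compatible with $\lambda$ against any player~2 strategy, decompose the corresponding path in $G_\lambda$ into a residual simple path of length at most $N$ plus a disjoint collection of simple cycles whose lengths sum to $n$ minus the residual length. This is done by the classical trick of extracting a simple cycle as soon as a vertex repeats along the path. Let $W$ be the largest absolute weight in $G$. In every dimension, the residual simple path contributes at least $-N \cdot W$, while each extracted simple cycle $C$ contributes at least $-\alpha \cdot \abs{C}$, so the cycles together contribute at least $-\alpha \cdot n$. Summing, the partial sum after $n$ steps is at least $-N \cdot W - \alpha \cdot n$ in every dimension, and choosing $N_{\alpha,\lambda,s_0} := \lceil N \cdot W / \alpha \rceil$ yields the required lower bound $-(N_{\alpha,\lambda,s_0} + n) \cdot \alpha$.

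I expect the main subtle point to be the justification that the per-cycle bound holds against arbitrary (possibly infinite-memory) player~2 strategies: one must exhibit, for each reachable simple cycle in $G_\lambda$, an actual player~2 strategy in $G$ that, combined with $\lambda$, produces the reach-then-loop play, so that the $\MPinf$ guarantee of Lemma~\ref{lemm:mixed_req_1} can be invoked on it. Once this is in place, the decomposition step itself is a routine bookkeeping argument.
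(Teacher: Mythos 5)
Your proposal is correct and follows essentially the same route as the paper: fix the finite-memory strategy from Lemma~\ref{lemm:mixed_req_1}, observe that every (reachable) cycle of the product graph $G_\lambda$ has per-dimension average at least $-\alpha$, and decompose any finite prefix into cycles plus a short acyclic residue to get the uniform partial-sum bound. Your explicit reach-then-loop argument justifying the cycle-average claim is a welcome addition, since the paper merely asserts that fact; the rest is the same bookkeeping with a slightly different (and marginally sharper) choice of $N_{\alpha,\lambda,s_0}$.
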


\begin{proof}
Fix a finite-memory strategy $\lambda$ for player~1 to satisfy the objective
$\MeanPayoffInf(-\alpha)=\set{ \pi \in \Plays(G) \mid 
\forall 1 \leq i \le k. \ (\MPinf(\pi))_i\geq -\alpha}$
(such a strategy exists by Lemma~\ref{lemm:mixed_req_1}).
Let $M$ be the size of the memory. 
In the game structure obtained by fixing the strategy, 
in all cycles the average of the weights in every dimension is at least $-\alpha$.
For any path it can be decomposed into initial prefix and 
a cycle free segment in the end (each of length at most $M \cdot \abs{S}$), and 
the other part is decomposed into cycles (not necessarily simple cycles)
(as done in Lemma~\ref{lem:multiCycleProps}).
The initial prefix and trailing prefix is of length at most $M \cdot \abs{S}$ and 
the sum of the weights is at least $-2\cdot M \cdot \abs{S} \cdot W$.
Hence choosing $N_{\alpha,\lambda,s_0} \geq \frac{2\cdot M \cdot \abs{S} \cdot W}{\alpha}$
proves the desired result.
\hfill\qed
\end{proof}

\begin{lemma} \label{lemm:mixed_1}
Let $G$ be a multi-weighted game structure with 
multi-mean-payoff objective 
$\MeanPayoffInfSup(I,J)= \{ \pi \in \Plays(G) \mid 
\forall i \in I: \MPinf(\pi)_i \geq 0 \text{ and } 
\forall j \in J: \MPsup(\pi)_j \geq 0 \}$ for player~1.
For $\ell \in J$, let 
$\Phi_{\ell}= \MeanPayoffInfSup(I,\{\ell\})$
denote the objective that requires to satisfy all 
$\MeanPayoffInf$ objectives and the $\MeanPayoffSup$
objective in dimension $\ell$.
If for all states $s \in S$ and for all $\ell \in J$, player~1 
has a winning strategy from $s$ for the objective $\Phi_{\ell}$,
then for all states $s\in S$, player~1 has a winning strategy 
from $s$ for the objective $\MeanPayoffInfSup(I,J)$.
\end{lemma}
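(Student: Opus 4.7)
The plan is to mimic the structure of the proof of Lemma~\ref{lem:sup}---cycling through winning strategies for each $\ell \in J$ with a decreasing tolerance parameter $\alpha_r \to 0$---while leveraging Lemma~\ref{lem:GMPmixedOneSup} together with Lemmas~\ref{lemm:mixed_req_1}--\ref{lemm:mixed_req_2} to obtain the uniform quantitative control needed for the $\MPinf$ requirements on the dimensions in $I$. Restricting attention (via projection of the weight function) to the dimensions $I \cup \{\ell\}$, Lemma~\ref{lem:GMPmixedOneSup} converts the hypothesis that player~1 wins $\Phi_\ell = \MeanPayoffInfSup(I,\{\ell\})$ into the stronger statement that player~1 wins $\MeanPayoffInf$ over $I \cup \{\ell\}$ (with all thresholds zero). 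Then by Lemma~\ref{lemm:mixed_req_1}, for every $\alpha > 0$ there is a finite-memory strategy $\sigma_\ell^{\alpha}$ achieving $\MPinf \geq -\alpha$ on every dimension of $I \cup \{\ell\}$; and by Lemma~\ref{lemm:mixed_req_2} (taking a maximum over the finitely many starting states), there is a uniform threshold $N_\ell^{\alpha}$ such that playing $\sigma_\ell^{\alpha}$ from any state against any player~2 strategy guarantees that after $n$ steps the partial sum in every dimension of $I \cup \{\ell\}$ is at least $-(N_\ell^{\alpha} + n) \cdot \alpha$.

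The combined strategy is constructed in rounds $r = 1,2,\dots$ with $\alpha_r = 2^{-r}$. Writing $L_r$ for the play length accumulated at the start of round $r$, in round $r$ I would cycle through each $\ell \in J$ in turn and, for each $\ell$, play $\sigma_\ell^{\alpha_r}$ from the current state for exactly $M_r$ steps. Here $M_r$ is chosen so that (i)~$M_r \geq N_\ell^{\alpha_r}$ for every $\ell \in J$, ensuring the quantitative bound of Lemma~\ref{lemm:mixed_req_2} kicks in within each phase, and (ii)~$W \cdot L_r \leq \alpha_r \cdot M_r$, ensuring that the contribution of all previously accumulated play is dominated by the current phase.

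The verification of $\MeanPayoffInfSup(I,J)$ splits into two parts. For each $\ell \in J$, every completion of a phase for $\ell$ in round $r$ yields a position where the total average in dimension $\ell$ lies within $O(\alpha_r)$ of zero (combining (i), (ii), and the bound of Lemma~\ref{lemm:mixed_req_2}); since this happens in every round and $\alpha_r \to 0$, we obtain $\MPsup \geq 0$ in dimension $\ell$. For each $i \in I$, an analogous bound holds uniformly at every position past the warm-up depth $N_\ell^{\alpha_r}$ within each phase of round $r$; by making $M_r$ grow fast enough so that the relative frequency of warm-up positions tends to zero and the accumulated-history contribution vanishes, the average in dimension $i$ stabilizes within $O(\alpha_r)$ of zero at all sufficiently late positions, from which $\MPinf \geq 0$ follows.

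The main obstacle will be this last step: unlike $\MPsup$, which tolerates arbitrarily many intermediate positions with bad average, the $\MPinf$ requires the average to be good at \emph{all} sufficiently late positions. The uniform quantitative bound of Lemma~\ref{lemm:mixed_req_2}---rather than a naive $\alpha$-good tree-pruning as in the proof of Lemma~\ref{lem:sup}---is essential to prevent intermediate positions within each phase from repeatedly driving the average down to $-W$, which would occur infinitely often across the phases and spoil the liminf. The delicate bookkeeping consists of choosing the phase lengths $M_r$ both large enough to render the history negligible and such that $N_\ell^{\alpha_r} / M_r \to 0$ rapidly as $r \to \infty$.
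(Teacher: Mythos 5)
Your skeleton is the same as the paper's: reduce each $\Phi_{\ell}$ to a $\MeanPayoffInf$ objective on the dimensions $I\cup\{\ell\}$ via Lemma~\ref{lem:GMPmixedOneSup}, extract finite-memory strategies achieving $\MPinf\geq-\alpha$ there (Lemma~\ref{lemm:mixed_req_1}) with the uniform additive bound of Lemma~\ref{lemm:mixed_req_2}, and rotate through the dimensions of $J$ with $\alpha\to 0$. The gap is in the bookkeeping, at precisely the step you rely on for the $\MPsup$ part. In round $r$ you give every $\ell\in J$ the \emph{same} phase length $M_r$, and condition~(ii) only tames the history accumulated before the round ($W\cdot L_r\leq\alpha_r\cdot M_r$). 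But during the phase for $\ell'$, the strategy $\sigma_{\ell'}^{\alpha_r}$ gives no control whatsoever over a dimension $\ell\in J\setminus(I\cup\{\ell'\})$, so that phase can cost up to $W\cdot M_r$ in dimension $\ell$. Hence at the end of the $\ell$-phase of round $r$, if $j\geq 1$ phases precede it within the round, all you can conclude is that the running average in dimension $\ell$ is at least about $-\frac{3\alpha_r+jW}{j+1}$, which does not tend to $0$; and this worst case is consistent with everything Lemmas~\ref{lemm:mixed_req_1}--\ref{lemm:mixed_req_2} guarantee. So your claim that ``every completion of a phase for $\ell$ in round $r$ yields a position where the total average in dimension $\ell$ is within $O(\alpha_r)$ of zero'' fails for every $\ell$ except the first in the cyclic order, and $\MPsup(\pi)_\ell\geq 0$ is not established for the others.

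The repair is exactly what the paper's algorithm does: the length of each phase must be chosen \emph{adaptively at the start of that phase}, as a function of the total length $L$ played so far \emph{including the earlier phases of the current round} (at least $L\cdot W/\alpha_r$), so that the uncontrolled damage to dimension $\ell$ incurred during the preceding phases is swamped by the $\ell$-phase itself; with that change your use of Lemma~\ref{lemm:mixed_req_2} to pick the stopping point is fine and indeed lets you bypass the paper's pruned-tree/K\"onig argument. A second, related slip concerns the $\MPinf$ part: the warm-up damage $N_\ell^{\alpha_r}\cdot\alpha_r$ at the start of a phase must be amortized against the length \emph{already played}, i.e.\ you need the accumulated length at the start of every phase of round $r$ to be at least roughly $N_\ell^{\alpha_r}$. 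Your condition $M_r\geq N_\ell^{\alpha_r}$ compares the warm-up to the current phase length instead, and ``make $M_r$ grow fast'' does not secure it, since $N_\ell^{\alpha_{r+1}}$ may dwarf everything played up to round $r$. The paper enforces this by folding the \emph{next} round's constants $N^*_{\alpha/2}$ into the lower bound on the current phase length; you need the analogous requirement, e.g.\ each phase of round $r$ must also last at least $\max_{\ell\in J} N_\ell^{\alpha_{r+1}}$ steps.
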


The key idea of the proof is similar to Lemma~\ref{lem:sup} and we use
Lemma~\ref{lemm:mixed_req_1} 
(details are presented below for completeness).
For all $s\in S$ and all $\ell \in J$, let $\lambda_1^\ell(s)$ be a winning 
strategy from $s$ for player~1 for the objective $\Phi_{\ell}$.
Intuitively, the winning strategy for the conjunction of mean-payoff objectives 
plays $\lambda_1^\ell(\cdot)$ until the mean-payoff value in dimension $\ell$ 
gets very close to $0$, and then switches to a strategy for another value of $\ell \in J$.
Thus player~1 ensures nonnegative mean-payoff value in every dimension,
with mean-payoff-inf in dimensions of $I$ and mean-payoff-sup in dimensions of $J$.

\begin{proof} 
Let $\alpha>0$, and $s$ be the initial state.
Let $\Phi_{\ell}(-\frac{\alpha}{2})= \{ \pi \in \Plays(G) \mid 
\forall i \in I: (\MPinf(\pi))_i \geq -\frac{\alpha}{2} \text{ and } 
(\MPsup(\pi))_{\ell} \geq -\frac{\alpha}{2} \}$.
Let $\lambda_{1,\alpha}^\ell(s)$ be a finite-memory 
winning strategy for player~1 for the objective 
$\Phi_{\ell}(-\frac{\alpha}{2})$ with the initial 
state $s$ (the existence of finite-memory winning strategy for 
$\Phi_{\ell}(-\frac{\alpha}{2})$ follows from Lemma~\ref{lem:GMPmixedOneSup} and Lemma~\ref{lemm:mixed_req_1}).
For $Z \in \nat$, consider the tree 
$\widehat{T}_{\lambda_{1,\alpha}^{\ell,Z}(s)}$ defined as follows.
Let $T_{\lambda_{1,\alpha}^{\ell}(s)}$ be the strategy tree for 
$\lambda_{1,\alpha}^\ell(s)$ with  initial state $s$.
We say that a node $v$ of $T_{\lambda_{1,\alpha}^{\ell}(s)}$ is an 
\emph{$\alpha$-good} node if the average of the weights in all dimensions 
in $I$ and dimension $\ell$ of the path from the root to $v$ is at least~$-\alpha$.
The tree  $\widehat{T}_{\lambda_{1,\alpha}^{\ell,Z}(s)}$ is obtained from
$T_{\lambda_{1,\alpha}^{\ell}(s)}$ by removing all descendants of $\alpha$-good
nodes that are at depth at least $Z$. 
Hence, the leaves of $\widehat{T}_{\lambda_{1,\alpha}^{\ell,Z}(s)}$ are $\alpha$-good.


We show that $\widehat{T}_{\lambda_{1,\alpha}^{\ell,Z}(s)}$ is a finite tree.
By K\"{o}nig's Lemma~\cite{Konig36}, it suffices to show that every path in the tree is finite.
Assume towards contradiction that there is an infinite path $\pi$ 
in the tree. 
Hence $\pi$ is a play consistent with $\lambda_{1,\alpha}^{\ell}(s)$, and since 
$\pi$ does not contain any $\alpha$-good node, it follows
that for some dimension $i \in I \cup \{\ell\}$ we have $(\MPsup(\pi))_{i} \leq -\alpha$
(and $(\MPinf(\pi))_{i} \leq -\alpha$ as well).
It follows that $\pi \not\in \Phi_{\ell}(-\frac{\alpha}{2})$.
This contradicts the assumption that $\lambda_{1,\alpha}^{\ell}(s)$ is a winning strategy 
for player~1 for $\Phi_{\ell}(-\frac{\alpha}{2})$.

We now describe a strategy for player~1 based on the finite-memory 
winning strategies for $\Phi_{\ell}(-\frac{\alpha}{2})$
and show that the strategy is winning for the objective $\MeanPayoffInfSup(I,J)$.

\begin{algorithmic}[1]
\STATE $\alpha \gets 1$ \LOOP \FOR{$\ell \in J$} \STATE Let $s$ be
the current state, and $L$ be the play length so far.

\STATE $Z\gets \max\set{\frac{L\cdot W}{\alpha}, N^*_{\frac{\alpha}{2}}}$ 
(where $N^*_{\frac{\alpha}{2}}= \max\{ 
N_{\frac{\alpha}{2}, \wh{\lambda}(s), s} \mid s \in S, \ell' \in J, \wh{\lambda}(s)=\lambda_{1,\frac{\alpha}{2}}^{\ell'}(s)\}$,
that is, $\wh{\lambda}(s)=\lambda_{1,\frac{\alpha}{2}}^{\ell'}(s)$ is the finite-memory strategy for 
$\Phi_{\ell'}(-\frac{\alpha}{2})$ from $s$, 
the number $N_{\frac{\alpha}{2}, \wh{\lambda}(s),s}$ is as defined in 
Lemma~\ref{lemm:mixed_req_2} for the strategy,
and $N^*_{\frac{\alpha}{2}}$ is the maximum over $\ell' \in J$) 
\STATE Play according to $\lambda_{1,\alpha}^{\ell}(s)$ until a leaf $s'$ of 
$\widehat{T}_{\lambda_{1,\alpha}^{\ell,Z}(s)}$ is reached.
  \ENDFOR \STATE $\alpha \gets
\frac{\alpha}{2}$ \ENDLOOP
\end{algorithmic}

Let $W \in \Nat$ be the largest absolute value of the weight function $w$.
After the last command in the internal for-loop
was executed, the mean-payoff value in dimension $\ell$, is at least
$\frac{- L\cdot W - Z \cdot \alpha}{L + Z}$ where $Z \geq \frac{L\cdot W}{\alpha}$
and this is at least 
\[
\frac{-L\cdot W - \alpha \cdot \frac{L\cdot W}{\alpha}}{L + \frac{L\cdot W}{\alpha}}
\geq -2\cdot \alpha.
\] 
Consider the segment of the play for the round for a value of $\alpha$: let us denote
by $M_b$ the number of steps played till the beginning of the round and we will 
denote by $M_t$ the total number of steps of the current round.
Our goal is to obtain an upper bound on the average of the weights for all 
$n \leq M_b + M_t$.
In the beginning of the round (i.e., after $M_b$ steps) the average value for dimensions in $I$ 
is at least $-2\cdot \alpha$ (recall that $\alpha$ has been halved in line 8).   
Step~5 ensures that at least $N_{\alpha}^*$ 
steps have been already played, i.e.,  $M_b \geq N_{\alpha}^*$. 
It follows from Lemma~\ref{lemm:mixed_req_2} that for all dimensions in $I$ 
and for all steps $M_b\leq n \leq M_b +M_t$ of the current round, the sum of the weights 
is at least $-(M_b\cdot 2\cdot\alpha + N_{\alpha}^*\cdot\alpha + (n-M_b)\cdot \alpha),$ 
and hence the average value at step $n$ is at least 
\[
\frac{-(M_b \cdot 2 \cdot \alpha  + N_{\alpha}^* \cdot \alpha + (n-M_b)\cdot \alpha)}{n} \geq -4\cdot \alpha
\]
since $n \geq M_b$ and $n \geq N_{\alpha}^*$.
That is, for all steps in the round for $\alpha$, for all dimensions in $I$,
the average value is at least $-4\cdot \alpha$.
In every external for-loop $\alpha$ gets smaller, and $L$ gets bigger. 
Moreover, since the tree $\widehat{T}_{\lambda_{1,\alpha}^{\ell,Z}(s)}$ is finite, 
it follows that the main loop gets executed infinitely often (i.e., the strategy
does not get stuck in the for-loop). 
Thus when the length of the play tends to infinity, the supremum of the mean-payoff 
value tends to a value at least $0$ in every dimension $j \in J$, and 
the infimum of the mean-payoff value tends to a value at least~0
in every dimension $i \in I$.
Hence the strategy described above is a winning strategy for player~1.
\hfill\qed
\end{proof}

\begin{lemma}
In multi-mean-payoff games with  objective $\MeanPayoffInfSup(I,J)$
for player~$1$, memoryless strategies are sufficient for player~$2$.
\end{lemma}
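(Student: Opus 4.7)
The plan is to proceed by induction on $|S|$, closely mirroring the architecture of the proof of Lemma~\ref{lem:sup2}. The trivial base cases are $|S|=1$ and $J=\emptyset$; in the latter the objective is just $\MeanPayoffInf$ and memoryless winning strategies for player~2 are provided by Theorem~\ref{thrm:inf}. For the inductive step, suppose $J\neq\emptyset$ and $|S|\geq 2$. For each $\ell\in J$, set $\Phi_\ell = \MeanPayoffInfSup(I,\{\ell\})$ and $W_\ell = \win{2}\lnot\Phi_\ell$. If every $W_\ell$ is empty, Lemma~\ref{lemm:mixed_1} forces player~1 to win $\MeanPayoffInfSup(I,J)$ from every state, so player~2's winning region is empty and the claim is vacuous. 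Otherwise I fix some $\ell\in J$ with $W_\ell\neq\emptyset$.

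Next I extract a memoryless witness for the peeled-off region. By Lemma~\ref{lem:GMPmixedOneSup}, the winning region for $\Phi_\ell$ coincides with the one for the pure mean-payoff-inf objective $\MeanPayoffInfSup(I\cup\{\ell\},\emptyset)$, so Theorem~\ref{thrm:inf} yields a memoryless strategy $\mu_\ell$ winning for player~2 on $W_\ell$; this $\mu_\ell$ falsifies $\Phi_\ell$, and since $\MeanPayoffInfSup(I,J)\subseteq \Phi_\ell$ it also falsifies $\MeanPayoffInfSup(I,J)$. Multi-mean-payoff objectives being tail objectives gives $W_\ell = \AttrTwo{W_\ell}$, so $G' = G\upharpoonright (S\setminus W_\ell)$ is a well-defined game structure with strictly fewer states. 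The induction hypothesis applied to $G'$ supplies a memoryless strategy $\lambda_2'$ winning on $W' = \win{2}^{G'}\lnot\MeanPayoffInfSup(I,J)$, and I combine these into the memoryless strategy $\lambda_2^*$ that agrees with $\mu_\ell$ on $W_\ell$ and with $\lambda_2'$ on $W'$.

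Two things then remain to be checked, and the second is where I expect the only real subtlety. The first is that $\lambda_2^*$ is winning from $W_\ell\cup W'$: plays starting in $W_\ell$ remain there (player-2 moves stay in $W_\ell$ because $\mu_\ell$ is winning on $W_\ell$, and player-1 moves out of $W_\ell$ would contradict $s\in W_\ell$), so $\mu_\ell$ falsifies the objective; plays starting in $W'$ either stay inside $G'$—in which case $\lambda_2'$ wins—or are diverted into $W_\ell$ by a player-1 move, after which $\mu_\ell$ takes over and wins. The delicate point is to show the containment $\win{2}^G\lnot\MeanPayoffInfSup(I,J)\subseteq W_\ell\cup W'$, which ensures $\lambda_2^*$ covers every player-2 winning state. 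For this I argue that a $G$-winning strategy of player~2 from any $s\notin W_\ell$ restricts to a $G'$-winning one: any player-1 strategy in $G'$ is also a player-1 strategy in $G$ (one that happens never to use the removed edges), and against it player~2 cannot itself exit $S\setminus W_\ell$, because $S_2$-states outside $W_\ell$ have all their outgoing edges within $S\setminus W_\ell$ (by $W_\ell=\AttrTwo{W_\ell}$). Hence the outcome is a play of $G'$, the $G$-winning strategy restricts to $G'$, and $s\in W'$, closing the induction.
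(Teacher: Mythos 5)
Your proof is correct and follows essentially the same route as the paper's: induction on $\abs{S}$ in the style of Lemma~\ref{lem:sup2}, peeling off a nonempty region $W_\ell$ for some $\ell\in J$, obtaining a memoryless strategy on it via Lemma~\ref{lem:GMPmixedOneSup} together with Theorem~\ref{thrm:inf}, using attractor-closure of $W_\ell$ to pass to the subgame $G'$, and gluing the two memoryless strategies. The only (harmless) deviations are that you make the $J=\emptyset$ base case explicit and that you justify $\win{2}\lnot\MeanPayoffInfSup(I,J)\subseteq W_\ell\cup W'$ by restricting player~2's $G$-winning strategies to $G'$, whereas the paper argues the complementary fact that player~1's winning region in $G'$ remains winning in $G$.
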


\begin{proof}
The proof is similar to the proof of Lemma~\ref{lem:sup2}, and
based on induction on the number of states $\abs{S}$ in the game
structure.
The base case with $\abs{S} = 1$ is obvious.
We now consider the inductive case with $\abs{S} = n \geq 2$. 
For $\ell \in J$, let $W_\ell$ be the winning region for player~2 for the 
objective $\Phi_{\ell}$ as defined in Lemma~\ref{lemm:mixed_1}.
Let $W=\bigcup_{\ell \in J} W_\ell$. We consider the following two cases:

\begin{enumerate}

\item 
If $W=\emptyset$, then player~1 can satisfy the objective $\Phi_{\ell}$ 
for all $\ell \in J$, and 
then by Lemma~\ref{lemm:mixed_1} player~1 wins from 
everywhere for the objective $\MeanPayoffInfSup(I,J)$.
Hence there is no winning strategy for player~2.

\item If $W\neq\emptyset$, then there exists $\ell \in J$ such that $W_\ell \neq \emptyset$.
In $W_\ell$ there is a memoryless winning strategy $\lambda_2$ for player~2 
to falsify $\Phi_{\ell}$, and the strategy also falsifies $\MeanPayoffInfSup(I,J)$
as $\MeanPayoffInfSup(I,J)=\bigcap_{\ell \in J} \Phi_{\ell}$.
The existence of memoryless winning strategy for player~$2$ follows from the following facts:
by Lemma~\ref{lem:GMPmixedOneSup} it follows that if player~2 can 
falsify the objective $\Phi_{\ell}$, then player~2 can also falsify the 
objective where in the dimension $\ell$ we consider the mean-payoff-inf 
objective instead of mean-payoff-sup objective, and 
the existence of memoryless strategies against mean-payoff-inf objectives
follows from Theorem~\ref{thrm:inf}.  
The rest of the proof is identical to the proof of Lemma~\ref{lem:sup2}
and can be omitted (we present it for sake of completeness). 
Since $W_\ell$ is a winning region for player~$2$ it follows that $W_\ell=\AttrTwo{W_\ell}$,
and hence the graph $G'$ induced by $S \setminus W_\ell$ is a game structure.
Let $W' = W \setminus W_\ell$ be the winning region for player~2 in $G'$. 
By inductive hypothesis (since $G'$ has strictly fewer states as a 
non-empty set $W_\ell$ is removed), it follows that there is a memoryless
winning strategy $\lambda_2'$ in $G'$ for the region $W'$.
The winning region $S \setminus (W_\ell \cup W')$ for player~1 in $G'$
is also winning for player~1 in $G$ (since $W_{\ell} = \AttrTwo{W_{\ell}}$, 
$G'$ is obtained by removing only player~1 edges).
Hence to complete the proof it suffices to show that the memoryless
strategy obtained by combining $\lambda_2$ in $W_\ell$ and $\lambda_2'$ 
in $W'$ is winning for player~2 from $W_\ell \cup W'$.
Define the strategy $\lambda_2^*$ as follows:
\[ 
\lambda^*_2(s) = \left\{ 
	\begin{array}{ll}
         \lambda_2(s) & \qquad \mbox{if $s\in W_\ell$}\\
         \lambda_2'(s) & \qquad \mbox{if $s\in W'$}.
	\end{array} \right. 
\]
Consider the memoryless strategy $\lambda_2^*$ for player~2 and 
the outcome of any counter strategy for player~1 that starts in $W'\cup W_\ell$.
There are two cases:
(a)~if the play reaches $W_\ell$, then it reaches in finitely many steps,
and then $\lambda_2$ ensures that player~2 wins; and 
(b)~if the play never reaches $W_i$, then the play always stays in $G'$, 
and now the strategy $\lambda_2'$ ensures winning for player~2.
\end{enumerate}
The desired result follows.
\hfill\qed
\end{proof}

\paragraph{\bf coNP upper bound.} 
Since memoryless winning strategies exist for player~2, 
to establish the coNP upper bound we need to show that 
one-player game structures with $\MeanPayoffInfSup(I,J)$ 
objectives can be solved in polynomial time. 
First we interpret $\MeanPayoffInfSup(I,J)$ as the 
conjunction of $\Phi_{\ell}$ for $\ell \in J$.
From Lemma~\ref{lem:GMPmixedOneSup} it follows
every $\Phi_{\ell}$ can be considered as $\MeanPayoffInf$ 
objective and hence can be solved in polynomial time for
one-player game structures by the results of 
Section~\ref{subsec:mean-inf}.
Hence the coNP upper bound follows.
We have the following theorem summarizing the results 
of this section.

\begin{theorem}\label{thrm:mixed}
For multi-weighted two-player game structures with objective 
$\MeanPayoffInfSup(I,J)= \{ \pi \in \Plays(G) \mid 
\forall i \in I: \MPinf(\pi)_i \geq 0 \text{ and } 
\forall j \in J: \MPsup(\pi)_j \geq 0 \}$
for player~1, the following assertions hold:
\begin{enumerate}
\item Winning strategies for player~1 require infinite-memory in general, 
and memoryless winning strategies exist for player~2.

\item The problem of deciding whether a given state is winning for player~1
is coNP-complete.

\end{enumerate}

\end{theorem}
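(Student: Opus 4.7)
The theorem is essentially a synthesis of the preceding machinery, so the plan is to assemble the pieces rather than prove anything genuinely new. I would treat the two items separately, handling item~1 quickly and then concentrating on item~2.

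For item~1, the infinite-memory requirement for player~1 is inherited from the special case $I=\emptyset$, $J=\{1,\dots,k\}$, which by Lemma~\ref{lemm_inf_power} already needs infinite memory (the two-state game of \figurename~\ref{fig:crazy} is itself a $\MeanPayoffInfSup(I,J)$ instance). The memoryless sufficiency for player~2 is exactly the unnumbered lemma proved just before the theorem statement, whose argument is the state-count induction: either every $\Phi_\ell$-winning region $W_\ell$ is empty, in which case Lemma~\ref{lemm:mixed_1} hands player~1 a winning strategy from everywhere; or some $W_\ell\neq \emptyset$, and we combine a memoryless strategy for player~2 on $W_\ell$ (which exists because falsifying $\Phi_\ell$ reduces via Lemma~\ref{lem:GMPmixedOneSup} to falsifying an all-inf objective, where Theorem~\ref{thrm:inf} gives memoryless strategies) with the inductive memoryless strategy on $S\setminus\AttrTwo{W_\ell}$.

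For item~2, the coNP-hardness is inherited from the all-inf case $J=\emptyset$, which is coNP-hard by Theorem~\ref{thrm:inf}. For the coNP upper bound, I would describe the following nondeterministic algorithm for the complement problem: guess a memoryless strategy $\lambda_2$ for player~2 (polynomial witness, justified by item~1) and then verify in polynomial time that the resulting one-player game $G_{\lambda_2}$ is losing for player~1 against $\MeanPayoffInfSup(I,J)$. The verification proceeds by rewriting $\MeanPayoffInfSup(I,J) = \bigcap_{\ell\in J} \Phi_\ell$ (or, when $J=\emptyset$, simply as $\MeanPayoffInf$), and applying Lemma~\ref{lem:GMPmixedOneSup} to replace each $\Phi_\ell$ by the all-inf objective $\MeanPayoffInfSup(I\cup\{\ell\},\emptyset)$. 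Each of these can be decided in polynomial time on a one-player graph via the nonnegative-multi-cycle test of Lemma~\ref{lem:multiCycleProps}(1) applied to every scc reachable from $s_0$, as in the algorithm underlying Theorem~\ref{thrm:inf}. Player~1 wins $\MeanPayoffInfSup(I,J)$ on $G_{\lambda_2}$ iff \emph{every} $\Phi_\ell$ check succeeds; otherwise $\lambda_2$ is a valid witness of losing.

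The only mildly subtle point is justifying that the conjunction $\bigcap_{\ell\in J}\Phi_\ell$ really does collapse to the componentwise checks described above when we restrict attention to the one-player graph obtained after fixing $\lambda_2$: the direction from conjunction to each individual $\Phi_\ell$ is trivial, and the converse is exactly Lemma~\ref{lemm:mixed_1}, which is stated for arbitrary game structures and hence applies in particular to one-player ones. Everything else is bookkeeping, so no new obstacle is expected.
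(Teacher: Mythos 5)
Your item~1 and the coNP-hardness part of item~2 follow the paper's route and are fine, but the verification step in your coNP upper bound has a genuine gap. You claim that on the one-player graph $G_{\lambda_2}$, player~1 wins $\MeanPayoffInfSup(I,J)$ from $s_0$ if and only if every componentwise check ``player~1 wins $\Phi_\ell$ from $s_0$'' succeeds, and you justify the nontrivial direction by Lemma~\ref{lemm:mixed_1}. That lemma does not give this: its hypothesis is that player~1 wins each $\Phi_\ell$ from \emph{all} states of the structure, not merely from $s_0$, and the difference is essential. Concretely, take $I=\emptyset$, $J=\{1,2\}$ and the one-player graph with states $s_0,A,B$, edges $s_0\to A$, $s_0\to B$ with weight $(0,0)$, a self-loop on $A$ with weight $(0,-1)$ and a self-loop on $B$ with weight $(-1,0)$. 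From $s_0$ player~1 wins $\Phi_1=\MeanPayoffSup_1$ (go to $A$) and $\Phi_2=\MeanPayoffSup_2$ (go to $B$), so both of your per-dimension tests succeed; yet every play eventually commits to one of the two sinks and violates the other dimension, so the conjunction is lost from $s_0$. Hence your verifier would wrongly reject a valid player-2 witness (here the trivial one), and the resulting nondeterministic algorithm would misclassify negative instances.

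The repair is exactly the same device the paper uses for the pure sup case: the hypothesis ``winning from all states'' must be manufactured before Lemma~\ref{lemm:mixed_1} can be applied. Either run the analogue of Algorithm~\ref{alg:mean-payoff-sup-solve} on $G_{\lambda_2}$, where the single-dimension subroutine is replaced by solving $\Phi_\ell$ (equivalently, by Lemma~\ref{lem:GMPmixedOneSup}, the all-inf objective $\MeanPayoffInfSup(I\cup\{\ell\},\emptyset)$, decidable in polynomial time via the nonnegative-multi-cycle test of Lemma~\ref{lem:multiCycleProps}), iteratively deleting losing states until the hypothesis of Lemma~\ref{lemm:mixed_1} holds on the remaining subgraph; or, more directly for the one-player case, check whether some scc $C$ reachable from $s_0$ has, for \emph{every} $\ell\in J$, a nonnegative multi-cycle in the dimensions $I\cup\{\ell\}$, and apply Lemma~\ref{lemm:mixed_1} to the subgame induced by $C$ (where strong connectivity and prefix-independence make the ``from all states'' hypothesis automatic). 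Either way the check remains polynomial, consisting of at most $|S|\cdot|J|$ (respectively, one per scc and per $\ell$) linear-programming tests, and the coNP bound follows; but as written, your componentwise test from $s_0$ alone is not a correct decision procedure.
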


\section{Conclusion}
In this work we considered games with multiple mean-payoff and energy 
objectives, and established determinacy under finite-memory, inter-reducibility of 
these two classes of games for finite-memory strategies, and improved the 
complexity bounds from EXPSPACE to coNP-complete. 
We also showed that multi-energy and multi-mean-payoff games under 
memoryless strategies are NP-complete.
Finally, we studied multi-mean-payoff games with infinite-memory strategies
and show that multi-mean-payoff games with mean-payoff-sup objectives 
can be decided in NP $\cap$ coNP (and can be solved in polynomial time
if mean-payoff games with single objective can be solved in polynomial time); 
and multi-mean-payoff games with mean-payoff-inf objectives, and combination of
mean-payoff-inf and mean-payoff-sup objectives are coNP-complete.
Thus we present optimal computational complexity results for multi-energy 
and multi-mean-payoff games under finite-memory, memoryless, and infinite-memory
strategies.



\smallskip\noindent{\bf Acknowledgement.} 
We are grateful to Jean Cardinal for pointing the reference~\cite{KS88}.



\clearpage

\section*{Appendix}

We discuss the results of~\cite{Kop} which shows the existence of memoryless
winning strategies for player~2 when the objective for player~1 is the 
conjunction of mean-payoff-inf objectives. 
We will also argue that the results of~\cite{Kop} do not show the existence
of memoryless winning strategies for player~2 when the objective for player~1
is the conjunction of mean-payoff-sup objectives (the result that we 
establish in Lemma~\ref{lem:sup2}).
The result of~\cite{Kop} requires the notion of \emph{convexity} 
for \emph{prefix-independent} objectives.

\smallskip\noindent{\bf Prefix-independent and convex objectives.} 
An objective $\varphi$ is prefix-independent if for all plays 
$\pi$ and $\pi'$ such that $\pi'=\rho \cdot \pi$, where $\rho$ is a
finite prefix, we have $\pi \in \varphi$ iff $\pi'\in \varphi$, 
i.e., the objective is independent of finite prefixes.
A play $\pi$ is a \emph{combination} of two plays  $\pi_1=u_1 u_3 u_5 \ldots$ 
and $\pi_2=u_0 u_2 u_4 \ldots$, where $u_i$'s are finite prefixes, if 
$\pi= u_0 u_1 u_2 u_3 u_4 \ldots$.
An objective $\varphi$ is \emph{convex} if it is 
closed under combination.
We refer the reader to~\cite{Kop} for further details.
The results of~\cite{Kop} shows that if the objective for player~1 is  
prefix-independent and convex, then memoryless winning strategies
exist for player~2.
It is easy to verify that mean-payoff-inf objectives are both 
prefix-independent and convex. 
It follows that conjunction of mean-payoff-inf objectives are also 
prefix-independent and convex.
Hence in games with conjunction of mean-payoff-inf objectives, memoryless
winning strategies exist for player~2.
We now show with an example that in contrast mean-payoff-sup objectives 
are not convex.

\begin{example}
Consider a one-player game structure $G$ with two states $\set{s_{+},s_{-}}$, 
with all edges, such that all incoming edges to state $s_{+}$ have weight 
$+2$, and all incoming edges to $s_{-}$ have weight $-2$. 

Consider the following play $\pi_0$:
\begin{enumerate}
\item \emph{Step~1.} Repeat the self-loop in $s_{-}$ until the average weight of 
the play prefix is below $-1$, then take edge to $s{+}$ and goto Step~2.
\item \emph{Step~2.} Repeat the self-loop in $s_{+}$ until the average weight of 
the play prefix is above $1$, then take edge to $s_{-}$ and goto Step~1.
\end{enumerate}
Consider the play $\pi_1$ obtained by exchanging $s_{+}$ and $s_{-}$ in $\pi_0$.
It is easy to verify that $\MPsup(\pi_0) = \MPsup(\pi_1) = +1$.
However, for the following combination of the plays $\pi_2$, such that forall 
$i \geq 0$ the $2i-1$-th state of $\pi_2$ is the $i$-th state of $\pi_0$ 
and the $2i$-th state of $\pi_2$ is the $i$-th state of $\pi_1$.
We get that $\MPsup(\pi_2) = 0$.
It follows that mean-payoff-sup objectives are not convex.
\end{example}

\end{document}